\definecolor{UniBlau}{cmyk}{1,0.7,0,0}
\definecolor{UniGruen}{cmyk}{0.6,0,1,0}
\definecolor{UniOrange}{cmyk}{0,0.3,1,0}
\definecolor{UniRot}{cmyk}{0.4,1,0,0}
\definecolor{darkred}{rgb}{.6,0,0}
\definecolor{darkgreen}{rgb}{0,.4,0}
\definecolor{darkblue}{rgb}{0,0,.6}
\definecolor{alabaster}{rgb}{0.93,0.92,0.88}
\definecolor{antiquewhite}{rgb}{0.98,0.92,0.84}
\definecolor{beige}{rgb}{0.96,0.96,0.86}
\definecolor{burlywood}{rgb}{0.87,0.72,0.53}
\definecolor{camel}{rgb}{0.76,0.6,0.42}
\definecolor{darkkhaki}{rgb}{0.74,0.72,0.42}
\newcommand{\negl}{\ensuremath{\mathsf{negl}}}
\newcommand{\csec}{\kappa}
\newcommand{\abort}{\ensuremath{\mathtt{abort}}}
\newcommand{\continue}{\ensuremath{\mathtt{continue}}}
\newcommand{\flag}{\ensuremath{\mathsf{flag}}}
\newcommand{\Partyset}{\ensuremath{\mathcal{P}}}
\newcommand{\bitb}{\ensuremath{\mathsf{b}}} 
\newcommand{\isTr}{\ensuremath{\mathsf{isTr}}}
\newcommand{\Hash}{\ensuremath{\mathsf{H}}}
\newcommand{\SELECT}{\ensuremath{\mathsf{select}}}
\newcommand{\INPUT}{\ensuremath{\mathsf{Input}}}
\newcommand{\OUTPUT}{\ensuremath{\mathsf{Output}}}
\newcommand{\Adv}{\ensuremath{\mathcal{A}}}
\newcommand{\Sim}{\ensuremath{\mathcal{S}}}
\newtheorem{theorem}{Theorem}[section]
\newtheorem{lemma}[theorem]{Lemma}
\newtheorem{notation}[theorem]{Notation}
\newcommand{\xor}{\oplus}
\newcommand{\band}{\odot}
\newcommand{\cmark}{\ding{51}}
\newcommand{\xmark}{\ding{55}}
\newcommand{\iseq}{\ensuremath{\stackrel{?}{=} }}
\newcommand{\bigominus}{\mathop{\raisebox{-.05em}{\large\boldmath$\ominus$}}}
\newcommand{\MatMul}{\bigodot}
\newcommand{\bitset}{\{0,1\}}
\newcommand{\Prob}{\ensuremath{\mathsf{Pr}}}
\newcommand{\msb}{\ensuremath{\mathsf{msb}}}
\newcommand{\lsb}{\ensuremath{\mathsf{lsb}}}
\newcommand{\Z}[1]{\ensuremath{\mathbb{Z}}_{2^{#1}}}
\newcommand{\F}{\ensuremath{\mathbb{F}}}
\newcommand{\dotp}{\ensuremath{\mathsf{dotp}}}
\newcommand{\bitA}{\ensuremath{\mathsf{bit2A}}}
\newcommand{\bitinj}{\ensuremath{\mathsf{bitInj}}}
\newcommand{\piecewise}{\ensuremath{\mathsf{piecewise}}}
\newcommand{\bitext}{\ensuremath{\mathsf{bitext}}}
\newcommand{\relu}{\ensuremath{\mathsf{ReLU}}}
\newcommand{\sig}{\ensuremath{\mathsf{Sig}}}
\newcommand{\sftmx}{\ensuremath{\mathsf{softmax}}}
\newcommand{\prot}[1]{\ensuremath{\Pi_{#1}}}
\newcommand{\protB}[1]{\ensuremath{\Pi_{#1}^{\bf B}}}
\setlist[description]{style=unboxed,leftmargin=0cm}
\setlist[enumerate]{itemsep=0mm}
\newenvironment{myitemize}{
	\begin{list}{{$\bullet$}}{
			\setlength\partopsep{0pt}
			\setlength\parskip{0pt}
			\setlength\parsep{0pt}
			\setlength\topsep{2pt}
			\setlength\itemsep{2pt}
			\setlength{\itemindent}{8pt}
			\setlength{\leftmargin}{1pt}
		}
	}{
	\end{list}
}
\newenvironment{itemndss}{
	\begin{list}{{$\bullet$}}{
			\setlength\partopsep{0pt}
			\setlength\parskip{0pt}
			\setlength\parsep{2pt}
			\setlength\topsep{4pt}
			\setlength\itemsep{4pt}
			\setlength{\itemindent}{8pt}
			\setlength{\leftmargin}{3pt}
		}
	}{
	\end{list}
}
\newcommand{\tabref}[1]{Table~\protect\ref{tab:#1}}
\newcommand{\secref}[1]{~\S\protect\ref{sec:#1}}
\newcommand{\figref}[1]{Figure~\ref{fig:#1}}
\newcommand{\figlab}[1]{\label{fig:#1}}
\newenvironment{boxfig*}[2]{
	\begin{figure*}[h!]		
		\fontsize{5}{5}\selectfont
		\newcommand{\FigCaption}{#1}
		\newcommand{\FigLabel}{#2}
		\vspace{-.05cm}
		\begin{center}
			\begin{small}			 
				\begin{adjustbox}{max width=\textwidth}
					\begin{tabular}{@{}|@{~~}l@{~~}|@{}}
						\hline
						\rule[-1ex]{0pt}{1ex}\begin{minipage}[b]{.95\linewidth}
							\vspace{1ex}	
						}{%
						\end{minipage}\\
						\hline
					\end{tabular}	
				\end{adjustbox}		
			\end{small}
			\vspace{-0.1cm}
			\caption{\FigCaption}
			\figlab{\FigLabel}
		\end{center}
		\vspace{-.38cm}
	\end{figure*}
}
\newenvironment{myboxfig*}[2]{
	\begin{figure*}[!htb]		
		\fontsize{5}{5}\selectfont
		\newcommand{\FigCaption}{#1}
		\newcommand{\FigLabel}{#2}
		\vspace{-.10cm}
		\begin{center}
			\caption{\FigCaption}
			\begin{small}			 
				\begin{adjustbox}{max width=\textwidth}
					\begin{tabular}{@{}|@{~~}l@{~~}|@{}}
						\hline
						\rule[-1ex]{0pt}{1ex}\begin{minipage}[b]{.95\linewidth}
							\vspace{1ex}	
						}{%
						\end{minipage}\\
						\hline
					\end{tabular}	
				\end{adjustbox}		
			\end{small}
			\vspace{-0.25cm}
			\figlab{\FigLabel}
		\end{center}
		\vspace{-.38cm}
	\end{figure*}
}
\newcommand{\boxref}[1]{Fig.~\ref{#1}}
\newenvironment{titlebox}[5]
{\mdfsetup{
		style=#2,
		innertopmargin=1.1\baselineskip,
		skipabove={\dimexpr0.2\baselineskip+\topskip\relax},
		skipbelow={1em},needspace=3\baselineskip,
		singleextra={\node[#3,right=10pt,overlay] at (P-|O){~{\sffamily\bfseries #1 }};},%
		firstextra={\node[#3,right=10pt,overlay] at (P-|O) {~{\sffamily\bfseries #1 }};},
		frametitleaboveskip=9em,
		innerrightmargin=5pt
	}
	\newcommand{\TitleCaption}{#4}
	\newcommand{\TitleLabel}{#5}
	\begin{mdframed}[font=\small]
		\setlist[itemize]{leftmargin=13pt}\setlist[enumerate]{leftmargin=13pt}\raggedright%
	}
	{\end{mdframed}
	\vspace{-1.5em}
	{\captionof{figure}{\small \TitleCaption}\label{\TitleLabel}}
	\medskip
}
\tikzstyle{normal} = [thick, fill=white, text=black, draw, rounded corners, rectangle, minimum height=.7cm, inner sep=3pt]
\tikzstyle{gray} = [thick, fill=gray!90, text=white, rounded corners, rectangle, minimum height=.7cm, inner sep=3pt]
\newenvironment{systembox}[3]
{\vspace{\baselineskip}\begin{titlebox}{Functionality \normalfont #1}{roundbox}{normal}{#2}{#3}}
	{\end{titlebox}}
\newenvironment{protocolbox}[3]
{\begin{titlebox}{Protocol \normalfont #1}{commonbox}{normal}{#2}{#3}}
	{\end{titlebox}}
\newenvironment{simulatorbox}[3]
{\begin{titlebox}{Simulator \normalfont #1}{commonbox}{normal}{#2}{#3}}
	{\end{titlebox}}
\newenvironment{splittitlebox}[5]
{\mdfsetup{
		style=#2,
		innertopmargin=1.1\baselineskip,
		skipabove={\dimexpr0.2\baselineskip+\topskip\relax},
		skipbelow={1em},needspace=3\baselineskip,
		singleextra={\node[#3,right=10pt,overlay] at (P-|O){~{\sffamily\bfseries #1 }};},%
		firstextra={\node[#3,right=10pt,overlay] at (P-|O) {~{\sffamily\bfseries #1 }};},
		frametitleaboveskip=9em,
		innerrightmargin=5pt
	}
	\newcommand{\TitleCaption}{#4}
	\newcommand{\TitleLabel}{#5}
	\begin{mdframed}[font=\small]
		\setlist[itemize]{leftmargin=13pt}\setlist[enumerate]{leftmargin=13pt}\raggedright%
	}
	{\end{mdframed}
	\vspace{-0.5em}
	{\captionof{figure}{\small \TitleCaption}\label{\TitleLabel}}
	\medskip
}
\newenvironment{systembox*}[3]
{\begin{strip}
		\vspace{\baselineskip}\begin{titlebox}{Functionality \normalfont #1}{roundbox}{normal}{#2}{#3}}
		{\end{titlebox}
\end{strip}}
\newenvironment{gsystembox*}[3]
{\begin{strip}
		\vspace{\baselineskip}\begin{titlebox}{Global Functionality \normalfont #1}{roundbox}{normal}{#2}{#3}}
		{\end{titlebox}
\end{strip}}
\newenvironment{protocolbox*}[3]
{\begin{strip}
		\begin{titlebox}{Protocol \normalfont #1}{commonbox}{normal}{#2}{#3}}
		{\end{titlebox}
\end{strip}}
\newenvironment{algobox*}[3]
{\begin{strip}
		\begin{titlebox}{Algorithm \normalfont #1}{commonbox}{normal}{#2}{#3}}
		{\end{titlebox}
\end{strip}}
\newenvironment{reductionbox*}[3]
{\begin{strip}
		\begin{titlebox}{Reduction \normalfont #1}{commonbox}{normal}{#2}{#3}}
		{\end{titlebox}
\end{strip}}
\newenvironment{gamebox*}[3]
{\begin{strip}
		\begin{titlebox}{Game \normalfont #1}{commonbox}{gray}{#2}{#3}}
		{\end{titlebox}
\end{strip}}
\newenvironment{simulatorbox*}[3]
{\begin{strip}
		\begin{titlebox}{Simulator \normalfont #1}{commonbox}{normal}{#2}{#3}}
		{\end{titlebox}
\end{strip}}
\newenvironment{protocolsplitbox*}[3]
{\begin{strip}
		\begin{splittitlebox}{Protocol \normalfont #1}{commonbox}{normal}{#2}{#3}}
		{\end{splittitlebox}
\end{strip}}
\newenvironment{titlebox*}[5]
{\mdfsetup{
		style=#2,
		innertopmargin=0.3\baselineskip,
		skipabove={0.4em},
		skipbelow={1em},needspace=3\baselineskip,
		frametitleaboveskip=5em,
		innerrightmargin=5pt
	}
	\newcommand{\TitleCaption}{#4}
	\newcommand{\TitleLabel}{#5}
	\begin{mdframed}[font=\small]
		\setlist[itemize]{leftmargin=13pt}\setlist[enumerate]{leftmargin=13pt}\raggedright%
	}
	{\end{mdframed}
	\vspace{-2em}
	{\captionof{figure}{\normalfont \TitleCaption}\label{\TitleLabel}}
	\medskip
}
\newenvironment{mysystembox*}[3]
{\begin{strip}
		\vspace{\baselineskip}\begin{titlebox*}{Functionality \normalfont #1}{myroundbox}{normal}{#2}{#3}}
		{\end{titlebox*}
\end{strip}}
\newenvironment{mygsystembox*}[3]
{\begin{strip}
		\vspace{\baselineskip}\begin{titlebox*}{Global Functionality \normalfont #1}{myroundbox}{normal}{#2}{#3}}
		{\end{titlebox*}
\end{strip}}
\newenvironment{myprotocolbox*}[3]
{\begin{strip}
		\begin{titlebox*}{Protocol \normalfont #1}{mycommonbox}{normal}{#2}{#3}}
		{\end{titlebox*}
\end{strip}}
\newenvironment{myalgobox*}[3]
{\begin{strip}
		\begin{titlebox*}{Algorithm \normalfont #1}{mycommonbox}{normal}{#2}{#3}}
		{\end{titlebox*}
\end{strip}}
\newenvironment{myreductionbox*}[3]
{\begin{strip}
		\begin{titlebox*}{Reduction \normalfont #1}{mycommonbox}{normal}{#2}{#3}}
		{\end{titlebox*}
\end{strip}}
\newenvironment{mygamebox*}[3]
{\begin{strip}
		\begin{titlebox*}{Game \normalfont #1}{mycommonbox}{gray}{#2}{#3}}
		{\end{titlebox*}
\end{strip}}
\newenvironment{mysimulatorbox*}[3]
{\begin{strip}
		\begin{titlebox*}{Simulator \normalfont #1}{mycommonbox}{normal}{#2}{#3}}
		{\end{titlebox*}
\end{strip}}
\newcommand{\algoHead}[1]{\vspace{0.2em} \underline{\textbf{#1}} \vspace{0.3em}}
\algnewcommand{\ExtendedState}[1]{\State
	\parbox[t]{\dimexpr\linewidth-\ALG@thistlm}{\hangindent=\algorithmicindent\strut\hangafter=3#1\strut}}
\algnewcommand\algorithmicinput{\textbf{Input:}}
\algnewcommand\Input{\item[\algorithmicinput]}
\algrenewcommand{\algorithmiccomment}[1]{{\color{gray}// #1}}
\newcommand{\xmath}[1]{\ensuremath{#1}\xspace}
\newcommand{\Func}[1][\relax]{\xmath{\mathcal{F}_{\textsc{#1}}}}
\DeclarePairedDelimiterX{\dbrackets}[1]{\lbrack}{\rbrack}{
	\nhphantom{\lbrack}{#1} \delimsize\lbrack \mathopen{} #1 \mathclose{} \delimsize\rbrack \nhphantom{\rbrack}{#1}
}
\DeclarePairedDelimiterX{\dbraces}[1]{\lbrace}{\rbrace}{
	\nhphantom{\lbrace}{#1} \delimsize\lbrace \mathopen{} #1 \mathclose{} \delimsize\rbrace \nhphantom{\rbrace}{#1}
}
\DeclarePairedDelimiterX{\dparens}[1]{\lparen}{\rparen}{
	\nhphantom{\lparen}{#1} \delimsize\lparen \mathopen{} #1 \mathclose{} \delimsize\rparen \nhphantom{\rparen}{#1}
}
\newcommand{\nhphantom}[2]{\sbox0{$\left#1\vphantom{#2}\right.$}\hspace{-0.58\wd0}}
\newcommand{\this}{\textsf{Tetrad}}
\newcommand{\thisA}{\textsf{Tetrad-R}}
\newcommand{\thisT}{\textsf{Tetrad}\textsubscript{\sf T}}
\newcommand{\thisC}{\textsf{Tetrad}\textsubscript{\sf C}}
\newcommand{\vl}[1]{\ensuremath{\mathsf{#1}}}
\newcommand{\pd}[1]{\ensuremath{\mathsf{\lambda}_{#1}}}
\newcommand{\pad}[2]{\ensuremath{\mathsf{\lambda}_{#1}^{#2}}} 
\newcommand{\mk}[1]{\ensuremath{\mathsf{m}_{#1}}}
\newcommand{\gm}[2]{\ensuremath{\gamma_{#1}^{#2}}}
\newcommand{\vct}[1]{\ensuremath{\vec{\mathbf{#1}}}}
\newcommand{\Mat}[1]{\ensuremath{\mathbf{#1}}}
\newcommand{\sqr}[1]{\ensuremath{\left[#1\right]}}
\newcommand{\spr}[1]{\ensuremath{\dparens*{#1}}}
\newcommand{\sgr}[1]{\ensuremath{\langle #1 \rangle}}
\newcommand{\shr}[1]{\ensuremath{\llbracket #1 \rrbracket}}
\newcommand{\shrB}[1]{\ensuremath{{\llbracket #1 \rrbracket}^{\bf B}}}
\newcommand{\shrG}[1]{\ensuremath{{\llbracket #1 \rrbracket}^{\bf G}}}
\newcommand{\TTP}{\ensuremath{\mathsf{P}_{\mathsf{TP}}}}
\newcommand{\ttp}{\mathsf{ttp}}
\newcommand{\jsend}{\ensuremath{\mathsf{jsnd}}}
\newcommand{\Sh}{\ensuremath{\mathsf{Sh}}}
\newcommand{\JSh}{\ensuremath{\mathsf{JSh}}}
\newcommand{\Rec}{\ensuremath{\mathsf{Rec}}}
\newcommand{\Mult}{\ensuremath{\mathsf{Mult}}}
\newcommand{\MultT}{\ensuremath{\mathsf{Mult3}}}
\newcommand{\Multsgr}{\ensuremath{\mathsf{MulR}}}
\newcommand{\VrfyP}{\ensuremath{\mathsf{VrfyP0}}}
\newcommand{\pijsend}{\ensuremath{\Pi_{\jsend}}}
\newcommand{\piSh}{\ensuremath{\Pi_{\Sh}}}
\newcommand{\piJSh}{\ensuremath{\Pi_{\JSh}}}
\newcommand{\piRec}{\ensuremath{\Pi_{\Rec}}}
\newcommand{\piMult}{\ensuremath{\Pi_{\Mult}}}
\newcommand{\piMultT}{\ensuremath{\Pi_{\MultT}}}
\newcommand{\piMultsgr}{\ensuremath{\Pi_{\Multsgr}}}
\newcommand{\piMultO}{\ensuremath{\Pi_{\Mult}^{\mathsf{NoPre}}}}
\newcommand{\piVrfyP}{\ensuremath{\Pi_{\VrfyP}}}
\newcommand{\FSETUP}{\ensuremath{\mathcal{F}_{\mathsf{setup}}}} 
\newcommand{\FZero}{\ensuremath{\mathcal{F}_{\mathsf{zero}}}} 
\newcommand{\SetD}{\ensuremath{\mathcal{D}}}
\newcommand{\SetE}{\ensuremath{\mathcal{E}}}
\newcommand{\GS}{\ensuremath{\mathcal{G}}}
\newcommand{\Gb}{\ensuremath{\mathsf{Gb}}}
\newcommand{\En}{\ensuremath{\mathsf{En}}}
\newcommand{\Ev}{\ensuremath{\mathsf{Ev}}}
\newcommand{\De}{\ensuremath{\mathsf{De}}}
\newcommand{\rtt}{\ensuremath{\mathsf{rtt}}}
\newcommand{\TP}{\ensuremath{\mathsf{TP}}}
\newcommand{\PlSet}[1]{\ensuremath{\Phi_{#1}}}
\newcommand{\shrC}[1]{\ensuremath{{\llbracket #1 \rrbracket}^{\bf C}}}
\newcommand{\av}[1]{\ensuremath{\mathsf{\alpha}_{\mathsf{#1}}}}
\newcommand{\key}[2]{\ensuremath{\mathsf{K}_{#1}^{#2}}}
\newcommand{\pigsh}{\ensuremath{\mathrm{\Pi}_{\mathsf{Sh}}^{\bf G}}}
\newcommand{\GC}{\ensuremath{\mathsf{GC}}}
\newcommand{\h}{\ensuremath{\mathcal{H}}}
\newcommand{\pigrec}{\ensuremath{\mathrm{\Pi}_{\mathsf{Rec}}^{\bf G}}}
\newcommand{\piab}{\ensuremath{\mathrm{\Pi}_{\mathsf{A2B}}}}
\newcommand{\arval}[1]{\ensuremath{#1^{\sf R}}} 
\newcommand{\piba}{\ensuremath{\mathrm{\Pi}_{\mathsf{B2A}}}}
\newcommand{\Y}{\ensuremath{\mathbf{Y}}}
\newcommand{\X}{\ensuremath{\mathbf{X}}}
\newcommand{\Ckt}{\ensuremath{\mathsf{Ckt}}}
\newcommand{\onesec}{\ensuremath{1^\kappa}}
\newcommand{\poly}{\ensuremath{\mathsf{poly}}}
\newcommand{\ppt}{\textsf{PPT}}
\newcommand{\priv}{\ensuremath{\mathsf{\bf priv}}}
\newcommand{\Real}{\ensuremath{\textsc{real}}}
\newcommand{\Ideal}{\ensuremath{\textsc{ideal}}}
\newcommand{\Grb}[1]{\ensuremath{\mathsf{G}^{\sf #1}}}
\newcommand{\GrbD}[1]{\ensuremath{\hat{\mathsf{G}}^{\sf #1}}}
\newcommand{\Size}[1]{\ensuremath{|#1|}}
\newcommand{\piargmin}{\ensuremath{\mathrm{\Pi}_{\mathsf{argmin}}}}
\newcommand{\piargmax}{\ensuremath{\mathrm{\Pi}_{\mathsf{argmax}}}}
\newcommand{\pibitextf}{\ensuremath{\mathrm{\Pi}_{\mathsf{bitext}}}}
\newcommand{\piobv}{\ensuremath{\mathrm{\Pi}_{\mathsf{obv}}}}
\newcommand{\Key}[1]{\ensuremath{k_{#1}}}
\newcommand{\MS}{\ensuremath{\mathsf{M}}} 
\newcommand{\msg}{\mathsf{msg}}
\newcommand{\Hyb}{\ensuremath{{\textsc{hyb}}}}
\newcommand{\prv}{\ensuremath{\mathsf{prv}}}
\newcommand{\calF}{\ensuremath{\mathcal{F}}}
\newif\iffullversion
    \newcommand{\detail}[1]{\textcolor{darkgreen} {#1}}
    \newcommand{\detail}[1]{}
\newif\ifsubmission
	\newcommand{\EXTRALINES}[1] {}
	\newcommand{\OLD}[1]{}
	\newcommand{\EXTRALINES}[1] {\textcolor{darkblue} {#1}}
		\newcommand{\OLD}[1]{{\leavevmode\color{UniBlau}{(OLD CONTENT): #1}}}
\begin{document}
\title{Tetrad: Actively Secure 4PC for Secure Training and Inference (Full Version)\thanks{This article is the full and extended version of an article to appear in the Network and Distributed System Security Symposium (NDSS) 2022.}}

\author{
	\IEEEauthorblockN{Nishat Koti\IEEEauthorrefmark{1}, Arpita Patra\IEEEauthorrefmark{1},
		Rahul Rachuri\IEEEauthorrefmark{2}, 
		Ajith Suresh\IEEEauthorrefmark{1}\textsuperscript{\textsection}}
	\IEEEauthorblockA{\IEEEauthorrefmark{1}Indian Institute of Science, Bangalore, Email: \{kotis, arpita, ajith\}@iisc.ac.in}
	\IEEEauthorblockA{\IEEEauthorrefmark{2}Aarhus University, Denmark, Email: rachuri@cs.au.dk}
}
\maketitle
\begingroup\renewcommand\thefootnote{\textsection}
\footnotetext{Corresponding Author. Currently postdoctoral researcher at TU Darmstadt (\url{suresh@encrypto.cs.tu-darmstadt.de})}
\endgroup

%
%
%
\begin{abstract}
Mixing arithmetic and boolean circuits to perform privacy-preserving machine learning has become increasingly popular. Towards this, we propose a framework for the case of four parties with at most one active corruption called Tetrad.

Tetrad works over rings and supports two levels of security, fairness and robustness. The fair multiplication protocol costs 5 ring elements, improving over the state-of-the-art Trident~(Chaudhari et al. NDSS'20). A key feature of Tetrad is that robustness comes for free over fair protocols. Other highlights across the two variants include (a) probabilistic truncation without overhead, (b) multi-input multiplication protocols, and (c) conversion protocols to switch between the computational domains, along with a tailor-made garbled circuit approach. 

Benchmarking of Tetrad for both training and inference is conducted over deep neural networks such as LeNet and VGG16. We found that Tetrad is up to 4 times faster in ML training and up to 5 times faster in ML inference. Tetrad is also lightweight in terms of deployment cost, costing up to 6 times less than Trident.
\end{abstract}

%
\section{Introduction}
\label{sec:4pcIntro}
Increased concerns about privacy coupled with policies such as European Union General Data Protection Regulation (GDPR) make it harder for multiple parties to collaborate on machine learning computations. The emerging field of privacy-preserving machine learning (PPML) addresses this issue by offering tools to let parties perform computations without sacrificing privacy of the underlying data. PPML can be deployed across various domains such as healthcare, recommendation systems, text translation, etc., with works like \cite{CORR:AVBCG20} demonstrating practicality.

One of the main ways in which PPML is realised is through the paradigm of secure outsourced computation (SOC). Clients can outsource the training/prediction computation to powerful servers available on a `pay-per-use' basis from cloud service providers. Of late, secure multiparty computation (MPC) based techniques~\cite{PoPETS:BCPS20, CCSW:CCPS19, NDSS:ChaRacSur20,RSA:MRSV19, CCS:MohRin18, SP:MohZha17, NDSS:PatSur20, PoPETS:WagGupCha19, ASIACCS:RWTSSK18} have been gaining interest, where a server enacts the role of a party in the MPC protocol. MPC~\cite{FOCS:Yao82b,STOC:GolMicWig87}
allows mutually distrusting parties to compute a function in a distributed fashion while guaranteeing {\em privacy} of the parties' inputs and {\em correctness} of their outputs against any coalition of $t$ parties.

The goal of PPML is practical deployment, making {\em efficiency} a primary consideration. Functions such as comparison, activation functions (e.g., ReLU), are heavily used in machine learning. Instantiating these functions via MPC naively turns out to be prohibitively inefficient due to their non-linearity. Hence, there is motivation to design specialised protocols that can compute these functions efficiently. We work towards this goal in the 4-party (4PC) setting, assuming honest majority~\cite{AC:GorRanWan18,NDSS:ChaRacSur20,PoPETS:BCPS20,USENIX:KPPS21}. 4PC is interesting because it buys us the following over 3PC (which is threshold optimal): (1) {\em independence from broadcast:} broadcast can be achieved by a simple protocol in which the sender sends to everyone and residual parties exchange and apply a majority rule (2) {\em efficient dot-product:} 4PC offers a more efficient dot-product protocol (which is an important building block for several ML algorithms) with communication complexity independent of feature size (3)  {\em simplicity and efficiency:} protocols are vastly more efficient and simpler in terms of design (as shown in this and prior works).
To enhance practical efficiency, many recent works~\cite{C:DamOrlSim18,EC:KelPasRot18,NDSS:ChaRacSur20,NDSS:PatSur20} resort to the preprocessing paradigm, which splits the computation into two phases; a preprocessing phase where input-independent (but function-dependent) computationally heavy tasks can be computed, followed by a fast online phase. Since the same functions in ML are evaluated several times, this paradigm naturally fits the case of PPML, where the ML algorithm is known beforehand. Further, recent works~\cite{C:DamOrlSim18,NDSS:DemSchZoh15,SP:DEFKSV19} propose MPC protocols over $32$ or $64$ bit rings to leverage CPU optimizations.

MPC protocols can be categorized as high-throughput~\cite{CCS:AFLNO16,EC:FLNW17,SP:ABFLLN17,CCS:MohRin18,CCSW:CCPS19,EPRINT:ADEN19,NDSS:ChaRacSur20,NDSS:PatSur20,USENIX:KPPS21,USENIX:PSSY21} and low-latency~\cite{CCS:BJPR18,CCS:BHPS19}, where the former, based on secret-sharing, requires less communication compared to the latter (garbled circuits). High-throughput protocols typically work over the boolean ring $\Z{}$ or an arithmetic ring $\Z{\ell}$ and aim to minimize communication overhead (bandwidth) at the expense of non-constant rounds. While high-throughput protocols enable efficient computation of functions such as addition, multiplication and dot-product, other functions such as division are best performed using garbled circuits. Activation functions such as ReLU used in neural networks~(NN) alternate between multiplication and comparison, wherein multiplication is better suited to the arithmetic world and comparison to the boolean world. Hence, MPC protocols working over different representations (arithmetic/boolean/garbled circuit based) can be mixed to achieve better efficiency. This provided motivation for mixed protocols where each subprotocol is executed in a world where it performs best. Mixed-protocol frameworks~\cite{NDSS:DemSchZoh15,SP:MohZha17,CCS:MohRin18,ASIACCS:RWTSSK18,INDOCRYPT:RotWoo19,NDSS:ChaRacSur20,C:EGKRS20,USENIX:PSSY21} have support for efficient ways to switch between the worlds, thereby getting the best from each of them. This work proposes a mixed-protocol PPML framework via MPC with four parties and honest majority with active security.

\begin{table*}[t]
	\centering
	\resizebox{0.99\textwidth}{!}{
		\begin{NiceTabular}{c r c r|r r|r r|ccc}[notes/para]
			\toprule
			\Block{2-1}{\# Parties} & \Block{2-1}{Reference\tabularnote{Amortized costs are reported for 1 million operations}} 
			& \Block[c]{2-1}{\#Active\\Parties\tabularnote{parties that carry out most of the computation during online phase}}
			& \Block[c]{2-1}{Security} & \Block[c]{1-2}{Dot Product\tabularnote{$\ell$ - size of ring in bits, $\kappa$ - security parameter, $\vl{d}$ - length of the vectors.}} & 
			& \Block[c]{1-2}{Dot Product with Truncation} &
			& \Block[c]{1-3}{Conversions\tabularnote{A, B, G indicate support for arithmetic, boolean, and garbled worlds respectively}} & & \\ \cmidrule{5-11}
		 &  &  &  
		 & Comm\textsubscript{pre}\tabularnote{`Comm' - communication, `pre' - preprocessing, `on' - online} 
		 & Comm\textsubscript{on} & Comm\textsubscript{pre} & Comm\textsubscript{on} & A & B & G\\ 
			\midrule
			\Block{3-1}{3}
			& ABY3~\cite{CCS:MohRin18} & 3 & Abort & $12\vl{d}\ell$ & $9\vl{d}\ell$ & $12\vl{d}\ell + 84\ell$ & $9\vl{d}\ell + 3\ell$ & \cmark & \cmark & \cmark \\
			& BLAZE~\cite{NDSS:PatSur20} & 2 & Fair & $3\ell$ & $3\ell$ & $15\ell$ & $3\ell$ & \cmark & \cmark & \xmark\\	
			& SWIFT~ (3PC)~\cite{USENIX:KPPS21} & 2 & Robust & $3\ell$ & $3\ell$ & $15\ell$ & $3\ell$ & \cmark & \cmark & \xmark \\
			\midrule
			\Block{7-1}{4}
			& Mazloom et al.~\cite{USENIX:MLRG20} & 4 & Abort & $2\ell$ & $4\ell$ & $2\ell$ & $4\ell$ & \cmark & \cmark & \xmark \\
			& Trident~\cite{NDSS:ChaRacSur20} & 3 & Fair & $3\ell$ & $3\ell$ & $6\ell$ & $3\ell$ & \cmark & \cmark & \cmark \\		
			& \textbf{$\this$} & 2 & Fair & $2\ell$ & $3\ell$ & $2\ell$ & $3\ell$ & \cmark & \cmark & \cmark \\
			\cmidrule{2-11}
			& SWIFT~(4PC)~\cite{USENIX:KPPS21} & 2 & Robust & $3\ell$ & $3\ell$ & $4\ell$ & $3\ell$ & \cmark & \cmark & \xmark \\
			& Fantastic Four~\cite{USENIX:DalEscKel20} (Best)\tabularnote{cf. \S\ref{pa:fantasticfour} for details} 
			& 4 & Robust & - & $6\ell$ &  $\ell$  & $9\ell$  & \cmark & \cmark & \xmark \\
			& Fantastic Four~\cite{USENIX:DalEscKel20} (Worst) & 3 & Robust & - & $6(\ell + \kappa$) &  $\approx80\ell+ 76\kappa$  & $9\ell + 6\kappa$  & \cmark & \cmark & \xmark \\
			& \textbf{$\thisA$} & 2 & Robust & $2\ell$ & $3\ell$ & $2\ell$ & $3\ell$ & \cmark & \cmark & \cmark \\
			\bottomrule
		\end{NiceTabular}
	}
	\caption{\small Comparison of actively-secure MPC frameworks~(3PC and 4PC) for PPML.}\label{tab:mpcCost}
	\vspace{-3mm}
\end{table*}

Works such as~\cite{CCS:MohRin18,PoPETS:WagGupCha19,USENIX:MLRG20} typically go for active security with abort, where the adversary can act maliciously to obtain the output and make honest parties abort. The stronger notion of fairness guarantees that either all or none of the parties obtain the output. This incentivizes the adversary to behave honestly in resource-expensive tasks such as PPML, as causing an abort will waste its resources. Trident~\cite{NDSS:ChaRacSur20} showed that  fairness can be achieved at the cost of security with abort.
In cases where the risk of failure of the system is too high, for instance, when deploying PPML for healthcare applications, participants might want to avoid the case when none of them receive the output. The way to tackle this issue is to modify protocols to guarantee that the correct output is always delivered to the participants irrespective of an adversary's misbehaviour. This is provided by guaranteed output delivery (GOD) or robustness. A robust protocol prevents the adversary from repeatedly causing the computations to rerun, thereby upholding the trust in the system.
We propose two variants of the framework -- one with fairness and the other with robustness. We detail the related work in \S\ref{app:Prelims} and continue with our contributions.

\subsection{Our Contributions}
We make several contributions towards designing a practically efficient 4PC mixed-protocol framework, tolerating at most one active corruption. It operates over the ring $\Z{\ell}$ and provides {\em end-to-end} conversions to switch between arithmetic, boolean and garbled worlds. We assume a one-time key setup phase and work in the (function-dependent) preprocessing model which paves the way for a fast online phase.

Depending on the sensitivity of the application and the underlying data, one might want different levels of security. For this, we propose two variants of the framework, covering fairness~($\this$) and robustness~($\thisA$) guarantees.
The fair variant improves upon the state-of-the-art {\em fair} framework of Trident~\cite{NDSS:ChaRacSur20}. $\thisA$ improves communication over the best robust protocols~\cite{USENIX:KPPS21,USENIX:DalEscKel20}, while offering support for secure training of neural networks, which was not supported in previous works.

\subsubsection{Improved Arithmetic/Boolean 4PC}
In $\this$, the multiplication protocol has a communication cost of only 5 ring elements as opposed to 6 in the state-of-the-art framework of Trident~\cite{NDSS:ChaRacSur20}. Security is elevated to robustness via $\thisA$, which has a minimal overhead over the fair one, in the preprocessing. Concretely, for a 64-bit ring with 40-bit statistical security, the overhead per multiplication is 0.027 bits for a circuit containing $2^{20}$ multiplications. This means robustness essentially comes free in the case of large circuits.

A notable contribution is the design of the multiplication protocol. It gives the following benefits -- i) support for on-demand applications, ii) probabilistic truncation without overhead and iii) multi-input multiplication gates.

\noindent {\em On-demand applications:}
The design of the multiplication protocol allows $\this$ to support on-demand applications where a preprocessing phase is not available. This variant of the protocols~(cf.~\S\ref{app:4pc}) has a round complexity that is the same as that of the online phases of the protocols in the preprocessing model and retains the same overall communication. It takes advantage of parallelization, which is often not possible in the {\em function-dependent} preprocessing model where the preprocessing and the online phases must be executed sequentially.

\noindent {\em Probabilistic truncation without any overhead:}
Multiplication (and dot product) with truncation forms an essential component while working with fixed-point values. Techniques for probabilistic truncation were proposed by~\cite{SP:MohZha17,CCS:MohRin18}. Recently, \cite{USENIX:MLRG20} gave an efficient instantiation of truncation for 4PC with abort, based on the technique of ABY3. Using that as a baseline, we demonstrate for the {\em first time} in the fair and robust settings, how multiplication (and dot-product) with truncation can be performed without any additional cost over a multiplication.

\noindent {\em Multi-input multiplication:}
Inspired by~\cite{USENIX:PSSY21, FC:OhaNui20}, we propose new protocols for 3 and 4-input multiplication, allowing multiplication of 3 and 4 inputs in one online round. Naively, performing a 4-input multiplication follows a tree-based approach, and the required communication is that of three 2-input multiplications and 2 online rounds.

Our contribution lies in keeping the communication and the round of the online phase the same as that of 2-input multiplication (i.e. invariant of the number of inputs). To achieve this,  we trade off the preprocessing cost. Looking ahead, multi-input multiplication, when coupled with the optimized parallel prefix adder circuit from~\cite{USENIX:PSSY21}, brings in a $2 \times$ improvement in online rounds. It also cuts down the online communication of secure comparison, impacting PPML applications.

\subsubsection{4PC Mixed-Protocol Framework}
In addition to relying on the improved arithmetic/boolean world, we observe that a large portion of the computation in most MPC-based PPML frameworks is done over the arithmetic and boolean worlds. The garbled world is used only to perform the non-linear operations~(e.g. softmax) that are expensive in the arithmetic/boolean world and switch back immediately after.  Leveraging this observation we propose tailor-made GC-based protocols with {\em end-to-end} conversion techniques.

The tailor-made GC for the fair protocols, has the following advantages over Trident -- i) no use of commitments for the inputs, and ii) no requirement of an explicit input sharing and output reconstruction phase, as explained later. The overall communication cost remains the same as Trident with 1 GC and 2 online rounds. In addition, for time-constrained applications we offer a variant that trades off 1 GC at the expense of 1 lesser online round. When it comes to robustness, the state-of-the-art for GC protocols are~\cite{C:IKKP15}, costing 12 GC and 2 rounds, and~\cite{CCS:BJPR18}, costing 2 GC and 4 rounds. We propose robust GC conversions for the first time, and they cost 2 GC and have an amortized round complexity of 1.

As mentioned earlier, the framework operates over three domains - arithmetic, boolean, and garbled~(\secref{4pcMixFrame}). For an operation that required computing over the garbled domain, the standard approach is to first switch from {\em Arithmetic to Garbled} and evaluate the garbled circuit to obtain a garbled-shared output. These shares are brought back to the arithmetic domain using a {\em Garbled to Arithmetic} conversion. Our approach instead is to  modify the garbled circuit such that the output is in the arithmetic domain. This eliminates the need for an explicit {\em Garbled to Arithmetic} conversion, saving in both communication and rounds in the online phase. More generally, end-to-end conversions are of the form ``$\sf{x}$-Garbled-$\sf{x}$'' where $\sf{x}$ can be either arithmetic or boolean, and need a single round for the garbled world~(cf.~\S\ref{sec:4pcMixFrame}).

Comparison of Tetrad with actively secure PPML frameworks in 3PC and 4PC is presented in \tabref{mpcCost}. The dot product is chosen as a parameter as it is one of the most crucial building blocks in PPML applications.

\subsubsection{Benchmarking and PPML}
We demonstrate the practicality of the framework, which combines the arithmetic, boolean, garbled worlds via benchmarking. The training and inference phases of deep neural networks such as LeNet~\cite{lenet} and VGG16~\cite{vgg16} and the inference phase of Support Vector Machines are benchmarked.

The implementation section is presented through the lens of deployment scenarios with two different goals. Participants in the first scenario are interested in the shortest online runtime for the computation, whereas participants in the second one want to minimize the deployment cost. Correspondingly, there are variants of our framework that cater to both scenarios.

Considering online runtime as the metric, $\thisT$ is the time-optimized~(${\sf T}$) variant with the fastest online phase.  $\thisC$ is the cost-optimized~(${\sf C}$) variant, minimizing deployment cost. This is measured via {\em monetary cost}~\cite{C:PRTY19}, which helps to capture the effect of the total runtime of the parties, and communication together.
Both variants are compared against Trident~\cite{NDSS:ChaRacSur20}, and their relative performance is indicated in \tabref{Comp}. The comparison is with respect to run time, communication, monetary cost, and throughput~(\tabref{notations}).

	\begin{table}[htb!]
		\centering
		\resizebox{0.48\textwidth}{!}{
			\begin{NiceTabular}{r c c c c c}
				\toprule
				\Block{2-1}{Protocol} 
				& \Block[c]{1-3}{Training \& Inference\tabularnote{`Com' - Communication, `Time' - Runtime, `CT' - Cumulative Runtime, `Cost' - Monetary Cost, `$\TP_{\sf on}$' - Online Throughput, ${\sf on}$ - online, ${\sf tot}$ - total}} & &
				& Training    & Inference \\ \cmidrule{2-6}
			    & $\text{Time}_{\sf on}$\tabularnote{$\Circle$ - good, $\LEFTcircle$ - better, $\CIRCLE$ - best,~(w.r.t parameter considered)} 
			    & $\text{Com}_{\sf tot}$ & $\text{CT}_{\sf tot}$ & Cost & $\TP_{\sf on}$\\
				\midrule
				\thisT
				& $\CIRCLE$      & $\LEFTcircle$  & $\CIRCLE$    & $\LEFTcircle$  & $\CIRCLE$  \\
				\thisC
				& $\LEFTcircle$  & $\CIRCLE$     & $\LEFTcircle$ & $\CIRCLE$      & $\LEFTcircle$ \\
				Trident
				& $\Circle$      & $\Circle$     & $\Circle$     & $\Circle$      & $\Circle$ \\
				\bottomrule
			\end{NiceTabular}
		}
		\vspace{-1mm}
		\caption{\small Comparison of Trident~\cite{NDSS:ChaRacSur20} with the versions of \this~for deep neural networks~(cf. NN-4 in \S\ref{sec:4pcImplementation}). }\label{tab:Comp}
		\vspace{-2mm}
	\end{table}

Trident requires $3$ parties to be active for most of the online phase, the 4th party coming in only towards the end of the computation. In Tetrad, it is brought down to $2$, having a significant impact on the monetary cost.

\tabref{Comp} shows that Tetrad is better when compared to Trident across all the parameters considered. Within Tetrad, \thisT~fares better when it comes to online run time for both training and inference, while \thisC~does better in terms of communication.
When it comes to inference, throughput is more relevant than the cost, and here, the time-optimized variant fares the best. Robust variants follow the same trends, and the reasons behind them are elaborated in \secref{4pcImplementation}.

\section{Preliminaries and Definitions}
\label{sec:4pcPrelim}
We consider $4$ parties denoted by $\Partyset = \{ P_0, P_1, P_2, P_3 \}$ that are connected by pair-wise private and authentic channels in a synchronous network, and a static, active adversary that can corrupt at most 1 party. In the secure outsourced computation (SOC) setting, the 4 servers hired to carry out the computation enact the role of the 4 parties mentioned above. In this setting inputs, intermediate values, and outputs exist in a secret-shared form. For ML training, data owners secret-share their data to the servers, which train the model using MPC. The trained model can then be reconstructed towards the data owners. Our framework is secure even if the corrupt server colludes with an arbitrary number of data owners. For ML inference, the model owner secret-shares a pre-trained model among the servers. A client secret-shares its query amongst the servers, who carry out the inference via MPC. The output is reconstructed towards the client. Security is guaranteed against a corrupt server that colludes either with the model owner or with the client. We do not guarantee the privacy of the training data against attacks such as attribute inference, membership inference, or model inversion~\cite{CCS:FreJhaRis15,USENIX:TZJRR16,SP:SSSS17}. This is an orthogonal problem, and we consider it as out-of-scope of this work.

In \this, parties rely on a one-time shared key setup (cf. \S\ref{app:Prelims} for the ideal functionality)~\cite{CCS:MohRin18,CCSW:CCPS19,NDSS:PatSur20,NDSS:ChaRacSur20,PoPETS:BCPS20} to facilitate generation of correlated randomness non-interactively. Our protocols work over the arithmetic ring $\Z{\ell}$ or boolean ring $\Z{1}$. We use fixed-point arithmetic (FPA)~\cite{CCS:MohRin18,CCSW:CCPS19,NDSS:PatSur20,NDSS:ChaRacSur20,PoPETS:BCPS20} representation to deal with floating-point values where a decimal value is represented as an $\ell$-bit integer in signed 2's complement representation. The most significant bit (MSB) represents the sign bit and $x$ least significant bits are reserved for the fractional part. The $\ell$-bit integer is then treated as an element of $\Z{\ell}$ and operations are performed modulo $2^{\ell}$. We set $\ell = 64$, $x = 13$, with $\ell - x - 1$ bits for the integral part.

\begin{notation}\label{vectors}
	For a vector $\vct{a}$, ${\vl{a}}_i$ denotes the $i^{th}$ element in the vector. For two vectors $\vct{a}$ and $\vct{b}$ of length $\vl{d}$, the dot product is given by, $\vct{a} \band \vct{b} = \sum_{i = 1}^{\vl{d}} {\vl{a}}_i {\vl{b}}_i$. 
	Given two matrices $\Mat{A}, \Mat{B}$, the operation $\Mat{A} \circ \Mat{B}$ denotes the matrix multiplication.
\end{notation}

\medskip
\begin{notation}\label{arval}
	For a bit $\bitb \in \bitset$, $\arval{\bitb}$ denotes the representation of the bit value $\bitb$ over the arithmetic ring $\Z{\ell}$. In detail, all the bits of $\arval{\bitb}$ will be zero except for the least significant bit, which is set to $\bitb$.
\end{notation}

{\em Primitives: } For our constructs we use two standard primitives~(cf. \S\ref{app:Prelims}) (a) a {\em collision-resistant} hash function, denoted as $\Hash(\cdot)$; (b) a garbling scheme $\GS = (\Gb, \En, \allowbreak \Ev, \De)$.

\paragraph{Sharing Semantics}
To enforce security, we perform computation on secret-shared data.  For the arithmetic and boolean sharing, we follow  a $(4, 1)$ replicated secret sharing~(RSS)~\cite{NDSS:ChaRacSur20}, denoted by $\shr{\cdot}$. To leverage the benefits of the preprocessing paradigm, we associate meaning to the shares and demarcate the parties in terms of their roles.  Three of the shares of a $(4, 1)$ RSS for a value $\vl{v}$ can be generated in the preprocessing phase independent of the value to be shared, and their sum can be interpreted as a mask. The fourth share, dependent on $\vl{v}$,  can be computed in the online phase and can be treated as the masked value. We denote the three preprocessed shares as $\pad{\vl{v}}{1}, \pad{\vl{v}}{2}, \pad{\vl{v}}{3}$ and the mask as $\pad{\vl{v}}{} =  \pad{\vl{v}}{1} + \pad{\vl{v}}{2} + \pad{\vl{v}}{3}$. The masked value is denoted as $\mk{\vl{v}}$, and $\mk{\vl{v}}= \vl{v} +\pad{\vl{v}}{}$.  

\begin{table}[htb!]
	\centering
	\resizebox{.49\textwidth}{!}{
		\begin{NiceTabular}{r l l l l}[notes/para]
			\toprule
			Type  & $P_0$ & $P_1$ & $P_2$ & $P_3$\\
			\midrule
			$\sqr{\cdot}$-sharing\tabularnote{$\vl{v} = \vl{v}_1 + \vl{v}_2~(+ \vl{v}_3)$} 
			& $-$      & ${\vl{v}}^1$     & ${\vl{v}}^2$    & $-$\\
			$\spr{\cdot}$-sharing 
			& $-$ 
			& ${\vl{v}}^1$     
			& ${\vl{v}}^2$    & ${\vl{v}}^3$\\
			$\sgr{\cdot}$-sharing 
			& $-$    &  $({\vl{v}}^1, {\vl{v}}^3)$ 
			& $({\vl{v}}^2, {\vl{v}}^3)$     & $({\vl{v}}^1, {\vl{v}}^2)$\\
			$\shr{\cdot}$-sharing\tabularnote{$\mk{\vl{v}} = \vl{v} + \pad{\vl{v}}{}$}  
			& $(\pad{\vl{v}}{1}, \pad{\vl{v}}{2}, \pad{\vl{v}}{3})$ 
			& $(\mk{\vl{v}}, \pad{\vl{v}}{1}, \pad{\vl{v}}{3})$  
			& $(\mk{\vl{v}}, \pad{\vl{v}}{2}, \pad{\vl{v}}{3})$ 
			& $(\mk{\vl{v}}, \pad{\vl{v}}{1}, \pad{\vl{v}}{2})$\\
			\bottomrule
		\end{NiceTabular}
	}
	\vspace{-1mm}
	\caption{\small Sharing semantics for a value $\vl{v} \in \Z{\ell}$ in \this. All the shares are $\ell$-bit ring elements.}\label{tab:sharing}
	\vspace{-3mm}
\end{table}

Next, we distinguish the four parties into two sets; the {\em eval} set $\SetE = \{P_1,P_2\}$ which is assigned the task of carrying out the computation, and is active throughout the online phase. The {\em helper} set $\SetD = \{P_0, P_3\}$ is used to assist $\SetE$ in verification, so it is only active towards the end of the computation. Complying with the roles and the RSS format, the distribution is done as follows: $P_0: \{\pad{\vl{v}}{1}, \pad{\vl{v}}{2},  \pad{\vl{v}}{3}\}, P_1: \{\pad{\vl{v}}{1},  \pad{\vl{v}}{3},  \mk{\vl{v}}\}, P_2: \{\pad{\vl{v}}{2},  \pad{\vl{v}}{3},  \mk{\vl{v}}\}$, and $P_3: \{\pad{\vl{v}}{1}, \pad{\vl{v}}{2}, \mk{\vl{v}}\}$. 
The shares are distributed among $\SetD$ such that $P_3$ gets $\mk{\vl{v}}$ whereas $P_0$ gets all the shares of $\pad{\vl{v}}{}$. During preprocessing, $P_0$ computes a part of the data needed for verification~(cf. \boxref{fig:piMultiplication}) using its input independent shares, which is communicated to $P_3$. This enables a verification in the online without $P_0$, for the fair protocols. 

Exploiting the asymmetry of the roles allows for minimal online participation, giving a huge improvement in the cumulative runtime (sum of uptime of all the parties), thereby saving in monetary costs~(cf. \S\ref{sec:4pcImplementation}). The RSS sharing semantics are presented in \tabref{sharing}, denoted by $\shr{\cdot}$, in a modular way with the help of three intermediate sharing semantics $\sqr{\cdot}, \spr{\cdot}$ and $\sgr{\cdot}$. All the sharing schemes used are linear i.e. given shares of values $\vl{v}_1,\ldots, \vl{v}_m$ and public constants $c_1,\ldots,c_m$, sharing of $\sum_{i=1}^m c_i \vl{v}_i$ can be computed locally for an integer $m$.

\begin{notation} \label{notation:concise}
	(a) For the $\shr{\cdot}$-shares of $n$ values $\vl{a}_1,\ldots,\vl{a}_n$, $\gm{\vl{a}_1 \ldots \vl{a}_n}{} = \prod\limits_{i=1}^{n} \pad{\vl{a}_i}{}$ and $\mk{\vl{a}_1 \ldots \vl{a}_n}{} = \prod\limits_{i=1}^{n} \mk{\vl{a}_i}{}$ (b) We use superscripts ${\bf B}$, and ${\bf G}$ to denote sharing semantics in boolean, and garbled world, respectively-- $\shrB{\cdot}$,  $\shrG{\cdot}$. We omit the superscript for arithmetic world. 
\end{notation}

Sharing semantics for boolean sharing over $\Z{}$ is similar to arithmetic sharing except that addition is replaced with XOR. The semantics for garbled sharing are described in \S\ref{sec:4pcMixFrame} with the relevant context.

\section{4PC Protocol}
\label{sec:4pcFourPC}
This section covers the details of our 4PC protocol over an arithmetic ring $\Z{\ell}$. We begin by explaining the relevant primitives in \secref{4pcprimitives}. The multiplication protocol with abort is presented in \secref{new4pc}, followed by details on elevating the security to fairness in \secref{fair4pc}. Lastly, in \secref{robust4pc}, we show how to improve the security to robustness\footnote{The classical notion of robustness is achieved}. Formal details along with the cost analysis for the protocols is deferred to \S\ref{app:4pc}.

\subsection{Primitives}
\label{sec:4pcprimitives}

\paragraph{Joint-Send~($\jsend$)}
\label{p:jsend}
The Joint-Send~($\jsend$) primitive allows two parties $P_i, P_j$ to relay a message $\vl{v}$ to a third party $P_k$ ensuring either the delivery of the message or $\abort$ in case of inconsistency. Towards this, $P_i$ sends $\vl{v}$ to $P_k$, while $P_j$ sends a hash of the same, $\Hash(\vl{v})$, to $P_k$. Party $P_k$ accepts the message if the hash values are consistent and $\abort$s otherwise. Note that the communication of the hash can be clubbed together for several instances and be deferred to the end of the protocol, amortizing the cost.

\paragraph{Joint-Send~($\jsend$) for robust protocols}
To achieve robustness, we instantiate $\jsend$ using the joint-message passing~(jmp) primitive of~\cite{USENIX:KPPS21}. The $\jsend$ primitive~(\boxref{fig:4pcjsendrobust}) allows two senders $P_i, P_j$ to relay a common message, $\vl{v} \in \Z{\ell}$, to a recipient $P_k$, either by ensuring successful delivery of $\vl{v}$, or by establishing a conflicting pair of parties, one among which is guaranteed to be corrupt. This implies the residual two parties are honest, one of which is then entrusted to take the computation to completion by enacting the role of a trusted party ($\TTP$). The instantiation of $\jsend$ can be viewed as consisting of two phases ({\em send, verify}), where the {\em send} phase consists of $P_i$ sending $\vl{v}$ to $P_k$ and the rest of the protocol steps go to {\em verify} phase (which ensures correct {\em send} or $\TTP$ identification). This requires $1$ round of interaction and $\ell$ bits of communication. To leverage amortization, {\em verify} will be executed only once, at the end of the computation, and requires $2$ rounds.

The $\jsend$ primitive is instantiated depending  on the desired security guarantee. For simplicity, we give common constructions for fair and robust variants of the protocols, when they only differ in the instantiation of $\jsend$.

\begin{notation}\label{notation_jsend}
	Protocol $\pijsend$ denotes the instantiation of Joint-Send~($\jsend$) primitive. We say that $P_i, P_j$ $\jsend$ $\vl{v}$ to $P_k$ when they invoke $\pijsend(P_i, P_j, \vl{v}, P_k)$.
\end{notation}

\paragraph{Sharing}
\label{p:share}
Protocol $\prot{\Sh}$~(\boxref{fig:piSh}) enables $P_i$ to generate $\shr{\cdot}$-share of a value $\vl{v}$. During the preprocessing phase, $\pd{}$-shares are sampled non-interactively using the pre-shared keys~(cf. \S\ref{app:keysetup}) in a way that $P_i$ will get the entire mask $\pd{\vl{v}}$. During the online phase, $P_i$ computes $\mk{\vl{v}} = \vl{v} + \pd{\vl{v}}$ and sends to $P_1, P_2, P_3$, which exchange the hash values to check for consistency. Parties abort in the fair protocol in case of inconsistency, whereas for robust security, parties proceed with a default value.

\smallskip
\begin{protocolbox}{$\prot{\Sh}(P_i, \vl{v})$}{$\shr{\cdot}$-sharing of a value $\vl{v}$ by party $P_i$.}{fig:piSh}
	\justify
	\algoHead{Preprocessing:} 
	Sample the following: \vspace{-2mm}
	\begin{align*}
		P_i, P_0, P_1, P_3: \pad{\vl{v}}{1}~~\Big|~~
		P_i, P_0, P_2, P_3: \pad{\vl{v}}{2}~~\Big|~~
		P_i, P_0, P_1, P_2: \pad{\vl{v}}{3}
	\end{align*}
	\justify
	\vspace{-2mm}
	\algoHead{Online:} \vspace{-2mm}
	\begin{enumerate}[itemsep=0mm]
		\item $P_i$ computes $\mk{\vl{v}} = \vl{v} + \pd{\vl{v}}$ and sends to $P_1, P_2, P_3$.
		\item $P_1, P_2, P_3$ mutually exchange $\Hash(\mk{\vl{v}})$ and accept the sharing if there exists a majority. Else parties $\abort$ for the case of fairness and accept a default value for the case of robust security. 
	\end{enumerate}       
\end{protocolbox}
\vspace{-2mm}

\paragraph{Joint Sharing}
\label{p:jsh}
Protocol $\prot{\JSh}$ enables parties $P_i, P_j$ to generate $\shr{\cdot}$-share of a value $\vl{v}$. The protocol is similar to $\prot{\Sh}$ except that $P_j$ ensures the correctness of the sharing performed by $P_i$. During the preprocessing, $\pd{}$-shares are sampled such that both $P_i, P_j$ will get the entire mask $\pd{\vl{v}}$. During the online phase, $P_i, P_j$ compute and $\jsend$ $\mk{\vl{v}} = \vl{v} + \pd{\vl{v}}$ to parties $P_1, P_2, P_3$.

For joint-sharing a value $\vl{v}$ possessed by $P_0$ along with another party in the preprocessing, the communication can be optimized further. The protocol steps based on the $(P_i, P_j)$ pair are summarised below:

	\begin{itemndss}
		\item $(P_0, P_1): \Partyset \setminus \{P_2\}$ sample $\pad{\vl{v}}{1} \in_R \Z{\ell}$; Set $\pad{\vl{v}}{2} = \mk{\vl{v}} = 0$; $P_0, P_1$ $\jsend$ $\pad{\vl{v}}{3} = - \vl{v} - \pad{\vl{v}}{1}$ to $P_2$. 
		\item $(P_0, P_2): \Partyset \setminus \{P_3\}$ sample $\pad{\vl{v}}{3} \in_R \Z{\ell}$; Set $\pad{\vl{v}}{1} = \mk{\vl{v}} = 0$; $P_0, P_2$ $\jsend$ $\pad{\vl{v}}{2} = - \vl{v} - \pad{\vl{v}}{3}$ to $P_3$. 
		\item $(P_0, P_3): \Partyset \setminus \{P_1\}$ sample $\pad{\vl{v}}{2} \in_R \Z{\ell}$; Set $\pad{\vl{v}}{3} = \mk{\vl{v}} = 0$; $P_0, P_3$ $\jsend$ $\pad{\vl{v}}{1} = - \vl{v} - \pad{\vl{v}}{1}$ to $P_1$.
	\end{itemndss}

\paragraph{Reconstruction}
\label{p:rec}
Protocol $\prot{\Rec}(\Partyset, \vl{v})$ (\boxref{fig:piRec}) enables parties in $\Partyset$ to compute $\vl{v}$, given its $\shr{\cdot}$-share. Note that each party misses one share to reconstruct the output, and the other 3 parties hold this share. 2 out of the 3 parties will $\jsend$ the missing share to the party that lacks it. Reconstruction towards a single party can be viewed as a special case.

\paragraph{$\FZero$ - Generating additive shares of zero}
\label{pa:pizero}
In $\this$, we make use of a functionality $\FZero$ to enable parties $P_0, P_i$ obtain $Z_i$ for $i \in \{1,2,3\}$ such that $Z_1 + Z_2 + Z_3 = 0$. We observe that the functionality can be instantiated non-interactively using the pre-shared keys~(cf. \S\ref{app:keysetup}). For this, parties in $\Partyset \setminus \{P_j\}$ sample random value $\vl{r}_j$ for $j \in \{1,2,3\}$. The shares are then defined as $Z_1 = \vl{r}_3 - \vl{r}_2, Z_2 = \vl{r}_1 - \vl{r}_3$ and $Z_3 = \vl{r}_2 - \vl{r}_1$.

\paragraph{Multiplication of $\sgr{\vl{a}}, \sgr{\vl{b}}$, held in clear by $P_0$}
\label{pa:piMultsgr}
To multiply $\sgr{\vl{a}}, \sgr{\vl{b}}$, where $\vl{a}, \vl{b} \in \Z{\ell}$ are held in clear by $P_0$, and generate $\sgr{\vl{z}}$ such that $\vl{z} = \vl{a} \vl{b}$, $\piMultsgr$~(\boxref{fig:piMultsgr}) proceed as follows. Parties locally generate a $\spr{\cdot}$-sharing of $\vl{z}$, where $P_0$ knows all three $\spr{\cdot}$-shares. To complete the generation of $\sgr{\vl{z}}$, $P_0, P_i$ for $i \in \{1, 2, 3\}$, randomize their $\spr{\cdot}$-share of $\vl{z}$ using $\spr{\cdot}$-share of 0, and $\jsend$ $\spr{\vl{z}}^{i}$, to one other party. 

\begin{protocolbox}{$\piMultsgr(\sgr{\vl{a}}, \sgr{\vl{b}})$}{Multiplication of $\sgr{\cdot}$-shared values, held on clear by $P_0$.}{fig:piMultsgr}
	\medskip
	\justify 
	\begin{enumerate}
		\item  Invoke $\FZero$ to enable $P_0, P_j$ obtain $Z_j$ for $j \in \{1,2,3\}$ such that $Z_1 + Z_2 + Z_3 = 0$. 
		\begin{align*}
			P_0, P_1 &~\jsend~\spr{\vl{z}}^{1} = \vl{a}^{1} \vl{b}^{3} + \vl{a}^{3} \vl{b}^{1} + \vl{a}^{3} \vl{b}^{3} + Z_1~\text{to } P_2.\\
			P_0, P_2 &~\jsend~\spr{\vl{z}}^{2} = \vl{a}^{2} \vl{b}^{3} + \vl{a}^{3} \vl{b}^{2} + \vl{a}^{2} \vl{b}^{2} + Z_2~\text{to } P_3.\\
			P_0, P_3 &~\jsend~\spr{\vl{z}}^{3} = \vl{a}^{1} \vl{b}^{2} + \vl{a}^{2} \vl{b}^{1} + \vl{a}^{1} \vl{b}^{1} + Z_3~\text{to } P_1.
		\end{align*}
		\item  Set $\sgr{\vl{z}}^{}$ as $\vl{z}^{1} = \spr{\vl{z}}^{3},~~\vl{z}^{2} = \spr{\vl{z}}^{2},~~\vl{z}^{3} = \spr{\vl{z}}^{1}$.
	\end{enumerate}
\end{protocolbox}
\vspace{-3mm}

\subsection{Multiplication in $\this$}
\label{sec:new4pc}
Given the shares of $\vl{a}, \vl{b}$, the goal of the multiplication protocol is to generate shares of $\vl{z} = \vl{ab}$. The protocol is designed such that parties $P_1, P_2$ obtain a masked version of the output $\vl{z}$, say $\vl{z} - \vl{r}$ in the online phase, and $P_0, P_3$ obtain the mask $\vl{r}$ in the preprocessing phase. Parties then generate $\shr{\cdot}$-sharing of these values by executing $\piJSh$, and locally compute $\shr{\vl{z} - \vl{r}} + \shr{\vl{r}}$ to obtain the final output. 

\paragraph{Online} 
Note that,

\begin{align}\label{eq:z+r}
	\vl{z} - \vl{r} &= \vl{a}\vl{b} - \vl{r} = (\mk{\vl{a}} - \pd{\vl{a}})(\mk{\vl{b}} - \pd{\vl{b}}) - \vl{r} \nonumber\\ 
	&= \mk{\vl{ab}} - \mk{\vl{a}}\pd{\vl{b}} - \mk{\vl{b}}\pd{\vl{a}} + \gm{\vl{a}\vl{b}}{} - \vl{r}
	~~\text{\footnotesize{(cf. notation~\ref{notation:concise})}}
\end{align}

In Eq~\ref{eq:z+r}, $P_1, P_2$ can compute $\mk{\vl{ab}}$ locally, and hence we are interested in computing $\vl{y} = (\vl{z - r}) - \mk{\vl{ab}}$. Let us view $\vl{y}$ as $\vl{y} = \vl{y}_1 + \vl{y}_2 + \vl{y}_3$, where $\vl{y}_1$ and $\vl{y}_2$ can be computed respectively by $P_1$ and $P_2$, and $\vl{y}_3$ consists of terms that can be computed by both.
    \begin{align}\label{eq:y}
			P_1: \vl{y}_1 &= - \pad{\vl{a}}{1} \mk{\vl{b}} - \pad{\vl{b}}{1} \mk{\vl{a}} + \sqr{\gm{\vl{a}\vl{b}}{} - \vl{r}}_1 \nonumber\\
			P_2: \vl{y}_2 &= - \pad{\vl{a}}{2} \mk{\vl{b}} - \pad{\vl{b}}{2} \mk{\vl{a}} + \sqr{\gm{\vl{a}\vl{b}}{} - \vl{r}}_2 \nonumber\\
			P_1, P_2: \vl{y}_3 &= - \pad{\vl{a}}{3} \mk{\vl{b}} - \pad{\vl{b}}{3} \mk{\vl{a}}
    \end{align}

The preprocessing is set up such that $P_1, P_2$ receive additive shares~($\sqr{\cdot}$) of $\gm{\vl{a}\vl{b}}{} - \vl{r}$.  $P_1, P_2$ then mutually exchange the missing share to reconstruct $\vl{y}$ and subsequently $\vl{z - r}$. 

\paragraph{Verification}
To ensure correctness of the values exchanged in the online phase, we use the assistance of $P_3$. Concretely, $P_3$ obtains $\vl{y}_1 + \vl{y}_2 + \vl{s}$, where $\vl{s}$ is a random mask known to $P_0, P_1, P_2$. For this, $P_3$ needs $\gm{\vl{a}\vl{b}}{} + \vl{s}$, which it obtains from the preprocessing phase. The mask $\vl{s}$ is used to prevent the leakage from $\gm{\vl{a}\vl{b}}{}$ to $P_3$. $P_3$ computes a hash of $\vl{y}_1 + \vl{y}_2 + \vl{s}$ and sends it to $P_1, P_2$, which $\abort$ if it is inconsistent. 

\paragraph{Preprocessing} 
Parties should obtain the following values from the preprocessing phase:
\begin{equation*}
	{\sf i)}~~P_1, P_2: \sqr{\gm{\vl{ab}}{} - \vl{r}}~~\Big|~~
	{\sf ii)}~~P_0, P_3: \vl{r}~~\Big|~~
	{\sf iii)}~~P_3: \gm{\vl{a}\vl{b}}{} + \vl{s}
\end{equation*}

For ${\sf i)}$ and ${\sf ii)}$, let $\gm{\vl{ab}}{} = \gm{\vl{ab}}{1} + \gm{\vl{ab}}{2} + \gm{\vl{ab}}{3}$, where $P_0$ along with $P_i$ can compute $\gm{\vl{ab}}{i}$ for $i \in \{1, 2, 3\}$. For $P_1, P_2$, to form an additive sharing of $(\gm{\vl{ab}}{} - r)$, it suffices for them to define their share as $\gm{\vl{ab}}{i} + \sqr{\gm{\vl{ab}}{3} - r}$. 
Instead of sampling a fresh random value for $\vl{r}$, $P_0, P_3$, along with $P_i$, sample the share for $\gm{\vl{ab}}{3} - \vl{r}$ as $\vl{u}^i$ for $i \in \{1, 2\}$.  $P_0, P_3$ compute $\vl{r}$ as $\gm{\vl{ab}}{3} - \vl{u}^1 - \vl{u}^2$. Note that $\vl{r}$ computed this way is still uniformly random, as $\vl{u}^1, \vl{u}^2$ are sampled uniformly at random.

For ${\sf iii)}$, $P_3$ needs $\vl{w} = \gm{\vl{ab}}{1} + \gm{\vl{ab}}{2} + \vl{s}$. To tackle this, $P_0, P_1, P_2$ sample $\vl{s}_1, \vl{s}_2$, and set $\vl{s} = \vl{s}_1 + \vl{s}_2$. $P_0, P_i$, for $i \in \{1, 2\}$, $\jsend$ $\gm{\vl{ab}}{i} + \vl{s}_i$ to $P_3$. This requires a communication of 2 elements. 
As an optimization, $P_0$ sends $\vl{w}$ to $P_3$. If $P_0$ is malicious, it might send a wrong value to $P_3$. However, in this case, every party in the online phase would be honest. And since $P_1, P_2$ do not use $\vl{w}$ in their computation, any error in $\vl{w}$ is bound to get caught in the verification phase.

\begin{protocolbox}{$\piMult(\vl{a}, \vl{b}, \isTr)$}{Multiplication with / without truncation in $\this$.}{fig:piMultiplication}
	Let $\isTr$ be a bit that denotes whether truncation is required ($\isTr =1$) or not~($\isTr=0$). \\
	\detail{
		{\bf Input(s):} $\shr{\vl{a}}, \shr{\vl{b}}$.\\
		{\bf Output:} $\shr{\vl{o}}$ where $\vl{o} = \vl{z}^{\vl{t}}$ if $\isTr = 1$ and $\vl{o} = \vl{z}$ if $\isTr = 0$ and $\vl{z} = \vl{ab}$.
	}
	\justify 
	\vspace{-2mm}
	\algoHead{Preprocessing:} \vspace{-2mm}
	\begin{enumerate}[itemsep=0mm]
		\item Locally compute:
		\begin{align*}
			P_0, P_1: \gm{\vl{a}\vl{b}}{1} &= \pad{\vl{a}}{1} \pad{\vl{b}}{3} + \pad{\vl{a}}{3} \pad{\vl{b}}{1} + \pad{\vl{a}}{3} \pad{\vl{b}}{3} \\
			P_0, P_2: \gm{\vl{a}\vl{b}}{2} &= \pad{\vl{a}}{2} \pad{\vl{b}}{3} + \pad{\vl{a}}{3} \pad{\vl{b}}{2} + \pad{\vl{a}}{2} \pad{\vl{b}}{2}\\
			P_0, P_3: \gm{\vl{a}\vl{b}}{3} &= \pad{\vl{a}}{1} \pad{\vl{b}}{2} + \pad{\vl{a}}{2} \pad{\vl{b}}{1} + \pad{\vl{a}}{1} \pad{\vl{b}}{1}
		\end{align*}
		\item $P_0, P_3$ and $P_j$ sample random ${\vl{u}}^j \in_R \Z{\ell}$ for $j \in \{1,2\}$. Let ${\vl{u}^1} + \vl{u}^2 = \gm{\vl{a}\vl{b}}{3} - \vl{r}$ for a random $\vl{r} \in_R \Z{\ell}$.  
		\item $P_0, P_3$ compute $\vl{r} = \gm{\vl{a}\vl{b}}{3} - {\vl{u}^1} - \vl{u}^2$ and set $\vl{q} = \vl{r}^{\vl{t}}$  if $\isTr = 1$, else set $\vl{q} = \vl{r}$. $P_0, P_3$ execute $\piJSh(P_0, P_3, \vl{q})$ to generate $\shr{\vl{q}}$.
		\item  $P_0, P_1, P_2$ sample random ${\vl{s}}_1, {\vl{s}}_2 \in_R \Z{\ell}$ and set ${\vl{s}} = {\vl{s}}_1 + {\vl{s}} _2$\footnote{For the fair protocol, it is enough for $P_0, P_1, P_2$ to sample ${\vl{s}}$ directly.}. 
		$P_0$ sends $\vl{w} = \gm{\vl{a}\vl{b}}{1} + \gm{\vl{a}\vl{b}}{2} + {\vl{s}}$ to $P_3$.
	\end{enumerate}
	\justify
	\vspace{-2mm}
	\algoHead{Online:} Let $\vl{y} = (\vl{z} - \vl{r}) - \mk{\vl{a}} \mk{\vl{b}}$. \vspace{-2mm}
	\begin{enumerate}[itemsep=0mm]
		\item Locally compute:
		\begin{align*}
			P_1: \vl{y}_1 &= - \pad{\vl{a}}{1} \mk{\vl{b}} - \pad{\vl{b}}{1} \mk{\vl{a}} + \gm{\vl{a}\vl{b}}{1} + {\vl{u}}^1 \\
			P_2: \vl{y}_2 &= - \pad{\vl{a}}{2} \mk{\vl{b}} - \pad{\vl{b}}{2} \mk{\vl{a}} + \gm{\vl{a}\vl{b}}{2} + {\vl{u}}^2 \\
			P_1, P_2: \vl{y}_3 &= - \pad{\vl{a}}{3} \mk{\vl{b}} - \pad{\vl{b}}{3} \mk{\vl{a}}
		\end{align*}
		\item $P_1$ sends $\vl{y}_1$ to $P_2$, while $P_2$ sends $\vl{y}_2$ to $P_1$, and they locally compute $\vl{z} - \vl{r} = (\vl{y}_1 + \vl{y}_2 + \vl{y}_3) + \mk{\vl{a}} \mk{\vl{b}}$.
		\item If $\isTr = 1$, $P_1, P_2$ set $\vl{p} = (\vl{z} - \vl{r})^{\vl{t}}$, else $\vl{p} = \vl{z} - \vl{r}$. $P_1, P_2$ execute $\piJSh(P_1, P_2, \vl{p})$ to generate $\shr{\vl{p}}$. 
		\item Locally compute $\shr{\vl{o}} = \shr{\vl{p}} + \shr{\vl{q}}$. Here $\vl{o} = \vl{z}^{\vl{t}}$ if $\isTr = 1$ and $\vl{z}$ otherwise.
		\item {\em Verification:} $P_3$ computes $\vl{v} = - (\pad{\vl{a}}{1} + \pad{\vl{a}}{2}) \mk{\vl{b}} - (\pad{\vl{b}}{1} + \pad{\vl{b}}{2}) \mk{\vl{a}} + {\vl{u}^1} + \vl{u}^2 + \vl{w}$ and sends $\Hash(\vl{v})$ to $P_1$ and $P_2$. Parties $P_1, P_2$ $\abort$ iff $\Hash(\vl{v}) \ne \Hash(\vl{y}_1 + \vl{y}_2 + {\vl{s}})$.
	\end{enumerate}     
\end{protocolbox}
\vspace{-2mm}

\paragraph{Truncation}
For a value $\vl{v} = \vl{v}_1 + \vl{v}_2$, SecureML~\cite{SP:MohZha17} showed that the truncated value $\vl{v}/2^x$, denoted by $\vl{v}^\vl{t}$, can be computed as $\vl{v}_1^{\vl{t}} + \vl{v}_2^{\vl{t}}$. With high probability, a truncated value having at most one bit error in the least significant position is generated. It was shown in SecureML that accuracy drop for ML algorithms due to the one bit error is minimal. However, the method cannot be generalized to more than two parties. 
ABY3~\cite{CCS:MohRin18} demonstrated the extension to 3-party setting with a generic design that uses a truncation pair of the form $(\vl{r}, {\vl{r}}^{\vl{t}})$. Here, $\vl{r}$ is a random value and $\vl{r}^{\vl{t}}$ denotes its truncated version. Given this pair, $\vl{z}$ can be truncated by opening $\vl{z} - \vl{r}$ towards all, and computing $\vl{z}^{\vl{t}}$ as $\vl{z}^{\vl{t}} = (\vl{z-r})^{\vl{t}} + \vl{r}^{\vl{t}}$. Note that all operations are carried out on shares.

The design of our multiplication allows for truncation to be carried out this way without any additional overhead in communication. Towards this, $P_1, P_2$ locally truncate $(\vl{z - r})$ and generate $\shr{\cdot}$-shares of it in the online phase. Similarly, $P_0, P_3$ truncate $\vl{r}$ in the preprocessing phase and generate its $\shr{\cdot}$-shares. Then $\shr{\vl{z}^{\vl{t}}} = \shr{(\vl{z-r})^{\vl{t}}} + \shr{\vl{r}^{\vl{t}}}$

\paragraph{Multiplication by constant}
This operation in MPC is typically local: given constant $\alpha$ and $\shr{\vl{v}}$, the product can be written as $\alpha\vl{v} = \beta^1 + \beta^2$ where $\beta^1 = \alpha.(\mk{\vl{v}} - \pad{\vl{v}}{3})$ and $\beta^2 = \alpha.(- \pad{\vl{v}}{1} - \pad{\vl{v}}{2})$.
However, in FPA, we need to perform a truncation on the output. For this $P_1, P_2$ truncate $\beta^{1}$ and execute $\prot{\JSh}$, while $P_0, P_3$ do the same with $\beta^{2}$.

\subsection{Achieving Fairness}
\label{sec:fair4pc}

Here we show how to extend the security of $\this$ from abort to fairness using techniques from Trident~\cite{NDSS:ChaRacSur20}. Before proceeding with the output reconstruction, we need to ensure that all the honest parties are alive after the verification phase. For this, all the parties maintain an {\em aliveness} bit, say $\bitb$, which is initialized to $\continue$. If the verification phase is not successful for a party, it sets $\bitb = \abort$. In the first round of reconstruction, the parties mutually exchange their $\bitb$ bit and accept the value that forms the majority. Since we have only one corruption, it is guaranteed that all the honest parties will be in agreement on $\bitb$. If $\bitb = \continue$, then the parties exchange their missing shares and accept the majority. As per the sharing semantics, every missing share is possessed by three parties, out of which there can be at most one corruption. As an optimization, for instances where many values are reconstructed, two out of the three parties can send the share while the third can send a hash of the same.

\subsection{Achieving Robustness}
\label{sec:robust4pc}
Here we show how to extend the security of $\this$ to provide robustness while retaining the same amortized communication complexity. The robust variant, denoted by $\thisA$, additionally requires a verification check in the preprocessing phase of multiplication as compared to $\this$. Moreover, the reconstruction protocol is similar to the fair counterpart, except that aliveness check is not required since a cheating would result in identifying an honest party~($\TTP$). 

The multiplication protocol $\piMult$~(\boxref{fig:piMultiplication}) is modified as follows. First, the robust variant of $\piJSh$  is used instead of the fair one. This ensures correctness of messages to be communicated or identifies a conflicting pair of parties, one among which is guaranteed to be corrupt.  
Next, to ensure the correctness of $\vl{w}$ sent by $P_0$ alone in the preprocessing phase, we introduce $\piVrfyP$~(\boxref{fig:piVrfyP0}). If $\piVrfyP$ fails, parties identify a $\TTP$ in the preprocessing phase itself. 
Finally, in case of an $\abort$ in the online phase (which proceeds similar to the that of $\piMult$), $P_0$ is assigned as the $\TTP$. Since $P_0$ does not participate in the online phase of multiplication, and its communication in the preprocessing has been verified via $\piVrfyP$, this assignment is safe.

 {\em Verifying the communication by $P_0$:}
In $\piMult$~(\boxref{fig:piMultiplication}), $P_0$ computes and sends $\vl{w} = \gm{\vl{a}\vl{b}}{1} + \gm{\vl{a}\vl{b}}{2} + {\vl{s}}_1 + {\vl{s}}_2$ to $P_3$, where $P_0, P_1, P_2$ know ${\vl{s}}_1, {\vl{s}}_2$ in clear. Note that $\vl{w} = \vl{w}^1 + \vl{w}^2$ for $\vl{w}^1 = \gm{\vl{a}\vl{b}}{1} + {\vl{s}}_1$ and $\vl{w}^2 = \gm{\vl{a}\vl{b}}{2} + {\vl{s}}_2$. Also, $P_0$ along with $P_1, P_2$ and $P_3$ possess the values $\vl{w}^1, \vl{w}^2$ and $\vl{w}$ respectively. Checking the correctness of $\vl{w}$ thus reduces to verifying if $\vl{w} = \vl{w}^1 + \vl{w}^2$.

To verify this relation for all $M$ multiplication gates in the circuit, i.e. $\{\vl{w}_j \iseq \vl{w}_j^1 + \vl{w}_j^2\}_{j \in [M]}$, one approach is to compute a random linear combination and verify the relation on the sum. 
While working over a field $\F_p$, this solution has an error probability $1/|\F_p|$, where $|\F_p|$ denotes the size of $\F_p$. However, this solution does not work naively over rings since not every element in the ring has an inverse, as opposed to fields. Concretely, the check can still pass with a probability of at most $1/2$~\cite{TCC:ACDEY19,CCS:BGIN19}. To reduce the cheating probability, the check is repeated $\csec$ times, thereby bounding the cheating probability by $1/2^{\csec}$. As an optimization, it is sufficient to choose the random combiners from $\{0,1\}$. Thus, for one check, parties need to sample only a binary string of $M$ bits using the shared-key. The formal verification protocol appears in \boxref{fig:piVrfyP0}.

\begin{protocolbox}{$\piVrfyP(\{\sqr{{\vl{w}}_j}\}_{j=1}^{M})$}{Verification of $P_0$'s communication in the multiplication protocol of $\thisA$}{fig:piVrfyP0}
	\detail{
		{\bf Input(s):} $P_0, P_1: {\vl{w}}_j^1~~\Big|~~P_0, P_2: {\vl{w}}_j^2~~\Big|~~P_0, P_3: {\vl{w}}_j~~\Big|~$, for $j = 1, \ldots, M$.\\
		{\bf Output:} Whether ${\vl{w}}_j = {\vl{w}}_j^1 + {\vl{w}}_j^2$ or not, for $j = 1, \ldots, M$.
	}
	\justify 
	Repeat the following $\csec$ times, in parallel. 
	\begin{enumerate}[itemsep=0mm]
		\item Sample random values $\tau_1,\ldots, \tau_M \in \Z{\ell}$.
		\item Locally compute: $P_0, P_1: \vl{e}^1 = \sum_{j=1}^{M} \tau_j \vl{w}_j^1$; $P_0, P_2: \vl{e}^2 = \sum_{j=1}^{M} \tau_j \vl{w}_j^2$; $P_0, P_3: \vl{e} = \sum_{j=1}^{M} \tau_j \vl{w}_j$.
		\item $(P_0,P_1)$, $(P_0,P_2)$ and $(P_0,P_3)$ generate $\shr{\cdot}$-shares of $ \vl{e}^1,  \vl{e}^2$ and  $\vl{e}$ respectively using $\piJSh$. 
		\item Locally compute $\shr{\vl{g}} = \shr{\vl{e}} - \shr{\vl{e}^1} - \shr{\vl{e}^2}$. 
		\item Robustly reconstruct $\vl{g}$ and check if $\vl{g} \iseq 0$. 
	\end{enumerate}     
	If for all $\csec$ repetitions, $\vl{g} = 0$, then continue with rest of the computation. Else, $P_0$ is identified to be corrupt and $\TTP = P_1$.
\end{protocolbox}
\vspace{-3mm}

The robust protocol can be optimized further if cheating is detected ($\abort$ signal is generated) in the preprocessing phase. Concretely, this can be identified in the preprocessing phase either from the verification of $\jsend$ instances or output of $\piVrfyP$. When such a cheating is detected, the corrupt party is identified as follows. Parties first broadcast their shared keys established in the key-setup phase (cf. \S\ref{app:keysetup}). They recompute all the preprocessing data and verify against the data that was communicated to identify the corrupt party. Note that disclosing the shared keys does not violate input privacy because the preprocessing data is input independent. 
On identifying the corrupt party, it is eliminated from the computation, and a semi-honest 3-party computation is performed from this point onwards.

\subsection{The complete 4PC}
\label{sec:complete4pc}
The above primitives can be compiled to compute an arithmetic circuit over $\Z{\ell}$ as follows. 

Parties first invoke the key-setup functionality $\FSETUP$ (\boxref{fig:FSETUP}) for key distribution, and preprocessing of input sharing ($\piSh$) and multiplication ($\piMult$), as per the given circuit. This generates the masks ($\pad{}{}$) for all the wires in the circuit as per the sharing semantics. The preprocessing for linear gates can be performed non-interactively.
The verification of all the protocols is executed before moving on to the online phase. 

During the online phase,  $P_i \in \Partyset$ shares its input $\vl{x_i}$ by executing online steps of $\piSh$ (\boxref{fig:piSh}). Parties then evaluate the gates in the circuit in the topological order, with linear gates being computed locally, and multiplication gates being  computed via online phase of $\piMult$ (\boxref{fig:piMultiplication}). Finally, $\piRec$ (\boxref{fig:piRec}) is executed for the output wires to reconstruct the function output.

\subsection{Supporting on-demand computations}
\label{p:nopre}
For on-demand applications where the underlying function to be computed is not known in advance, the preprocessing model is not desirable. We observe that the $\this$ protocol can be modified by executing the preprocessing phase in the online phase itself, keeping the same overall communication cost. The formal protocol appears in \boxref{fig:piMultNoPre}.

\section{Mixed Protocol Framework}
\label{sec:4pcMixFrame}

In the applications we consider, the garbled circuit is used as an intermediary to evaluate certain functions where the input to the function as well as the output are in $\shr{\cdot}$-shared (or $\shrB{\cdot}$-shared) form. For this, we design end-to-end conversions which are of the form ``$\sf{x}$-Garbled-$\sf{x}$'' where $\sf{x}$ can be either arithmetic or boolean.

Similar to Trident~\cite{NDSS:ChaRacSur20}, we design a fair GC world, using techniques from~\cite{CCS:MohRosZha15}, that requires communicating 1 GC and 2 rounds for end-to-end conversions. We further extend it to provide robustness without inflating the cost. Due to its close resemblance to Trident, the details are deferred to \S\ref{subsubsec:gcworld1}.
We observe that the online rounds for end-to-end conversions can be further reduced to 1 at the expense of communicating one more GC in a parallel execution. Note that a similar approach of using 2 parallel executions in Trident does not lead to obtaining a 1-round conversion due to their protocol design and reliance on piece-wise conversions. A high-level comparison is provided in \tabref{4PCConvMall}, and more details are deferred to \S\ref{app:mixed}.

When compared to the standalone protocol of~\cite{CCS:MohRosZha15}, the customized fair GC protocol for mixed framework eliminates the need for commitments to ensure input consistency and explicit input sharing and output reconstruction phases. For robustness, the standalone GC protocols of ~\cite{C:IKKP15} requires communicating 12 GCs in 2 rounds while~\cite{CCS:BJPR18} communicates 2 GCs in 4 rounds. On the other hand, the robust variant in this work requires communicating 2 GC in 1 round. Moreover, these protocols leverage the benefit of amortization which comes from using $\jsend$.

\begin{table}[htb!]
	\centering
	\resizebox{0.48\textwidth}{!}{
		\begin{NiceTabular}{rrrrr}
			\toprule 
			Protocol\tabularnote{Notations: $\ell$ - size of ring in bits, $\kappa$ - computational security parameter.}
			& Reference 
			& \Block[r]{}{Communication\tabularnote{Cost of GC is omitted, see Tables \ref{tab:4PCConvM2}, \ref{tab:4PCConvM1} for more details.} \\ (Preprocessing)}
			& \Block[r]{}{Rounds \\ (Online)}
			& \Block[r]{}{Communication \\ (Online)} \\
			\midrule
			\Block[r]{2-1}{2 GC \\variant}
			& Trident          & \Block[r]{2-1}{$6\ell \kappa + \ell$}
			& $2$ & $4\ell \kappa + 2\ell$\\ 
			& \this & 
			& $1$ & $4\ell \kappa + \ell$\\ 
			\midrule
			\Block[r]{2-1}{1 GC \\variant}
			& Trident          & \Block[r]{2-1}{$3\ell \kappa + \ell$}
			& $2$ & $2\ell \kappa + 3\ell$\\ 
			& \this & 
			& $2$ & $2\ell \kappa + 2\ell$\\ 
			\bottomrule
		\end{NiceTabular}
	}
	\vspace{-1mm}
	\caption{\small End-to-end conversions in Trident~\cite{NDSS:ChaRacSur20} and $\this$.}\label{tab:4PCConvMall}
	\vspace{-4mm}
\end{table}

Leveraging an honest majority among the garblers and using $\jsend$, we only need semi-honest GC computation to get active security. Moreover, the state-of-the-art GC optimizations of free-XOR~\cite{ICALP:KolSch08,C:KolMohRos14}, half gates~\cite{EC:ZahRosEva15,JC:GLNP18}, and fixed AES-key~\cite{SP:BHKR13} are deployed in our protocol.

\subsection{GC for mixed protocol framework}
\label{p:gcworld2}
The 2 GC variant has two parallel executions, each comprising of 3 garblers and 1 evaluator. $P_1, P_2$ act as evaluators in two independent executions and the parties in $\PlSet{1} = \{P_0, P_2, P_3\}$, $\PlSet{2} = \{P_0, P_1 ,P_3\}$ act as garblers, respectively. Note that it suffices for only $P_0, P_3$ to generate and $\jsend$ the GC to the evaluator.

Garbled evaluation proceeds in three phases-- i) Input phase,  ii) Evaluation, and iii) Output phase. The input phase involves transferring the keys to the evaluators for every input to the GC. Note here that the function (to be evaluated via the GC) input is already $\shrB{\cdot}$-shared.  
Since each share of the function input is available with two garblers in each garbling instance, the correct key transfer is ensured via $\jsend$. The evaluation consists of GC transfer followed by GC evaluation.  Lastly, in the output phase, evaluators obtain the encoded output. Preliminary details about the garbling scheme and additional details of the GC protocol are given in \S\ref{app:garbled}. 

\paragraph{Input Phase}
\label{p:gbip}
Given that the function  input $\vl{x}$ is already available as $\shrB{\vl{x}}$, the boolean values $\mk{\vl{x}}, \av{\vl{x}}, \pad{\vl{x}}{3}$, where $\av{\vl{x}} = \pad{\vl{x}}{1} \xor \pad{\vl{x}}{2}$ and $\vl{x} = \mk{\vl{x}} \xor \av{\vl{x}} \xor \pad{\vl{x}}{3}$, act as the {\em new} inputs for the garbled computation, and garbled sharing ($\shrG{\cdot}$) is generated for each of these values. The semantics of $\shrB{\cdot}$-sharing ensures that each of these shares ($\mk{\vl{x}}, \av{\vl{x}}, \pad{\vl{x}}{3}$) is available with two garblers in each garbling instance. The keys for the shares can either be sent (using $\jsend$) correctly to the evaluators or the inconsistency is detected. This key delivery essentially generates  $\shrG{\cdot}$-sharing for each of these three values  which enables GC evaluation. Thus, the goal of our input phase is to create the compound sharing, $\shrC{\vl{x}} = (\shrG{\mk{\vl{x}}}, \shrG{\av{\vl{x}}}, \shrG{\pad{\vl{x}}{3}})$ for every input $\vl{x}$ to the function to be evaluated via the GC. We first discuss the semantics for $\shrG{\cdot}$-sharing followed by steps for generating $\shrC{\cdot}$-sharing.

\paragraph{Garbled sharing semantics}\label{gcsemantics}
A value $\vl{v} \in \Z{}$  is $\shrG{\cdot}$-shared (garbled shared) amongst  $\Partyset$ if $P_i \in \{P_0, P_3\}$ holds $\shrG{\vl{v}}_{i}= (\key{{\vl{v}}}{0,1}, \key{{\vl{v}}}{0,2})$, $P_1$ holds $\shrG{\vl{v}}_{1} = (\key{{\vl{v}}}{\vl{v},1}, \key{{\vl{v}}}{0,2})$ and $P_2$ holds $\shrG{\vl{v}}_{2} = (\key{{\vl{v}}}{0,1}, \key{{\vl{v}}}{\vl{v},2})$. Here, $\key{{\vl{v}}}{\vl{v}, j} = \key{{\vl{v}}}{0, j} \xor \vl{v} \Delta^{j}$ for $j \in \{1, 2\}$, and $\Delta^{j}$, which is known only to the garblers in $\PlSet{j}$, denotes the global offset with its least significant bit set to $1$ and is same for every wire in the circuit. 
A value $\vl{x} \in \Z{}$ is said to be $\shrC{\cdot}$-shared (compound shared) if each value  from $(\mk{\vl{x}}, \av{\vl{x}}, \pad{\vl{x}}{3})$, which are as defined above, is $\shrG{\cdot}$-shared. We write $\shrC{\vl{x}} = (\shrG{\mk{\vl{x}}},\shrG{\av{\vl{x}}},\shrG{\pad{\vl{x}}{3}})$. 

\paragraph{Generation of $\shrG{\vl{v}}$ and $\shrC{\vl{x}}$} 
Protocol $\pigsh(\Partyset, \vl{v})$~(\boxref{fig:pigsh}) enables generation of $\shrG{\vl{v}}$ where two garblers in each garbling instance hold $\vl{v}$, and proceeds as follows. Consider the first garbling instance with evaluator $P_1$ where garblers $P_k, P_l$ hold $\vl{v}$. Garblers in $\PlSet{1}$ generate $\{\key{{{\vl{v}}}}{\bitb, 1}\}_{\bitb \in \{0, 1\}}$ which denotes the key for value $\bitb$ on wire $\vl{v}$, following the free-XOR technique~\cite{ICALP:KolSch08,C:KolMohRos14}. 
$P_k, P_l$ $\jsend$ $\key{{{\vl{v}}}}{\vl{v}, 1}$ to evaluator $P_1$. Similar steps carried out with respect to the second garbling instance, at the end of which, garblers in $\PlSet{2}$ possess $\{\key{\vl{v}}{\bitb, 2}\}_{\bitb \in \{0,1\}}$ while the evaluator $P_2$ holds $\key{\vl{v}}{\vl{v}, 2}$. Following this, the shares $\shrG{\vl{v}}_s$ held by $P_s \in \Partyset$ are defined as $\shrG{\vl{v}}_0 = \shrG{\vl{v}}_3 = (\key{\vl{v}}{0, 1}, \key{\vl{v}}{0, 2})$, $\shrG{\vl{v}}_1 = (\key{\vl{v}}{\vl{v}, 1}, \key{\vl{v}}{0, 2})$, $\shrG{\vl{v}}_2 = (\key{\vl{v}}{0, 1}, \key{\vl{v}}{\vl{v}, 2})$. 

To generate $\shrC{\vl{x}}$, we need a way to generate  $(\shrG{\mk{\vl{x}}}, \shrG{\av{\vl{x}}}, \shrG{\pad{\vl{x}}{3}})$, given $\shrB{\vl{x}}$. For this, $\pigsh$ is invoked for each of $\mk{\vl{x}}, \av{\vl{x}}, \pad{\vl{x}}{3}$.  

\begin{protocolbox}{$\pigsh(\Partyset, \vl{v})$}{Generation of $\shrG{\vl{v}}$}{fig:pigsh}
	\justify
	\begin{enumerate}[itemsep=0mm]
		\item Garblers in $\PlSet{j}$ for $j \in \{1, 2\}$ generate keys $\key{{\vl{v}}}{0, j}, \key{{\vl{v}}}{1, j}$ for wire $\vl{v}$, using free-XOR technique.
		\item Let $P_k^j, P_l^j$ denote the garblers in the $j^{\text{th}}$ instance, for $j \in \{1, 2\}$, who hold $\vl{v} \in \Z{}$. $P_k^j, P_l^j$ $\jsend$ $\key{\vl{v}}{\vl{v}, j}$ to evaluator $P_j$. 
		\item $P_i \in \{P_0, P_3\}$ sets $\shrG{\vl{v}}_i = (\key{{\vl{v}}}{0,1}, \key{{\vl{v}}}{0,2})$, $P_1$ sets $\shrG{\vl{v}}_{1} = (\key{{\vl{v}}}{\vl{v},1}, \key{{\vl{v}}}{0,2})$ and $P_2$ sets $\shrG{\vl{v}}_{2} = (\key{{\vl{v}}}{0,1}, \key{{\vl{v}}}{\vl{v},2})$.
	\end{enumerate}
\end{protocolbox}
\vspace{-3mm}

\subsection{Conversions involving Garbled World} 
\label{p:conv2gc}
Assume the GC is required to compute a function $f$ on inputs $\vl{x}, \vl{y} \in \Z{\ell}$ and let the output be $f(\vl{x}, \vl{y})$. All the conversions described are for the 2 GC variant. Conversions for the 1 GC variant are straightforward, hence we omit the details. The conversions are generic for fair and robust variants, where the security follows from that of the underlying primitives. 

\medskip
{\em Case I: Boolean-Garbled-Boolean.} Since the inputs to the GC are available in boolean form, say $\shrB{\vl{x}}, \shrB{\vl{y}}$, parties generate $\shrC{\vl{x}}, \shrC{\vl{y}}$ by invoking the garbled sharing protocol $\pigsh$.
Additionally, parties $P_0, P_3$ sample $\vl{R} \in \Z{\ell}$ to mask the function output, $f(\vl{x}, \vl{y})$, and generate $\shrB{\vl{R}}$ (using the joint sharing protocol) and $\shrG{\vl{R}}$. Garblers $P_g \in \{P_0, P_2, P_3\}$ garble the circuit which computes $\vl{z} = f(\vl{x}, \vl{y}) \xor \vl{R}$, and send the GC along with the decoding information to evaluator $P_1$. Analogous steps are performed for evaluator $P_2$. Upon GC evaluation and output decoding, evaluators obtain $\vl{z} = f(\vl{x}, \vl{y}) \xor \vl{R}$, and jointly boolean share $\vl{z}$ to generate $\shrB{\vl{z}}$. Parties then compute $\shrB{f(\vl{x}, \vl{y})} = \shrB{\vl{z}} \xor \shrB{\vl{R}}$.  

\medskip
{\em Case II: Boolean-Garbled-Arithmetic.} This is similar to {\em Case I} except that the circuit which computes $\vl{z} = f(\vl{x}, \vl{y}) + \vl{R}$ is garbled instead. Boolean sharing of $\vl{z}$ is replaced with arithmetic, followed by computing $\shr{f(\vl{x}, \vl{y})} = \shr{\vl{z}} - \shr{\vl{R}}$.

\medskip
{\em Cases III \& IV: Input in Arithmetic Sharing.} 
The function to be computed $f(\vl{x}, \vl{y})$, is modified as $f^{\prime}(\mk{\vl{x}}, \av{\vl{x}}, \pad{\vl{x}}{3}, \mk{\vl{y}}, \av{\vl{y}}, \pad{\vl{y}}{3}) = f(\mk{\vl{x}}-\av{\vl{x}}-\pad{\vl{x}}{3}, \mk{\vl{y}}-\av{\vl{y}}-\pad{\vl{y}}{3})$ where inputs $\vl{x}, \vl{y}$ are replaced by the triples $\{\mk{\vl{x}}, \av{\vl{x}}, \pad{\vl{x}}{3}\}, \{\mk{\vl{y}}, \av{\vl{y}}, \pad{\vl{y}}{3}\}$ and $\av{\vl{x}} = \pad{\vl{x}}{1} + \pad{\vl{x}}{2}$ and $\av{\vl{y}} = \pad{\vl{y}}{1} + \pad{\vl{y}}{2}$. The circuit to be garbled thus, corresponds to the function $f^{\prime}$. Parties generate $\shrG{\mk{\vl{x}}}, \shrG{\av{\vl{x}}}, \shrG{\pad{\vl{x}}{3}}, \allowbreak \shrG{\mk{\vl{y}}}, \shrG{\av{\vl{y}}}, \shrG{\pad{\vl{y}}{3}}$ via $\pigsh$, following which, parties proceed with the rest of the computation whose steps are similar to {\em Case I}, and {\em II}, depending on the requirement on the output sharing.

\subsection{Other Conversions} 
\label{sec:otherconv}

\paragraph{Arithmetic to Boolean} To convert arithmetic sharing of $\vl{v} \in \Z{\ell}$ to boolean sharing, observe that $\vl{v} = \vl{v}_1 + \vl{v}_2$ where $\vl{v}_1 = \mk{\vl{v}} - \pad{\vl{v}}{3}$ is possessed by parties $P_1, P_2$, while $\vl{v}_2 = -(\pad{\vl{v}}{1} + \pad{\vl{v}}{2})$ is possessed by parties $P_0, P_3$. Thus, $\shrB{\vl{v}}$  can be computed as $\shrB{\vl{v}} = \shrB{\vl{v}_1} + \shrB{\vl{v}_2}$, where $\shrB{\vl{v}_2}$ can be generated in the preprocessing phase, and $\shrB{\vl{v}_1}$ can be generated in the online phase by the respective parties executing joint boolean sharing protocol. The protocol appears in \boxref{fig:piab}. Boolean addition, when instantiated using the adder of ABY2.0~\cite{USENIX:PSSY21}, requires $\log_4(\ell)$ rounds.

\paragraph{Boolean to Arithmetic} To convert a boolean sharing of $\vl{v}$ into an arithmetic sharing, we use techniques from~\cite{NDSS:ChaRacSur20,USENIX:KPPS21}. For a value $\vl{v} \in \Z{\ell}$, note that 

	\begin{align*}
		\vl{v} &= \sum_{i=0}^{\ell - 1} 2^{i} \vl{v}_i = \sum_{i=0}^{\ell - 1} 2^{i} ({\pd{\vl{v}}}_i \xor {\mk{\vl{v}}}_i) \\
		&= \sum_{i=0}^{\ell - 1} 2^{i} \left( \arval{{\mk{\vl{v}}}_i} +  \arval{{\pd{\vl{v}}}_i} (1 - 2\arval{{\mk{\vl{v}}}_i}) \right) 
	\end{align*}

where $\arval{{\pd{\vl{v}}}_i}, \arval{{\mk{\vl{v}}}_i}$ denote the arithmetic value of bits ${\pd{\vl{v}}}_i, {\mk{\vl{v}}}_i$ over the ring $\Z{\ell}$. 
For each bit $\vl{v}_i$ of $\vl{v}$, parties generate the arithmetic sharing of $\arval{{{\pd{\vl{v}}}}_i}$ in the preprocessing, using techniques from bit to arithmetic protocol~(cf.~\S\ref{sec:4pcTools}). During the online phase, additive shares for each bit $\vl{v}_i$ is locally computed similar to bit to arithmetic protocol. Parties then multiply the $i$th share with $2^i$ and locally add up to obtain an additive sharing of $\vl{v}$. The rest of the steps are similar to the bit to arithmetic protocol, and the formal protocol appears in \boxref{fig:piba}.

\section{Building Blocks}
\label{sec:4pcTools}
This section covers the primitives needed for realising privacy-preserving variants of the applications considered, and elaborate details appear in \S\ref{app:tools}. The building blocks can be combined to construct different layers in a neural network, as shown in \cite{ASIACCS:RWTSSK18} (Fig. 3).

\paragraph{Dot Product~(Scalar Product)}
\label{p:4pcDotp}
Given $\shr{\vct{a}}, \shr{\vct{b}}$ with $|\vct{a}| = |\vct{b}| = \vl{d}$, protocol $\prot{\dotp}$~(\boxref{fig:piDotP}) computes $\shr{\vl{z}}$ such that $\vl{z} = (\vct{a} \band \vct{b})^{\vl{t}}$ if truncation is enabled, else $\vl{z} = \vct{a} \band \vct{b}$. Following~\cite{NDSS:ChaRacSur20,USENIX:KPPS21}, we combine the partial products from the multiplication protocol across $\vl{d}$ multiplications and communicate them in a single shot. This makes the communication cost of the dot product independent of the vector size. The protocol for robust setting follows similarly. 

Matrix multiplication is an extension of the dot product protocol. We abuse notation and follow the $\shr{\cdot}$-sharing semantics~(ref. \secref{4pcPrelim}) for matrices as well. For $\Mat{X}^{u\times v}$, we have $\mk{\Mat{X}} = \Mat{X} \bigoplus \sqr{\pad{\Mat{X}}{1}} \bigoplus \sqr{\pad{\Mat{X}}{2}} \bigoplus \sqr{\pad{\Mat{X}}{3}}$. Here $\mk{\Mat{X}}$, $\sqr{\pad{\Mat{X}}{1}}$, $\sqr{\pad{\Mat{X}}{2}}$, and $\sqr{\pad{\Mat{X}}{3}}$ are matrices of dimension $u \times v$, and $\bigoplus$ denote the matrix addition operation. Looking ahead $\bigominus, \MatMul$ will  be used to denote matrix subtraction and multiplication operation, respectively. Multiplication of two matrices, $\Mat{X}^{u\times v}$, $\Mat{Y}^{v\times w}$ is a collection of $uw$ independent dot product operations over vectors of length $v$. 

\medskip
\begin{protocolbox}{$\prot{\dotp}(\vct{a}, \vct{b}, \isTr)$}{Dot Product with / without Truncation.}{fig:piDotP}
	Let $\isTr$ be a bit that denotes whether truncation is required ($\isTr =1$) or not~($\isTr=0$). \\
	\detail{
		{\bf Input(s):} $\shr{\vct{a}}, \shr{\vct{b}}$.\\
		{\bf Output:} $\shr{\vl{o}}$ where $\vl{o} = \vl{z}^{\vl{t}}$ if $\isTr = 1$ and $\vl{o} = \vl{z}$ if $\isTr = 0$ and $\vl{z} = \vct{a} \band \vct{b} = \sum_{i = 1}^{\vl{d}} {\vl{a}}_i {\vl{b}}_i$.
	}
	\justify 
	\vspace{-2mm}
	\algoHead{Preprocessing:} 
	\begin{enumerate}[itemsep=0mm]
		\item Locally compute:
		\begin{align*}
			P_0, P_1: \gm{\vct{a}\vct{b}}{1} &= \sum_{i=1}^{\vl{d}} (\pad{\vl{a}_i}{1} \pad{\vl{b}_i}{3} + \pad{\vl{a}_i}{3} \pad{\vl{b}_i}{1} + \pad{\vl{a}_i}{3} \pad{\vl{b}_i}{3})\\
			P_0, P_2: \gm{\vct{a}\vct{b}}{2} &= \sum_{i=1}^{\vl{d}} (\pad{\vl{a}_i}{2} \pad{\vl{b}_i}{3} + \pad{\vl{a}_i}{3} \pad{\vl{b}_i}{2} + \pad{\vl{a}_i}{2} \pad{\vl{b}_i}{2})\\
			P_0, P_3: \gm{\vct{a}\vct{b}}{3} &= \sum_{i=1}^{\vl{d}} (\pad{\vl{a}_i}{1} \pad{\vl{b}_i}{2} + \pad{\vl{a}_i}{2} \pad{\vl{b}_i}{1} + \pad{\vl{a}_i}{1} \pad{\vl{b}_i}{1})
		\end{align*}
		\item $P_0, P_3$ and $P_j$ sample random ${\vl{u}}^j \in_R \Z{\ell}$ for $j \in \{1,2\}$. Let ${\vl{u}^1} + \vl{u}^2 = \gm{\vct{a}\vct{b}}{3} - \vl{r}$ for a random $\vl{r} \in_R \Z{\ell}$.  
		\item $P_0, P_3$ compute $\vl{r} = \gm{\vct{a}\vct{b}}{3} - {\vl{u}^1} - \vl{u}^2$ and set $\vl{q} = \vl{r}^{\vl{t}}$  if $\isTr = 1$, else set $\vl{q} = \vl{r}$. $P_0, P_3$ execute $\piJSh(P_0, P_3, \vl{q})$ to generate $\shr{\vl{q}}$.
		\item  $P_0, P_1, P_2$ sample random ${\vl{s}}_1, {\vl{s}}_2 \in_R \Z{\ell}$ and set ${\vl{s}} = {\vl{s}}_1 + {\vl{s}} _2$\footnote{For the fair protocol, it is enough for $P_0, P_1, P_2$ to sample ${\vl{s}}$ directly.}. 
		$P_0$ sends $\vl{w} = \gm{\vct{a}\vct{b}}{1} + \gm{\vct{a}\vct{b}}{2} + {\vl{s}}$ to $P_3$.
	\end{enumerate}
	\justify
	\vspace{-2mm}
	\algoHead{Online:} Let $\vl{y} = (\vl{z} - \vl{r}) - \sum_{i=1}^{\vl{d}} \mk{\vl{a}_i} \mk{\vl{b}_i}$.
	\begin{enumerate}[itemsep=0mm]
		\item Locally compute:
		\begin{align*}
			P_1: \vl{y}_1 &= \sum_{i=1}^{\vl{d}} (- \pad{\vl{a}_i}{1} \mk{\vl{b}_i} - \pad{\vl{b}_i}{1} \mk{\vl{a}_i}) + \gm{\vct{a}\vct{b}}{1} + {\vl{u}}^1\\
			P_2: \vl{y}_2 &= \sum_{i=1}^{\vl{d}} (- \pad{\vl{a}_i}{2} \mk{\vl{b}_i} - \pad{\vl{b}_i}{2} \mk{\vl{a}_i}) + \gm{\vct{a}\vct{b}}{2} + {\vl{u}}^2\\
			P_1, P_2: \vl{y}_3 &= \sum_{i=1}^{\vl{d}} (- \pad{\vl{a}_i}{3} \mk{\vl{b}_i} - \pad{\vl{b}_i}{3} \mk{\vl{a}_i})
		\end{align*}
		\item $P_1$ sends $\vl{y}_1$ to $P_2$, while $P_2$ sends $\vl{y}_2$ to $P_1$, and they locally compute $\vl{z} - \vl{r} = (\vl{y}_1 + \vl{y}_2 + \vl{y}_3) + \sum_{i=1}^{\vl{d}} \mk{\vl{a}_i} \mk{\vl{b}_i}$.
		\item If $\isTr = 1$, $P_1, P_2$ set $\vl{p} = (\vl{z} - \vl{r})^{\vl{t}}$, else $\vl{p} = \vl{z} - \vl{r}$. $P_1, P_2$ execute $\prot{\JSh}(P_1, P_2, \vl{p})$ to generate $\shr{\vl{p}}$. 
		\item Parties locally compute $\shr{\vl{o}} = \shr{\vl{p}} + \shr{\vl{q}}$. Here $\vl{o} = \vl{z}^{\vl{t}}$ if $\isTr = 1$ and $\vl{z}$ otherwise.
		\item {\em Verification:} $P_3$ computes $\vl{v} = \sum_{i=1}^{\vl{d}} (- (\pad{\vl{a}_i}{1} + \pad{\vl{a}_i}{2}) \mk{\vl{b}_i} - (\pad{\vl{b}_i}{1} + \pad{\vl{b}_i}{2}) \mk{\vl{a}_i})+ {\vl{u}^1} + \vl{u}^2 + \vl{w}$ and sends $\Hash(\vl{v})$ to $P_1$ and $P_2$. Parties $P_1, P_2$ $\abort$ iff $\Hash(\vl{v}) \ne \Hash(\vl{y}_1 + \vl{y}_2 + {\vl{s}})$.
	\end{enumerate}     
\end{protocolbox}

In a convolutional neural network, a convolution operation can be reduced to matrix multiplications~\cite{ConvStanford,USENIX:KPPS21} as follows. Consider an $f \times f$ kernel over a $w \times h$ input with $p \times p$ padding using $s \times s$ stride having $i$ input channels and $o$ output channels. A convolution can be computed as a matrix multiplication on matrices of dimension $(w^{\prime} \cdot h^{\prime}) \times (i \cdot f \cdot f)$ and $(i \cdot f \cdot f) \times (o)$ where $w^{\prime} = \dfrac{w-f+2p}{s} + 1$ and $h^{\prime} = \dfrac{h-f+2p}{s} + 1$.

\paragraph{Multi-input Multiplication}
\label{p:4pcMultiinputfair}
Inspired from ABY2.0~\cite{USENIX:PSSY21}, we design 3-input and 4-input multiplication protocols for our setting. We remark that the multi-input multiplication, when coupled with the optimized PPA circuit from~\cite{USENIX:PSSY21}, improves the rounds as well as communication in the online phase.

The goal of 3-input multiplication is to generate $\shr{\cdot}$-sharing of $\vl{z} = \vl{a} \vl{b} \vl{c}$ given $\shr{\vl{a}}, \shr{\vl{b}}, \shr{\vl{c}}$, without the need for performing two sequential multiplications (i.e. first $\vl{a} \vl{b}$ then $\vl{a} \vl{b} \vl{c}$). For this parties proceed similar to the multiplication protocol (see \S\ref{sec:new4pc}), where they compute $\shr{\vl{z}} = \shr{\vl{z} - \vl{r}} + \shr{\vl{r}}$. Observe that 

\begin{align*} 
	\vl{z} - \vl{r}
	&= \vl{a} \vl{b} \vl{c} - \vl{r} = (\mk{\vl{a}} - \pd{\vl{a}})(\mk{\vl{b}} - \pd{\vl{b}})(\mk{\vl{c}} - \pd{\vl{c}}) - \vl{r} \\
	&= \mk{\vl{a}\vl{b}\vl{c}} - \mk{\vl{a}\vl{c}} \pd{\vl{b}} - \mk{\vl{b}\vl{c}} \pd{\vl{a}} - \mk{\vl{a}\vl{b}} \pd{\vl{c}} + \mk{\vl{a}} \gm{\vl{b} \vl{c}}{} +  \mk{\vl{b}} \gm{\vl{a} \vl{c}}{} \\
	&~~~~+ \mk{\vl{c}} \gm{\vl{a} \vl{b}}{} - \gm{\vl{a} \vl{b} \vl{c}}{} - \vl{r}
\end{align*}

Similar to the 2-input fair multiplication $\piMult$~(\boxref{fig:piMultiplication}), the goal of the preprocessing phase is to generate additive shares of $\gm{\vl{ab}}{}, \gm{\vl{ac}}{}, \gm{\vl{bc}}{},  \gm{\vl{abc}}{}$ among $P_1, P_2$.

Informally, the terms that $P_1, P_2$ cannot compute locally for the aforementioned $\gm{}{}$ values, can be computed by $P_0, P_3$, as evident from our sharing semantics. $P_0, P_3$ compute the missing terms and share them among $P_1, P_2$ in the preprocessing phase. $P_1, P_2$ proceed with online phase similar to $\piMult$, to compute $\vl{z} - \vl{r}$. Thus the online complexity is retained as that of $\piMult$ while the preprocessing communication is increased to 9 elements. The protocol appears in \boxref{fig:piMultT}.

For the 4-input case, the goal is to compute $\vl{z}= \vl{abcd}$ for which the additive shares of $\gm{\vl{ab}}{}$, $\gm{\vl{ac}}{}$, $\gm{\vl{ad}}{}$, $\gm{\vl{bc}}{}$, $\gm{\vl{bd}}{}$, $\gm{\vl{cd}}{}$, $\gm{\vl{abc}}{}$, $\gm{\vl{acd}}{}$, $\gm{\vl{bcd}}{}$, $\gm{\vl{abcd}}{}$ needs to be generated in the preprocessing. The protocol is very similar to the 3-input case, and the details are deferred to \S\ref{app:tools}. 

\paragraph{Secure Comparison}
\label{p:4pcBitExt}
To compute $\vl{a} > \vl{b}$ in the FPA representation, given its $\shr{\cdot}$-sharing, $\prot{\bitext}$ uses the technique of extracting the most significant bit~($\msb$) of the value $\vl{v} = \vl{a} - \vl{b}$~\cite{CCS:MohRin18, NDSS:PatSur20, USENIX:KPPS21}.  
To compute the $\msb$, we use two variants - i) the communication optimized parallel prefix adder~(PPA) circuit from ABY3~\cite{CCS:MohRin18} ($2 (\ell - 1)$ AND gates, $\log \ell$ depth), and ii) the round optimized bit extraction circuit from ABY2~\cite{USENIX:PSSY21}. The circuit of ABY2 uses multi-input AND gates and has a multiplicative depth of $\log_4(\ell)$. Both these circuits take two $\ell$-bit values in boolean sharing as the input and outputs the result in boolean sharing form. Note that $\vl{v} = (\mk{\vl{v}} - \pad{\vl{v}}{3}) + (- \pad{\vl{v}}{1} - \pad{\vl{v}}{2})$ as per the sharing semantics~(cf.~\tabref{sharing}). $P_0, P_3$ execute $\protB{\JSh}$ on $(- \pad{\vl{v}}{1} - \pad{\vl{v}}{2})$ during the preprocessing, while $P_0, P_3$ execute $\protB{\JSh}$ on $(\mk{\vl{v}} - \pad{\vl{v}}{3})$ during the online phase to generate the respective boolean sharing.

\paragraph{Bit to Arithmetic}
\label{p:4pcBit2A}
Protocol $\prot{\bitA}$~(\boxref{fig:piBitA}) enables computing $\shr{\bitb}$ of a bit $\bitb$ given its boolean sharing $\shrB{\bitb}$. Let $\arval{\bitb}$ denotes the value of $\bitb \in \bitset$ over the arithmetic ring $\Z{\ell}$. Then for $\bitb = \bitb_1 \xor \bitb_2$, note that $\arval{\bitb} = (\arval{\bitb_1} - \arval{\bitb_2})^2$. 

Let $\bitb_1 = \mk{\vl{\bitb}} \xor \pad{\vl{v}}{3}$ and $\bitb_2 = \pad{\vl{v}}{1} \xor \pad{\vl{v}}{2}$. To compute $\shr{\bitb}$, a pair of parties can generate the arithmetic sharing corresponding to $\arval{\bitb_1}$ and $\arval{\bitb_2}$ by executing $\prot{\JSh}$. $\shr{\bitb}$ can be computed by invoking $\piMult$ once with inputs $\vl{x} = \vl{y} = \arval{\bitb_1}- \arval{\bitb_2}$.

Using the techniques from~\cite{NDSS:ChaRacSur20, USENIX:KPPS21}, we obtain a communication-optimized variant by trading off computation in the preprocessing. For this, note that 

\begin{equation}
	\arval{\bitb} = \arval{(\mk{\bitb} \xor \pad{\bitb}{})} = \arval{\mk{\bitb}} + \arval{(\pad{\bitb}{})}(1-2\arval{\mk{\bitb}})
\end{equation} 

Let $\vl{v} = \arval{\mk{\bitb}}$ and $\vl{u} = \arval{(\pad{\bitb}{})}$. During the preprocessing, $P_0$ generates $\sgr{\cdot}$-sharing of $\vl{u}$ and a check is executed to verify its correctness. The online phase consists of each pair of parties $(P_1, P_3)$, $(P_2, P_3)$ and $(P_1, P_2)$ locally computing an additive sharing of $\arval{\bitb}$, generating the corresponding $\shr{\cdot}$-sharing using $\prot{\JSh}$, and locally adding the shares to obtain $\shr{\bitb}$.  

\paragraph{Bit Injection}
\label{p:4pcBitInj}

Protocol $\prot{\bitinj}$ enables computing $\shr{\bitb \vl{v}}$, given the boolean sharing $\shrB{\bitb}$ of a bit $\bitb$ and the arithmetic sharing $\shr{\vl{v}}$ of a value $\vl{v} \in \Z{\ell}$. Similar to $\prot{\bitA}$, 

\begin{align*}
	\arval{(\bitb\vl{v})} &= \arval{(\mk{\bitb} \xor \pad{\bitb}{})}(\mk{\vl{v}} - \pd{\vl{v}}{}) \\
	&= (\arval{\mk{\bitb}} + \arval{(\pad{\bitb}{})}(1-2\arval{\mk{\bitb}}))(\mk{\vl{v}} - \pd{\vl{v}}{})\\
	&= \arval{\mk{\bitb}}\mk{\vl{v}} - \arval{\mk{\bitb}}\pd{\vl{v}}{} + (2\arval{\mk{\bitb}} - 1)(\arval{(\pad{\bitb}{})}\pd{\vl{v}}{} - \mk{\vl{v}}\arval{(\pad{\bitb}{})})
\end{align*} 

During preprocessing, $P_0$ generates $\sgr{\cdot}$-sharing of $\arval{\pd{\bitb}}$, followed by verifying its correctness, similar to $\prot{\bitA}$. $\sgr{\cdot}$-shares of $\arval{(\pad{\bitb}{})}\pd{\vl{v}}{}$ are generated by multiplying $\sgr{\arval{(\pad{\bitb}{})}}$ and $\sgr{\pd{\vl{v}}{}}$ using $\piMultsgr$ (\boxref{fig:piMultsgr}). 
In the online phase, each pair of parties $(P_1, P_3)$, $(P_2, P_3)$ and $(P_1, P_2)$ locally compute an additive sharing of $\arval{(\bitb\vl{v})}$, generate its $\shr{\cdot}$-sharing using $\prot{\JSh}$, and locally add these shares to generate $\shr{\arval{(\bitb\vl{v})}}$.

\paragraph{Oblivious Selection}
\label{pa:4pcObvselect}
Given $\shr{\cdot}$-shares of $\vl{x}_0, \vl{x}_1 \in \Z{\ell}$ and $\shrB{\bitb}$ where $\bitb \in \bitset$, oblivious selection~($\piobv$) enables parties to generate re-randomized $\shr{\cdot}$-shares of $\vl{z} = \vl{x}_{\bitb}$. The protocol is similar in spirit to Oblivious Transfer primitive. Note that $\vl{z}$ can be written as $\vl{z} = \bitb (\vl{x}_1 - \vl{x}_0) + \vl{x}_0$. Parties invoke $\prot{\bitinj}$ to compute $\shr{\bitb (\vl{x}_1 - \vl{x}_0)}$, and sum it with $\shr{\vl{x}_0}$ to generate $\shr{\vl{z}}$.

\paragraph{Piece-wise Polynomials}
\label{p:4pcPiecewise}
Piece-wise polynomial functions  are constructed as a series of constant public polynomials $f_1, \ldots, f_m$ and $ c_1 < \ldots < c_m$ such that, 

\begin{footnotesize}
\begin{align*}
	f(y) = 
	\begin{cases}
		0, & y< c_1 \\
		f_1, & c_1 \leq y < c_2 \\
		\ldots & \\
		f_m, & c_m \leq y
	\end{cases}
\end{align*}
\end{footnotesize}

$f$ can be computed as, $f(y) = \sum_{i=1}^{m} \bitb_i \cdot (f_{i} - f_{i-1})$, where $f_0 = 0$, $f_m = 1$, and $\bitb_i = 1$ if $y  \geq c_i$ and $0$ otherwise, for $i \in \{1, \ldots, m\}$.
Given the $\shr{\cdot}$-shares of $y$, one can obtain the $\shrB{\cdot}$-shares of the bits $\bitb_1, \ldots, \bitb_m$ using secure comparison. Shares of the product terms, $\bitb_i \cdot (f_{i} - f_{i-1})$, can thus be generated by invoking $m$ $\prot{\bitinj}$, followed by a local addition. 
A naive application of $\prot{\bitinj}$ involves sharing (via $\prot{\JSh}$) additive shares of $\bitb_i \cdot (f_{i} - f_{i-1})$, thereby requiring $m$ $\prot{\JSh}$ in the online phase. Instead, it can be made independent of $m$ by first computing additive shares of $f(y)$, and then invoking one $\prot{\JSh}$.

Non-linear activation functions, such as Rectified Linear Unit and Sigmoid, can be viewed as instantiations of piece-wise polynomial functions as shown in ABY3~\cite{CCS:MohRin18}.

\paragraph{ArgMin/ ArgMax}
\label{p:4pcArgMinMaz}
Protocol $\piargmin$ (\boxref{fig:piargmin}) allows parties to compute the index of the smallest element in a vector $\vct{x} = (\vl{x}_1, \ldots, \vl{x}_m)$ of $m$ elements, where $\vct{x}$ is $\shr{\cdot}$-shared, i.e. each element $\vl{x}_i \in \Z{\ell}$ of $\vct{x}$ is $\shr{\cdot}$-shared. The protocol outputs a $\shrB{\cdot}$-shared bit vector $\vct{b}$ of size $m$ which has a $1$ at the index associated with the minimum value in $\vct{x}$, and $0$ elsewhere. 
We follow the standard tree-based approach~\cite{SP:DEFKSV19} to recursively find the minimum value in $\vct{x}$ while also updating $\vct{b}$ to reflect the index of this smallest element. Each bit of $\vct{b}$ is initialized to 1. The elements of $\vct{x}$ are grouped into pairs and securely compared to find their pairwise minimum. Using this information, $\vct{b}$ is updated such that $\bitb_j$'s are reset to $0$ for $\vl{x}_j$'s $\in \vct{x}$ which do not form the minimum in their respective pair; the other bits in $\vct{b}$ still equal $1$. The protocol recurses on the remaining elements $\vl{x}_j \in \vct{x}$, which were the pairwise minimums. Eventually, only one $\bitb_j \in \vct{b}$ equals $1$, indicating that $\vl{x}_j$ is the minimum, with index $j$. Computing $\piargmax$ can be done similarly.
\section{Implementation and Benchmarking}
\label{sec:4pcImplementation}
We benchmark training and inference phases for deep NNs with varying parameter sizes and the inference phase for Support Vector Machines (SVM) using MNIST~\cite{MNIST10} and CIFAR-10~\cite{CIFAR10} dataset. Training phase of SVM requires additional tools and primitives, and is out of scope of this work. Benchmarks of the protocols are against the state-of-the-art 4PC of Trident \cite{NDSS:ChaRacSur20} and SWIFT \cite{USENIX:KPPS21} 4PC (supports only inference).

\paragraph{Benchmarking Environment Details} 
The protocols are benchmarked over a Wide Area Network (WAN), instantiated using n1-standard-64 instances of Google Cloud\footnote{\url{https://cloud.google.com/}}, with machines located in East Australia ($P_0$), South Asia ($P_1$), South East Asia ($P_2$), and West Europe ($P_3$). The machines are equipped with 2.0 GHz Intel (R) Xeon (R) (Skylake) processors supporting hyper-threading, with 64 vCPUs, and 240 GB of RAM Memory. Parties are connected by pairwise authenticated bidirectional synchronous channels (e.g., instantiated via TLS over TCP/IP). We use a bandwidth of $40$ MBps between every pair of parties and the average round-trip time ($\rtt$)\footnote{Time for communicating 1 KB of data between a pair of parties} values among $P_0$-$P_1$, $P_0$-$P_2$, $P_0$-$P_3$, $P_1$-$P_2$, $P_1$-$P_3$, and $P_2$-$P_3$ are $153.74 ms$, $93.39 ms$, $274.84 ms$, $62.01 ms$, $174.15 ms$, and $219.46 ms$ respectively.

For a fair comparison, we implemented and benchmarked all the protocols, including the protocols of Trident and SWIFT, building on the ENCRYPTO library~\cite{ENCRYPTO} in C++17. Primitives such as maxpool, which Trident and SWIFT do not support, have been run using our building blocks. We would like to clarify that our code is developed for benchmarking, is not optimized for industry-grade use, and optimizations like GPU support can further enhance performance. Our protocols are instantiated over a $64$-bit ring ($\Z{64}$), and the collision-resistant hash function is instantiated using SHA-256. We use multi-threading, and our machines are capable of handling a total of 64 threads. Each experiment is run 10 times, and the average values are reported. We use $1$ KB = $8192$ bits and use a batch size of $B = 128$ for training.

\paragraph{Benchmarking Parameters} 
We evaluate the protocols across a variety of parameters as given in \tabref{notations}. In addition to parameters such as runtime, communication, and {\em online throughput} ($\TP$)~\cite{CCS:AFLNO16,SP:ABFLLN17,CCS:MohRin18,NDSS:ChaRacSur20}, the cumulative runtime (sum of the up-time of all the hired servers) is also reported. This is because when deployed over third-party cloud servers, one pays for them by the communication and the uptime of the hired servers. To analyze the cost of deployment of the framework, {\em monetary cost} ($\sf Cost$)~\cite{C:MPRSY20} is reported. This is done using the pricing of Google Cloud Platform\footnote{See \url{https://cloud.google.com/vpc/network-pricing} for network cost and  \url{https://cloud.google.com/compute/vm-instance-pricing} for computation cost.}, where for $1$ GB and $1$ hour of usage, the costs are USD $0.08$ and USD $3.04$, respectively. For protocols with an asymmetric communication graph, communication load is unevenly distributed among all the servers, leaving several communication channels underutilized. Load balancing improves the performance by running several execution threads in parallel, each with the roles of the servers changed. Load balancing has been performed in all the protocols benchmarked.

\vspace{-1mm}
\begin{table}[htb!]
	\centering
	\resizebox{0.48\textwidth}{!}{
		\begin{NiceTabular}{p{1.3cm} l }
			\toprule
			Notation & Description\\
			\midrule 
			${\sf T}_{\sf on,i}$        & Online runtime of party $P_i$.\\
			${\sf T}_{\sf tot,i}$        & Total runtime of party $P_i$.\\
			${\sf PT}_{\sf on}$        & Protocol online runtime; ${\sf max_i} \{ {\sf T}_{\sf on,i} \}$ .\\
			${\sf PT}_{\sf tot}$        & Protocol total runtime;  ${\sf max_i} \{ {\sf T}_{\sf tot,i} \}$ .\\
			${\sf CT}_{\sf on}$         & Cumulative online runtime; $\Sigma_i {\sf T}_{\sf on,i}$ .\\
			${\sf CT}_{\sf tot}$         & Cumulative total runtime; $\Sigma_i {\sf T}_{\sf tot,i}$ .\\
			${\sf Comm}_{\sf on}$    & Online communication.\\
			${\sf Comm}_{\sf tot}$    & Total communication.\\
			${\sf Cost}$                     & Total monetary cost.\\
			$\TP$                              & \makecell[l]{Online throughput (higher = better)\\(\#iterations /  \#queries per minute in online)}\\
			\bottomrule
		\end{NiceTabular}
	}
	\vspace{-1mm}
	\caption{\small Benchmarking parameters (lower is better, except for $\mathsf{TP}$)\label{tab:notations}}
	\vspace{-3mm}
\end{table}

\paragraph{Network Architectures}
We consider the following networks for benchmarking. These were chosen based on the different range of model parameters and types of layers used in the networks. We refer readers to \cite{SP:MohZha17,PoPETS:WTBKMR21} for the architecture and a detailed description of the training and inference steps for the ML algorithms.
\begin{myitemize}
	\item[--] {\em SVM:} Consists of 10 categories for classification~\cite{SP:DEFKSV19}.
	\item[--] {\em NN-1:} Fully connected network with 3 layers and around 118K parameters~\cite{CCS:MohRin18,NDSS:PatSur20}. 
	\item[--] {\em NN-2:} Convolutional neural network comprising of 2 hidden layers, with 100 and 10 nodes~\cite{ASIACCS:RWTSSK18,CCS:MohRin18,NDSS:ChaRacSur20}.
	\item[--] {\em NN-3:} LeNet~\cite{lenet}, comprises of 2 convolutional and fully connected layers, followed by maxpool for convolutional layers. This has approximately 431K parameters.
	\item[--] {\em NN-4:} VGG16~\cite{vgg16} has 16 layers in total and contains fully-connected, convolutional, ReLU activation and maxpool layers. This has $\approx$138 million parameters. 
\end{myitemize}

\paragraph{Datasets} 
We use the following datasets:
\begin{myitemize}
	\item[--] MNIST~\cite{MNIST10} is a collection of 28$\hspace{0.25em}\times\hspace{0.25em}$28 pixel, handwritten digit images with a label between 0 and 9 for each. It has 60,000 and respectively, 10,000 images in training and test set. We evaluate NN-1, NN-3, SVM on this dataset.
	\item[--] CIFAR-10~\cite{CIFAR10} has 32$\hspace{0.25em}\times\hspace{0.25em}$32 pixel images of 10 different classes such as dogs, horses, etc. It has 50,000 images for training and 10,000 for testing, with 6,000 images in each class. NN-2, NN-4 are evaluated on this dataset.
\end{myitemize}

\paragraph{Discussion}
Broadly speaking, we consider two deployment scenarios -- optimized for time~({\sf T}), and for cost~({\sf C}). In the first one, participants want the result of the output as soon as possible while maximizing the online throughput. In the second one, they want the overall monetary cost of the system to be minimal and are willing to tolerate an overhead in the execution time. 
Using multi-input multiplication gates and the 2 GC variant of the garbled makes the online phase faster but incur an increase in monetary cost. This is because they cause an overhead in communication in the preprocessing phase, and communication affects monetary cost more than uptime (in our setting).

$\thisT$ makes use of multi-input multiplication gates and the 2 GC variant of the garbled world and is the fastest variants of the framework. On the other hand, $\thisC$ is the variant with minimal monetary cost. We only report the numbers for the fair variant of $\this$ and not the robust variant. The overhead for the robust variant over the fair one is minimal, and is primarily due to (i) the use of \emph{robust} joint-send primitive and (ii) the augmented one-time verification check at the end of the preprocessing phase. The overhead amortises for deep networks, like the ones considered in this work.

\begin{figure*}[t!]
	\centering
	\begin{subfigure}{.36\textwidth}
		\centering
		\resizebox{.95\textwidth}{!}{
			\begin{tikzpicture}[
				every axis/.style={ 
					ybar stacked,
					ymin=0,ymax=75,
					xtick={1,2,3,4}, xticklabels={NN-1,NN-2,NN-3,NN-4},
					enlarge x limits=0.2,
					cycle list name=exotic, 
					every axis plot/.append style={fill,draw=none,no markers},
					legend style = {anchor = south, legend columns = -1, draw=none, area legend},
					bar width=10pt},]
				
				\begin{axis}[bar shift=-12pt,hide axis, legend style = {at={(0.2, 0.825)}}]
					\addplot+ coordinates
					{(1,8.06) (2,8.13) (3,21.79) (4,72.01)}; 	
					\addlegendentry{{\footnotesize Trident~~~~~~}}
				\end{axis}
				
				\begin{axis}[hide axis, legend style = {at={(0.2, 0.75)}}]
					\addplot+[fill=UniOrange] coordinates
					{(1,1.93) (2,2.05) (3,5.67) (4,25.90)}; 	
					\addlegendentry{{\footnotesize \thisT~~~~~}}
				\end{axis}
				
				\begin{axis}[bar shift=12pt, legend style = {at={(0.2, 0.675)}}]
					\addplot+[fill=UniGruen] coordinates
					{(1,2.55) (2,2.67) (3,8.40) (4,38.35)};  
					\addlegendentry{\footnotesize \thisC~~~~~}
				\end{axis}

			\end{tikzpicture}
		}
		\vspace{-1mm}
		\caption{\footnotesize Online Execution Time~(${\sf PT}_{\sf on}$)}\label{fig:TrainA}
	\end{subfigure}
    \hspace{2mm}
	\begin{subfigure}{.36\textwidth}
		\centering
		\resizebox{.95\textwidth}{!}{
			\begin{tikzpicture}[
				every axis/.style={ 
					ybar stacked,
					ymin=0,ymax=650,
					xtick={1,2,3,4}, xticklabels={NN-1,NN-2,NN-3,NN-4\footnote{scaled down by a factor of $10$ for better visibility}},
					enlarge x limits=0.2,
					cycle list name=exotic, 
					every axis plot/.append style={fill,draw=none,no markers},
					legend style = {anchor = south, legend columns = -1, draw=none, area legend},
					bar width=10pt},]
				
				\begin{axis}[bar shift=-12pt,hide axis, legend style = {at={(0.2, 0.825)}}]
					\addplot+ coordinates
					{(1,49.33) (2,70) (3,331.01) (4,577.927)}; 	
					\addlegendentry{{\footnotesize Trident~~~~~~}}
				\end{axis}
				
				\begin{axis}[hide axis, legend style = {at={(0.2, 0.75)}}]
					\addplot+[fill=UniOrange] coordinates
					{(1,58.51) (2,75.67) (3,343.73) (4,514.610)}; 	
					\addlegendentry{{\footnotesize \thisT~~~~~}}
				\end{axis}
				
				\begin{axis}[bar shift=12pt, legend style = {at={(0.2, 0.675)}}]
					\addplot+[fill=UniGruen] coordinates
					{(1,34.29) (2,49.16) (3,240.41) (4,399.930)};  
					\addlegendentry{\footnotesize \thisC~~~~~}
				\end{axis}

			\end{tikzpicture}
		}
		\vspace{-1mm}
		\caption{\footnotesize Monetary Cost~(${\sf Cost}$)}\label{fig:TrainC}
	\end{subfigure}
	\caption{\small Training of Neural Networks: in terms of ${\sf PT}_{\sf on}$  and ${\sf Cost}$ (lower is better)~(cf.~\tabref{notations})}\label{fig:MLTrain}
	\vspace{-4mm}
\end{figure*}

\subsection{ML Training}
\label{subsec:bench_train}
For training we consider NN-1, NN-2, NN-3 and NN-4 networks. We report values corresponding to one iteration, that comprises of a forward propagation followed by a backward propagation. More details are provided in \S\ref{app:implementation}. 

	\begin{table}[htb!]
		\centering
		\resizebox{0.43\textwidth}{!}{
			\begin{NiceTabular}{r r r r r}
				\toprule
				Algorithm & Parameter & Trident & \thisT & \thisC\\
				\midrule 
				\multirow{4}{*}{NN-1} 
				& ${\sf PT}_{\sf on}$            & 8.06       & 1.93      & 2.55    \\
				& ${\sf PT}_{\sf tot}$           & 10.76      & 5.05      & 5.27    \\
				& ${\sf CT}_{\sf tot}$           & 27.90      & 12.69     & 11.22.  \\
				& ${\sf Comm}_{\sf tot}$         & 0.16       & 0.30      & 0.16    \\
				& ${\sf Cost}$                   & 49.33      & 58.51     & 34.29   \\
				& $\TP$                          & 1904.79    & 3792.64   & 3725.49 \\
				\midrule
				\multirow{4}{*}{NN-2} 
				& ${\sf PT}_{\sf on}$            & 8.13       & 2.05      & 2.67    \\
				& ${\sf PT}_{\sf tot}$           &11.47	      &5.79	      &6.14	    \\
				& ${\sf CT}_{\sf tot}$           &30.88	      &14.82	  &13.40	\\
				& ${\sf Comm}_{\sf tot}$         & 0.28       & 0.39      & 0.24    \\
				& ${\sf Cost}$                   & 70.00	  &75.67	  &49.16	\\
				& $\TP$                          & 428.16     & 652.75    & 644.69  \\
				\midrule
				\multirow{4}{*}{NN-3} 
				& ${\sf PT}_{\sf on}$            & 21.79      & 5.67      & 8.40    \\
				& ${\sf PT}_{\sf tot}$           &30.66	      &15.14	  &17.87	\\
				& ${\sf CT}_{\sf tot}$           &91.68	      &40.01	  &42.76	\\
				& ${\sf Comm}_{\sf tot}$         & 1.59       & 1.94      & 1.28    \\
				& ${\sf Cost}$                   &331.01	  &343.73	  &240.41	\\
				& $\TP$                          & 53.62      & 55.71     & 54.13   \\
				\midrule
				\multirow{4}{*}{NN-4} 
				& ${\sf PT}_{\sf on}$            & 72.01      & 25.90     & 38.35    \\
				& ${\sf PT}_{\sf tot}$           &283.89	  &182.13	  &194.58	 \\
				& ${\sf CT}_{\sf tot}$           &859.09	  &500.13	  &522.32	 \\
				& ${\sf Comm}_{\sf tot}$         & 31.59      & 29.52     & 22.24    \\
				& ${\sf Cost}$                   &5779.27	  &5146.10	  &3999.30	 \\
				& $\TP$                          & 2.55       & 2.61      & 2.56     \\
				\bottomrule
			\end{NiceTabular}
		}
		\vspace{-1mm}
		\caption{\small Benchmarking of the training phase of ML algorithms. Time~(in seconds) and communication~(in GB) are reported for $1$ iteration. Monetary cost~(USD) is reported for $1000$ iterations.\label{tab:mltrain}}
		\vspace{-8mm}
	\end{table}


Starting with the time-optimized variant, $\thisT$ is $3 - 4\times$ faster than Trident in online runtime. 
The primary factor is the reduction in online rounds of our protocol due to multi-input gates. More precisely, we use the depth-optimized bit extraction circuit while instantiating the ReLU activation function using multi-input AND gates~(cf.~\S\ref{sec:4pcTools}). Looking at the total communication~(${\sf Comm}_{\sf tot}$) in \tabref{mltrain}, we observe that the gap in ${\sf Comm}_{\sf tot}$ between $\thisT$ vs. Trident decreases as the networks get deeper. This is justified as the improvement in communication of our dot product with truncation outpaces the overhead in communication caused by multi-input gates. The impact of this is more pronounced with NN-4, as observed by the lower monetary cost of $\thisT$ over Trident.
Another reason is that there are two active parties ($P_1, P_2$) in our framework, whereas Trident has three. Given the allocation of servers, the best $\rtt$ Trident can get with three parties~$(P_0,P_1,P_2)$ is $153.74ms$, as compared to $62.01ms$ of Tetrad, contributing to Tetrad being faster. However, if the $\rtt$ among all the parties were similar, this gap would be closed. Concretely, the online runtime (${\sf PT_{on}}$) of Trident will be similar to that of $\thisC$.  

The cost-optimized variant $\thisC$ on the other hand, is $1.5\times$ slower in the online phase compared to $\thisT$. However, it is still faster than Trident owing to the $\rtt$ setup, as discussed above. When it comes to monetary cost, this variant is up to $20-40\%$ cheaper than it's time-optimized counterpart and cheaper by around $30\%$ over Trident.

These trends can be better captured with a pictorial representation as given in \figref{MLTrain}.

\paragraph{Varying batch sizes and feature sizes}
\tabref{nn1} shows the online throughput~($\TP$) of neural network~(NN-1) training over varying batch sizes and feature sizes using synthetic datasets. 

	\begin{table}[htb!]
		\centering
		\resizebox{0.43\textwidth}{!}{
			\begin{NiceTabular}{r r r r r}
				\toprule
				Batch Size & Features & Trident & \thisT & \thisC \\
				\midrule 
				\Block{3-1}{128} 
				&10	&1905.58	&5407.35	&5271.88	
				\\
				&100	&1905.58	&5152.29	&5029.14	
				\\
				&1000	&1904.4	&3500.89	&3443.6	
				\\
				\midrule
				\Block{3-1}{256} 
				&10	&1905.58	&2818.4	&2744.87	
				\\
				&100	&1905.58	&2747.5	&2677.58	
				\\
				&1000	&1849.78	&2195.3	&2150.43	
				\\
				\bottomrule
			\end{NiceTabular}
		}
		\vspace{-1mm}
		\caption{\small Online throughput~($\TP$) of NN-1 training~(iterations per minute) over various batch sizes and features.\label{tab:nn1}}
		\vspace{-5mm}
	\end{table}

We find that both $\thisT, \thisC$ are up to $1.8 \times$ higher in $\TP$. However, as the batch size and feature size increase, both Trident and $\this$ experience a bandwidth bottleneck. The effect of the bandwidth limitation is higher for $\this$; hence the gain in $\TP$ over Trident decreases a bit.

\subsection{ML Inference}
\label{subsec:bench_inf}

We benchmark the inference phase of SVM and the aforementioned NNs. In addition to Trident~\cite{NDSS:ChaRacSur20}, we also benchmark against the 4PC robust protocol of SWIFT~\cite{USENIX:KPPS21} since it supports NN inference. Note that the best case performance of Fantastic Four~\cite{USENIX:DalEscKel20} when cast in the preprocessing model resembles that of SWIFT, while their worst case execution (3PC malicious) is an order of magnitude slower (cf. \S\ref{pa:fantasticfour}), as demonstrated in their paper (cf. Table 2 of~\cite{USENIX:DalEscKel20}). 

	\begin{table}[htb!]
		\centering
		\resizebox{0.475\textwidth}{!}{
			\begin{NiceTabular}{r r r r r r}
				\toprule
				Algorithm & Parameter & Trident & \thisT & \thisC & SWIFT\\
				\midrule 
				\multirow{4}{*}{SVM} 
				& ${\sf PT}_{\sf on}$            & 17.09     & 2.91      & 4.77         & 5.21\\
				& ${\sf PT}_{\sf tot}$           & 17.37     & 3.19      & 5.05         & 6.04\\
				& ${\sf CT}_{\sf tot}$           & 47.02     & 6.99      & 10.70        & 14.47\\
				& ${\sf Comm}_{\sf tot}$         & 1.36      & 2.34      & 1.25         & 1.36\\
				& ${\sf Cost}$                   & 39.92     & 6.26      & 9.23         & 12.43\\
				& $\TP$                          & 898.80    & 5271.74   & 3221.29      & 2949.76\\
				\midrule
				\multirow{4}{*}{NN-1} 
				& ${\sf PT}_{\sf on}$            & 5.87       & 1.31         & 1.87     & 2.31\\
				& ${\sf PT}_{\sf tot}$           & 6.15       & 1.58         & 2.14     & 3.13\\
				& ${\sf CT}_{\sf tot}$           & 16.75      & 3.76         & 4.88     & 8.65\\
				& ${\sf Comm}_{\sf tot}$         & 0.06       & 0.09         & 0.05     & 0.06\\
				& ${\sf Cost}$                   & 14.15      & 3.19         & 4.13     & 7.32\\
				& $\TP$                          & 2615.35    & 11734.60     & 8226.93  & 6661.00\\
				\midrule
				\multirow{4}{*}{NN-2} 
				& ${\sf PT}_{\sf on}$            & 5.87       & 1.31         & 1.87     & 2.31\\
				& ${\sf PT}_{\sf tot}$           & 6.15       & 1.58         & 2.14     & 3.13\\
				& ${\sf CT}_{\sf tot}$           & 16.75      & 3.77         & 4.88     & 8.66\\
				& ${\sf Comm}_{\sf tot}$         & 0.26       & 0.37         & 0.22     & 0.25\\
				& ${\sf Cost}$                   & 14.19      & 3.24         & 4.16     & 7.35\\
				& $\TP$                          & 2615.35    & 11734.60     & 8226.93  & 6661.00\\
				\midrule
				\multirow{4}{*}{NN-3} 
				& ${\sf PT}_{\sf on}$            & 14.42      & 2.61         & 4.10     & 4.54\\
				& ${\sf PT}_{\sf tot}$           & 14.71      & 2.91         & 4.39     & 5.39\\
				& ${\sf CT}_{\sf tot}$           & 39.92      & 6.43         & 9.40     & 13.18\\
				& ${\sf Comm}_{\sf tot}$         & 5.62       & 8.42         & 4.76     & 5.39\\
				& ${\sf Cost}$                   & 34.59      & 6.74         & 8.68     & 11.97\\
				& $\TP$                          & 1065.35    & 5882.44      & 3746.89  & 3384.51\\
				\midrule
				\multirow{4}{*}{NN-4} 
				& ${\sf PT}_{\sf on}$            & 47.05      & 7.85      & 12.69       & 13.13\\
				& ${\sf PT}_{\sf tot}$           & 47.61      & 8.44      & 13.28       & 14.33\\
				& ${\sf CT}_{\sf tot}$           & 129.41     & 17.77     & 27.46       & 31.35\\
				& ${\sf Comm}_{\sf tot}$         & 85.69      & 124.09    & 71.27       & 81.33\\
				& ${\sf Cost}$                   & 122.66     & 34.40     & 34.32       & 39.18\\
				& $\TP$                          & 326.46     & 934.34    & 891.19      & 891.19\\
				\bottomrule
			\end{NiceTabular}
		}
		\vspace{-1mm}
		\caption{\small Benchmarking of the inference phase of ML algorithms. Time~(in seconds) and communication~(in MB) are reported for $1$ query. Monetary cost~(USD) is reported for $1000$ queries.\label{tab:mlinf}}
		\vspace{-4mm}
	\end{table}

Similar to training, the time-optimized variant for inference is faster when it comes to ${\sf PT}_{\sf on}$, by $4 - 6\times$ over Trident. This is also reflected in the $\TP$, where the improvement is about $2.8 - 5.5\times$, as evident from \figref{MLTP}. In inference, the communication is in the order of megabytes, while run time is in the order of a few seconds. The key observation is that communication is well suited for the bandwidth used~(40 MBps). So unlike training, the monetary cost in inference depends more on run time rather than on communication. This is evident from \tabref{mlinf} which shows that $\thisT$ saves on monetary cost up to a factor of $6$ over Trident.

Note that the cost-optimized variant under performs in terms of monetary cost compared to $\thisT$. This is because, as mentioned earlier, run time plays a bigger role in monetary cost than communication. Hence for inference, the time-optimized variant becomes the optimal choice.

\begin{figure}[htb!]
	\centering
		\resizebox{.4\textwidth}{!}{
			\begin{tikzpicture}[
				every axis/.style={ 
					ybar stacked,
					ymin=0,ymax=6000,
					xtick={1,2,3}, xticklabels={SVM,NN-3,NN-4},
					enlarge x limits=0.28,
					cycle list name=exotic, 
					every axis plot/.append style={fill,draw=none,no markers},
					legend style = {anchor = south, legend columns = -1, draw=none, area legend},
					bar width=10pt},]
				
				\begin{axis}[bar shift=-13pt,hide axis, legend style = {at={(0.82, 0.85)}}]
					\addplot+ coordinates
					{(1,898.8) (2,1065.35) (3,326.46)}; 	
					\addlegendentry{{\footnotesize Trident~~~}}
				\end{axis}
				
				\begin{axis}[hide axis, legend style = {at={(0.82, 0.775)}}]
					\addplot+[fill=UniOrange] coordinates
					{(1,5271.74) (2,5882.44) (3,934.34)}; 	
					\addlegendentry{{\footnotesize \thisT~}}
				\end{axis}
				
				\begin{axis}[bar shift=13pt, legend style = {at={(0.82, 0.7)}}]
					\addplot+[fill=UniGruen] coordinates
					{(1,3221.29) (2,3746.89) (3,891.19)};  
					\addlegendentry{\footnotesize \thisC~}
				\end{axis}
				
			    \begin{axis}[bar shift=26pt, legend style = {at={(0.82, 0.625)}}]
			    	\addplot+[fill=UniBlau] coordinates
			    	{(1,2949.76) (2,3384.51) (3,891.19)}; 		
			    	\addlegendentry{\footnotesize SWIFT~~}
			    \end{axis}
			
			\end{tikzpicture}
		}
	\caption{\small Inference of SVM, NN-3 and NN-4: in terms of $\TP$ (higher is better)}\label{fig:MLTP}
\end{figure}

\vspace{-3mm}
\subsection{Comparison operations}
\tabref{compb} compares the performance of the frameworks for circuits of varying depth. At each layer of the circuits, we perform 128 comparisons where the comparison results are generated in arithmetic shared form. The idea is that each layer emulates a comparison layer in an NN with a batch size of 128. 

	\begin{table}[htb!]
		\centering
		\resizebox{0.41\textwidth}{!}{
			\begin{NiceTabular}{r r r r r}
				\toprule
				Depth & Parameter & Trident & \thisT & \thisC\\
				\midrule 
				\Block{3-1}{128} 
				& ${\sf PT}_{\sf on}$           &3.55	&0.53	&0.93	\\
				& ${\sf CT}_{\sf tot}$          &9.6	&1.06	&1.85	\\
				& ${\sf Cost}$                  &0.49	&0.05	&0.09	\\
				\midrule
				\Block{3-1}{1024} 
				& ${\sf PT}_{\sf on}$           &28.42	&4.23	&7.41	\\
				& ${\sf CT}_{\sf tot}$          &76.79	&8.47	&14.82	\\
				& ${\sf Cost}$                  &3.89	&0.43	&0.75	\\
				\midrule
				\Block{3-1}{8192} 
				& ${\sf PT}_{\sf on}$           &227.34	&33.87	&59.27	\\
				& ${\sf CT}_{\sf tot}$          &614.3	&67.76	&118.56	\\
				& ${\sf Cost}$                  &31.27	&3.48	&6.03	\\
				\bottomrule
			\end{NiceTabular}
		}
		\vspace{-1mm}
		\caption{\small Benchmarking of comparisons over various depths. Each of the layer has 128 comparisons. Time is reported in minutes, and monetary cost in USD.\label{tab:compb}}
		\vspace{-4mm}
	\end{table}

Interestingly, beyond a depth of roughly 100, the time-optimized variant~($\thisT$) starts outperforming in every metric, especially monetary cost, over the cost-optimized one~($\thisC$). This is because as the depth increases, runtime~({\sf CT}) grows at a much higher rate than the total communication. What we can infer from \tabref{compb} is that if one were to use a DNN with a depth of over 100, $\thisT$ becomes the optimal choice. 

\section*{Future Work}
\label{sec:4pcConclusion}
$\this$ requires the preprocessing to be function-dependent. Decoupling the preprocessing from the function to be computed in the online phase will make the framework more generic and is left as an interesting direction to pursue. 
Even though fixed-point arithmetic is efficient for the applications considered, in some cases, other representations such as floating-point and posit arithmetic might be desirable. Supporting alternative representations may require rethinking parts of the framework; hence it is left as an open problem.

The following are some of the challenges to be addressed while extending $\this$ to support training of other ML algorithms such as SVM, ResNet and LSTMs. 
In SVM training, the choice of kernel function plays an important role in determining the efficiency, especially for the non-linear classifiers. Some of the most widely used non-linear kernels include i) Polynomial: $(\vct{x} \band \vct{y})^d$, ii) Gaussian: $\exp(-\gamma\| \vct{x} - \vct{y}\|^2)$ for $\gamma > 0$, and iii) Hyperbolic: $\tanh(\mu \vct{x} \band \vct{y} + c)$ for some $\mu > 0$ and $c < 0$, where $\vct{x}, \vct{y}$ denote the input vectors. These kernels are expensive to compute (computation and communication) using standard MPC approaches such as circuit garbling, and hence, demand new MPC-friendly protocols which guarantee efficiency without losing out on accuracy (e.g., Sigmoid approximation of~\cite{SP:MohZha17}). 
Further, note that using the naive MPC protocols for training would demand a non-linear increase in bit-size of fixed-point arithmetic to accommodate for an increased dataset size~\cite{CVE21PPMLCRYPTO}.  Concretely, for a dataset with only 212 entries and 14 features, the ring size should be at least 246 bits. Thus, it is necessary to redesign the protocols to enable computation within the standard ring sizes.
For deep networks such as ResNet and LSTMs, they require performing batch normalization multiple times, each of which involves division and square-root operations~\cite{PoPETS:WTBKMR21}. Since the latter is expensive to perform over rings, designing efficient protocols for these operations is an interesting question.

Finally, although it is known how to instantiate the required primitives securely using standard MPC techniques, they are far from being practically efficient. Moreover, since the secure variant is known to have an overhead over the plaintext computation, sophisticated techniques are required to handle the large amount of intermediate data generated while training very deep networks. Existing PPML frameworks lack support for training the above ML algorithms to the best of our knowledge. We believe that accounting for the points above can bring the existing PPML frameworks, including $\this$, one step closer to the efficient realization of these algorithms.

\section*{Acknowledgements}
The authors would like to acknowledge support from Google PhD Fellowship 2019, Centre for Networked
Intelligence (a Cisco CSR initiative) 2021, SERB MATRICS (Theoretical Sciences) Grant 2020 and Google India AI/ML Research Award 2020. The authors would also like to acknowledge the financial support from Google Cloud to perform the benchmarking.

This project has received funding from the European Research Council (ERC) under the European Union’s Horizon 2020 research and innovation program (grant agreements No. 850990~(PSOTI) and No. 803096~(SPEC)) and from the Digital Research Centre Denmark (DIREC).  This work was co-funded by the Deutsche Forschungsgemeinschaft~(DFG) – SFB~1119 CROSSING/236615297.

\bibliographystyle{IEEEtran}
\bibliography{others,cryptobib/abbrev3,cryptobib/crypto}

\appendices
\section{Preliminaries}
\label{app:Prelims}

\subsection{Related Work}
\label{app:related}
Related work covers MPC protocols with an honest majority for high-throughput and constant-round setting and mixed-protocol frameworks for the case of PPML. 

ABY3~\cite{CCS:MohRin18} was the first framework for the case of 3 parties, supporting both training and inference. It had variants for both passive and active security, with the former being based on~\cite{CCS:AFLNO16} and the latter on~\cite{EC:FLNW17,SP:ABFLLN17}.
ASTRA~\cite{CCSW:CCPS19} improved upon the 3PC of~\cite{CCS:AFLNO16,EC:FLNW17,SP:ABFLLN17} by proposing faster protocols for the online phase with active security. As a result, secure inference of ASTRA is faster than ABY3. Building on~\cite{C:BBCGI19}, BLAZE~\cite{NDSS:PatSur20} proposed an actively secure framework that supports inference of neural networks. BLAZE pushes the expensive zero-knowledge part of the computation to the preprocessing phase, making its online phase faster than that of~\cite{C:BBCGI19}. SWIFT~(3PC) improved upon BLAZE by using the distributed zero-knowledge protocol of \cite{CCS:BGIN19}, thereby achieving GOD.
In an orthogonal line of work, FALCON~\cite{PoPETS:WTBKMR21} focused on enhancing the efficiency of actively secure protocols for large convolutional neural networks, supporting training and inference. 

In the high-throughput setting for 4PC, ~\cite{AC:GorRanWan18} explores protocols for the security notions of abort. Inspired by the theoretical GOD construction in~\cite{AC:GorRanWan18}, FLASH proposed practical protocols with GOD for secure inference.  Trident~\cite{NDSS:ChaRacSur20} improved protocols (in terms of communication) compared to~\cite{AC:GorRanWan18} with a focus on security with fairness. In addition, it was the first work to propose a mixed-protocol framework for the case of 4 parties. More recently,~\cite{USENIX:MLRG20} improved over~\cite{AC:GorRanWan18} to provide support for fixed-point arithmetic with applications to graph parallel computation, albeit with abort security.

Improving the security of Trident to GOD, SWIFT~\cite{USENIX:KPPS21} presented an efficient, robust PPML framework with protocols as fast as Trident. SWIFT only supports the secure inference of neural networks and lacks conversions similar to the ones from Trident and the garbled world. Fantastic Four~\cite{USENIX:DalEscKel20} also provides robust 4PC protocols which are on par with SWIFT. While they claim to provide a better security model called {\em private robustness} compared to SWIFT, it has been shown in SWIFT that the two security models are theoretically equivalent. Our security model is also similar to SWIFT, and we elaborate on its equivalence to private robustness in \S\ref{app:secmodel}.

In the regime of constant-round protocols,~\cite{CCS:MohRosZha15} presents 3PC protocols in the honest majority setting satisfying security with abort, which require communicating one garbled circuit and three rounds of interaction. The work of~\cite{C:IKKP15} presents a robust 4-party computation protocol (4PC) with GOD in $2$-rounds (which is optimal) at the expense of 12 garbled circuits. Further,~\cite{CCS:BJPR18} presents efficient 3PC and 4PC constructions providing security notions of fairness and GOD. 

A mixed-protocol framework for MPC was first shown to be practical, in the 2-party dishonest majority setting, by TASTY~\cite{CCS:HKSSW10}. TASTY was a passively secure compiler supporting generation of protocols based on homomorphic encryption and garbled circuits. This was followed by ABY~\cite{NDSS:DemSchZoh15}, which proposed a mixed protocol framework, also with passive security, combining the arithmetic, boolean and garbled worlds. The recent work of ABY2~\cite{USENIX:PSSY21} improves upon the ABY framework, providing a faster online phase with applications to PPML.
The work of~\cite{INDOCRYPT:RotWoo19,C:EGKRS20} proposed efficient mixed world conversions for the case of $n$ parties with a dishonest majority. Both works have active security, with \cite{INDOCRYPT:RotWoo19} supporting the inference of SVMs, and~\cite{C:EGKRS20} supporting neural network inference.

In the honest majority setting, ABY3~\cite{CCS:MohRin18} extended the idea to 3 parties and provided specialised protocols for the case of PPML. ABY3 was the first work to support secure training in the case of 3 parties, while Trident~\cite{NDSS:ChaRacSur20} extended it to the 4-party setting.

\subsection{Basic Primitives}
\label{app:basicprimitives}

\paragraph{Shared Key Setup}
\label{app:keysetup}
Let $F : \{0, 1\}^{\csec} \times \{0, 1\}^{\csec} \rightarrow X$ be a secure  pseudo-random function (PRF), with co-domain $X$ being $\Z{\ell}$. The following set of keys are established between the parties.
\begin{enumerate}[itemsep=0mm]
	\item[--] One key between every pair -- $\Key{ij}$ for $P_i, P_j$.
	\item[--] One key between every set of three parties -- $\Key{ijk}$ for $P_i, P_j, P_k$.
	\item[--] One shared keys $\Key{\Partyset}$ known to all parties in $\Partyset$.
\end{enumerate}

Suppose $P_0,P_1$ wish to sample a random value $r \in \Z{\ell}$ non-interactively. To do so they invoke $F_{k_{01}}(id_{01})$ and obtain $r$. Here, $id_{01}$ denotes a counter maintained by the parties, and is updated after every PRF invocation. The appropriate keys used to sample is implicit from the context, from the identities of the pair that sample or from the fact that it is sampled by all, and, hence, is omitted.
\vspace{-3mm}
\begin{systembox}{$\Func[Setup]$}{Ideal functionality for shared-key setup}{fig:FSETUP}
	\justify
	$\Func[Setup]$ interacts with the parties in $\Partyset$ and the adversary $\Sim$. $\Func[Setup]$ picks random keys $\Key{ij}$ and $\Key{ijk}$ for $i,j,k \in \{0,1,2,3\}$ and $\Key{\Partyset}$. Let $\vl{y}_s$ denote the keys corresponding to party $P_s$. Then
	\begin{myitemize}
		\item[--] $\vl{y}_s = (\Key{01}, \Key{02}, \Key{03}, \Key{012}, \Key{013}, \Key{023}$ and $\Key{\Partyset})$ when $P_s = P_0$.
		\item[--] $\vl{y}_s = (\Key{01}, \Key{12}, \Key{13}, \Key{012},  \Key{013}, \Key{123}$ and $\Key{\Partyset})$ when $P_s = P_1$.
		\item[--] $\vl{y}_s = (\Key{02}, \Key{12}, \Key{23}, \Key{012},  \Key{023}, \Key{123}$ and $\Key{\Partyset})$ when $P_s = P_2$.
		\item[--] $\vl{y}_s = (\Key{03}, \Key{13}, \Key{23}, \Key{013}, \Key{023}, \Key{123}$ and $\Key{\Partyset})$ when $P_s = P_3$.
	\end{myitemize}
	\begin{description}
		\item {\bf Output: } Send $(\OUTPUT, \vl{y}_s)$ to every $P_s \in \Partyset$.
	\end{description}
\end{systembox}

The key setup is modelled via a functionality $\Func[Setup]$ (\boxref{fig:FSETUP}) that  can be realised using any secure  MPC protocol. A simple instantiation of such an MPC protocol is as follows. $P_i$ samples key $\Key{ij}$ and sends to $P_j$. $P_i$ samples $\Key{ijk}$ and sens to $P_j$. $P_i, P_j$ $\jsend$ $\Key{ijk}$ to $P_k$. Similarly, $P_0$ samples $\Key{\Partyset}$ and sends to $P_3$. $P_0, P_3$ $\jsend$ $\Key{\Partyset}$ to $P_1$ and $P_2$.

\paragraph{Collision-Resistant Hash Function~\cite{FSE:RogShr04}}
\label{app:hash}. 
A family of hash functions $\{\Hash: \mathcal{K} \times \MS \rightarrow \mathcal{Y} \}$ is said to be collision resistant if for all PPT adversaries $\Adv$, given the hash function $\Hash_k$ for $k \in_R \mathcal{K}$, the following holds: $\Prob[(x, x^{\prime}) \leftarrow \Adv(k) : (x \neq x^{\prime}) \wedge \Hash_k(x) = \Hash_k(x^{\prime})] = \negl(\kappa)$, where $x, x^{\prime} \in \{0,1\}^{m}$ and $m = \poly(\kappa)$.

\subsection{Security Model}
\label{app:secmodel}
We prove security using the real-world/ ideal-word simulation paradigm~\cite{Goldreich04, EPRINT:Lindell16}. The security is analyzed by comparing what an adversary can do in the real world's execution of the protocol with what it can do in an ideal world execution where there is a trusted third party and is considered secure by definition. In the ideal world, the parties send their inputs to the trusted third party over perfectly secure channels that carries out the computation and sends the output to the parties. Informally, a protocol is secure if whatever an adversary can do in the real world can also be done in the ideal world. 

Let $\Adv$ denote the probabilistic polynomial time ($\ppt$) real-world adversary corrupting at most one party in $\Partyset$, $\Sim$ denote the corresponding ideal world adversary, and $\Func{ }$ denote the ideal functionality. Let $\Ideal_{\Func{ }, \Sim}(\onesec, z)$ denote the joint output of the honest parties and $\Sim$ from the ideal execution with respect to the security parameter $\csec$ and auxiliary input $z$. Similarly, let $\Real_{\Pi, \Adv}(\onesec, z)$ denote the joint output of the honest parties and $\Adv$ from the real world execution. We say that the protocol $\Pi$ securely realizes $\Func{ }$ if for every $\ppt$ adversary $\Adv$ there exists an ideal world adversary $\Sim$ corrupting the same parties such that $\Ideal_{\Func{ }, \Sim}(\onesec, z)$ and $\Real_{\Pi, \Adv}(\onesec, z)$ are computationally indistinguishable. The ideal functionality for computing a function $f$ with fairness and robustness appears in \boxref{ideal:fair} and \boxref{ideal:god}, respectively. 

\begin{systembox}{$\Func[Fair]$}{Fair functionality for computing function $f$}{ideal:fair}
	Every honest party $P_i \in \Partyset$ sends its input $x_i$ to the functionality. Corrupted parties may send arbitrary inputs as instructed by the adversary. While sending the inputs, the adversary is also allowed to send a special $\abort$ command.
	
	\noindent \textbf{Input:} On message $(\INPUT, x_i)$ from $P_i$, do the following: if $(\INPUT, \ast)$ already received from $P_i$, then ignore the current message. Otherwise, record $x_i^{\prime} = x_i$ internally. If $x_i$ is outside $P_i$'s domain, consider $x_i^{\prime} = \abort$.
	
	\noindent \textbf{Output:} If there exists an $i \in \{0, 1, 2, 3\}$ such that $x_i^{\prime} = \abort$, send $(\OUTPUT, \bot)$ to all the parties. Else, compute $y = f(x_0^{\prime}, x_1^{\prime}, x_2^{\prime}, x_3^{\prime})$ and send $(\OUTPUT, y)$ to all parties.
\end{systembox}

\vspace{-2mm}
\begin{systembox}{$\Func[Robust]$}{Robust functionality for computing function $f$}{ideal:god}
	Every honest party $P_i \in \Partyset$ sends its input $x_i$ to the functionality. Corrupted parties may send arbitrary inputs as instructed by the adversary.
	
	\noindent \textbf{Input:} On message $(\INPUT, x_i)$ from $P_i$, do the following: if $(\INPUT, \ast)$ already received from $P_i$, then ignore the current message. Otherwise, record $x_i^{\prime} = x_i$ internally. If $x_i$ is outside $P_i$'s domain, consider $x_i^{\prime}$ to be some predetermined default value.
	
	\noindent \textbf{Output:} Compute $y = f(x_0^{\prime}, x_1^{\prime}, x_2^{\prime}, x_3^{\prime})$ and send $(\OUTPUT, y)$ to all parties.
\end{systembox}

\paragraph{On the security of robust $\this$} 
\label{sec:privaterobust}
We emphasize that we follow the standard traditional (real-world / ideal-world based) security definition of MPC, according to which, in the 4-party setting with one corruption, exactly one party is assumed to be corrupt, and the rest are {\em honest}. As per this definition, disclosing the honest parties' inputs to a selected {\em honest} party is {\em not} a breach of security. Indeed in $\this$, the data sharing and the computation on the shared data are done so that any malicious behaviour leads to establishing a trusted party $\TTP$ who is enabled to receive all the inputs and compute the output on the clear. 
There has been a recent study on the additional requirement of hiding the inputs from a quorum of honest parties (treating them as semi-honest), termed as  Friends-and-Foes (FaF) security notion~\cite{C:AloOmrPas20}. This is a stronger security goal than the standard one. 
Informally, designing secure 4PC FaF protocols requires security against two independent corruptions. Our sharing semantics, designed to handle only one corruption, does not suffice. Hence, we leave FaF-secure 4PC for future exploration. 

Another security notion, called {\em private robustness}, was recently proposed in the work of Dalskov et al.~\cite{USENIX:DalEscKel20}, where the protocol does not demand the inputs be sent to a $\TTP$. Their work, however, considers a more restricted security model, where it is assumed that parties will discard messages which are {\em non-intended} and are not a part of the protocol. This involves assuming a {\em secure erasure}. Under this assumption, our model is equivalent to that of private robustness since the trusted party $\TTP$ will erase the input of the honest parties after computing the function output.

\subsection{Comparison with  Fantastic Four~\cite{USENIX:DalEscKel20}} 
\label{pa:fantasticfour}
We analyse the performance of Fantastic Four~\cite{USENIX:DalEscKel20} where execution proceeds in segments~(cf. \S6.4,~\cite{USENIX:DalEscKel20}). Elaborately, computation is carried out optimistically for each segment, followed by a verification phase before proceeding to the next segment. If verification fails, the current segment is recomputed via an active 3PC protocol. Subsequent segments also proceed with a 3PC execution until the verification fails again. In this case, a semi-honest 2PC with a helper is carried out for the current and rest of the segments. For analysis, we consider their best and worst-case execution cost. 

\begin{table}[htb!]
	\centering
	\resizebox{0.48\textwidth}{!}{
		\begin{NiceTabular}{r r r c}
			\toprule
			\Block{2-1}{Protocol} & \Block[c]{1-2}{Dot Product w/ Truncation} &
			& \Block[c]{2-1}{\#Active\\Parties}\\ \cmidrule{2-3}
			& Preprocessing & Online & \\
			\midrule
			Fantastic Four: Case I    & $\ell$  & $9\ell$  & 4 \\
			Fantastic Four: Case II   & $76(\ell+\kappa)+54x + 12$  & $9\ell + 6\kappa$  & 3 \\
			\thisA (on-demand)       & -        & $5\ell$  & 3 \\                     
			\bottomrule
		\end{NiceTabular}
	}
	\vspace{-1mm}
	\caption{\small Comparison with Fantastic Four~\cite{USENIX:DalEscKel20}}\label{tab:compFour}
\end{table}

Observe that the best case happens when the verification is always successful, which we call as {\em Case I}. In this case, the communication cost is that of the 4PC execution. Note that an adversary can {\em always} make the verification fail in the first segment itself. This results in executing the entire protocol (all segments) with their active 3PC, which accounts for their worst-case cost. We denote this as {\em Case II}. Their 3PC protocols are designed to work over the extended ring of size $\ell + \kappa$ bits. As evident from Tables 2, 3 of their paper, their 3PC is at least $10 \times$ more expensive than their 4PC in terms of both runtime and communication. Thus, the higher cost of 3PC defeats the purpose of having an additional honest party in the system. 

Observe that their protocols are designed to work with a function-independent preprocessing. Thus, for a fair comparison, we compare both cases against the on-demand variant of our robust protocols~(\thisA). The results are summarised in~\tabref{compFour}. We remark that the values for their cases are obtained from Table 1 of their paper~\cite{USENIX:DalEscKel20}. 
\section{4PC Protocol}
\label{app:4pc}
Here we detail the additional information regarding the 4PC protocols.

\paragraph{Joint-send for robust protocols}
\label{app:jsendrobust}
The formal protocol for $\prot{\jsend}$ in the robust setting~\cite{USENIX:KPPS21} is given in \boxref{fig:4pcjsendrobust}.

\begin{lemma}[Communication]
	\label{appl:pijsend}
	Protocol $\prot{\jsend}$~(\boxref{fig:4pcjsendrobust}) requires an amortized communication of $\ell$ bits and $1$ round.
\end{lemma}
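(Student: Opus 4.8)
\textbf{Proof proposal for Lemma~\ref{appl:pijsend}.}

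The plan is to analyze the formal protocol $\prot{\jsend}$ for the robust setting (given in \boxref{fig:4pcjsendrobust}) phase by phase, separating the per-invocation ``send'' cost from the one-time ``verify'' cost, and then observe that the latter amortizes away over many invocations. First I would recall the structure of the jmp-based instantiation described in the body of the excerpt: it consists of a \emph{send} phase in which $P_i$ forwards $\vl{v} \in \Z{\ell}$ directly to $P_k$, and a \emph{verify} phase in which $P_j$ contributes a hash $\Hash(\vl{v})$ to let $P_k$ check consistency (and which, upon mismatch, triggers the $\TTP$-identification steps). The send phase costs exactly $\ell$ bits and $1$ round, since it is a single ring element sent over one pairwise channel. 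The remaining steps — transmitting the hash, broadcasting a one-bit inconsistency flag, and the subsequent conflict-resolution exchanges that establish a conflicting pair / $\TTP$ — constitute the verify phase.

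The key observation for the communication count is that the verify phase is run \emph{only once}, batched over all $\jsend$ invocations in the entire protocol execution: $P_j$ sends a single hash of the concatenation (or a Merkle-style digest) of all the messages it is supposed to corroborate, rather than one hash per message. Since a collision-resistant hash has output length $\BigO{\csec}$ independent of the number $N$ of batched messages, and the flag/conflict-resolution traffic is likewise $\BigO{\csec}$ total, the verify-phase communication is $\BigO{\csec}$ overall; dividing by $N$ invocations, the amortized per-invocation contribution of the verify phase is $o(1)$ and in particular negligible compared to $\ell$. Hence the amortized communication is dominated entirely by the send phase, giving $\ell$ bits per invocation. For the round count, the send phase is $1$ round; as noted in the excerpt, the batched verify phase adds $2$ rounds total at the very end of the computation, which again amortizes to $0$ additional rounds per invocation, so the amortized round complexity is $1$.

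I would then write this up by: (i) stating that one $\jsend$ invocation's online footprint is a single $\ell$-bit message from $P_i$ to $P_k$ (the send phase), accounting for the $\ell$ bits and $1$ round; (ii) arguing that all verification material — hashes from the second sender, inconsistency flags, and $\TTP$-establishment messages — can be deferred and aggregated, so that across $N \gg 1$ invocations the verification costs $\BigO{\csec}$ bits and $2$ rounds \emph{in total}, i.e.\ $\BigO{\csec/N} = o(1)$ bits and $o(1)$ rounds per invocation; (iii) concluding that the amortized cost is $\ell$ bits and $1$ round.

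The main obstacle — really a bookkeeping subtlety rather than a deep difficulty — is to argue carefully that the deferred verify phase genuinely can be batched across \emph{all} $\jsend$ calls without compromising the guarantee of the primitive, i.e.\ that postponing hash checks to the end still ensures ``either correct delivery or identification of a conflicting pair containing a corrupt party'' for every individual invocation. This follows because an incorrect $\vl{v}$ delivered in some send phase will be caught when the aggregate hash is checked (collision-resistance of $\Hash$ rules out a successful forgery except with negligible probability), and the conflict-resolution logic then localizes a corrupt-containing pair; but one must note that this is exactly the amortization argument already invoked for $\pijsend$ in the excerpt and in~\cite{USENIX:KPPS21}, so the lemma reduces to citing that structure and doing the arithmetic above. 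No macros beyond those already defined (\texttt{\textbackslash jsend}, \texttt{\textbackslash Hash}, \texttt{\textbackslash BigO}, \texttt{\textbackslash csec}, \texttt{\textbackslash TTP}, \texttt{\textbackslash Z}) are needed.
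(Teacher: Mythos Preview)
Your proposal is correct and follows essentially the same argument as the paper: the send phase contributes the $\ell$ bits and $1$ round, while the hash/verify steps are deferred, batched across instances, and thus amortize away. The paper's own proof is considerably terser (it simply asserts that the hash communication ``can be clubbed for multiple instances'' and that the robust verification ``gets amortized over multiple instances''), but the underlying reasoning is identical to yours.
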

\begin{proof}
	In the protocol $\prot{\jsend}(P_i, P_j,\vl{v},P_k)$ for the fair variant, $P_i$ communicates $\vl{v}$ to $P_k$ requiring communication of $\ell$ bits and one round. The hash value communication from $P_j$ to $P_k$ can be clubbed for multiple instances with the same set of parties and hence the cost gets amortized. The analysis is similar for the robust case as well. Here, though the verification consists of multiple steps, the cost gets amortized over multiple instances.
\end{proof}

\medskip
\begin{protocolbox}{$\prot{\jsend}(P_i, P_j,\vl{v},P_k)$}{Joint-Send for robust protocols}{fig:4pcjsendrobust}
	\justify
	$P_s \in \Partyset$ initializes an inconsistency bit $\bitb_s = 0$. If $P_s$ remains silent instead of sending $\bitb_s$ in any of the following rounds, the recipient sets $\bitb_s$ to $1$.
	
	\begin{myitemize} 
	\item[--] {\em Send: } $P_i$ sends $\vl{v}$ to $P_k$.
	
	\item[--] {\em Verify: } $P_j$ sends $\Hash(\vl{v})$ to $P_k$. 
	\begin{itemize}
		\item[-]  $P_k$ sets $\bitb_k = 1$ if the received values are inconsistent or if the value is not received. 
		\item[-] $P_k$ sends $\bitb_k$ to all parties. $P_s$ for $s \in \{i, j, l\}$ sets $\bitb_s = \bitb_k$.
		\item[-] $P_s$ for $s \in \{i, j, l\}$ mutually exchange their bits. $P_s$ resets $\bitb_s = \bitb^{\prime}$ where $\bitb^{\prime}$ denotes the bit which appears in majority among $\bitb_i, \bitb_j, \bitb_l$.
		\item[-] All parties set $\TTP = P_l$ if $\bitb^{\prime} = 1$, terminate otherwise.
	\end{itemize}
	\end{myitemize} 
\end{protocolbox}

\paragraph{Sharing Protocol}
\label{app:share}

\begin{lemma}[Communication]
	\label{appl:pish}
	Protocol $\prot{\Sh}$~(\boxref{fig:piSh}) requires an amortized communication of at most $3\ell$ bits and $1$ round in the online phase.
\end{lemma}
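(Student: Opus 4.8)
The plan is to tally the messages and rounds of the online phase of $\prot{\Sh}$~(\boxref{fig:piSh}) directly, and then dispatch the hash-based consistency check by the standard amortization argument. First I would note that the preprocessing phase is non-interactive: the shares $\pad{\vl{v}}{1},\pad{\vl{v}}{2},\pad{\vl{v}}{3}$ are derived from the pre-shared PRF keys (cf.~\S\ref{app:keysetup}) with no communication, so it contributes nothing to the online cost and can be ignored.

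Next I would inspect the two online steps. In step~1, $P_i$ forms $\mk{\vl{v}} = \vl{v} + \pd{\vl{v}}$, a single element of $\Z{\ell}$, and sends it to $P_1,P_2,P_3$. Counting carefully, this is at most three ring elements: exactly three when $P_i = P_0$, and only two when $P_i \in \{P_1,P_2,P_3\}$ since the sender need not send to itself. Hence step~1 costs at most $3\ell$ bits and is a single simultaneous round. In step~2, $P_1,P_2,P_3$ exchange $\Hash(\mk{\vl{v}})$ to check consistency. Here I would invoke the amortization observation already used for $\jsend$ (cf.~\S\ref{p:jsend}): over many invocations of $\prot{\Sh}$ with the same recipient set, the hashes can be concatenated and a single hash of the concatenation exchanged once, at the end of the computation. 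By collision resistance of $\Hash$ this preserves the detection guarantee, while the communication (a constant number of hash values) becomes independent of the number of instances and the extra round is shared across all of them; both therefore amortize away. Combining the two steps yields the claimed bound of at most $3\ell$ bits and $1$ round in the (amortized) online phase.

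The one point requiring care — though not a genuine obstacle — is to state the amortization precisely enough that its soundness is clear: deferring the consistency hashes to a single end-of-protocol check must not weaken security, i.e.\ a corrupt $P_i$ that sends mismatched copies of $\mk{\vl{v}}$ is still caught before any value depending on $\mk{\vl{v}}$ is reconstructed or output. This is exactly the reasoning behind the $\jsend$ primitive and its communication lemma, so I would cite that rather than re-derive it. Beyond this bookkeeping, there is no mathematical difficulty; the statement is purely a counting claim.
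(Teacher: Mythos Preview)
Your proposal is correct and follows essentially the same argument as the paper: note that preprocessing is non-interactive via the shared-key setup, count the at-most-$3\ell$ bits from $P_i$ sending $\mk{\vl{v}}$ (with the worst case being $P_i = P_0$), and amortize the hash-exchange round across instances. Your added remark on the soundness of deferring the consistency check is extra care the paper omits, but otherwise the two proofs coincide.
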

\begin{proof}
	The preprocessing of $\prot{\Sh}$ is non-interactive as the parties sample non interactively using key setup $\Func[Setup]$~(\S\ref{app:basicprimitives}). in the online phase, $P_i$ sends $\mk{\vl{v}}$ to $P_1, P_2, P_3$ resulting in 1 round and communication of at most $3\ell$ bits~($P_i = P_0$). The next round of hash exchange can be clubbed for several instances and the cost gets amortized over multiple instances.
\end{proof}

\paragraph{Reconstruction Protocol}
\label{app:rec}

\begin{lemma}[Communication]
	\label{appl:pirec}
	Protocol $\prot{\Rec}$~(\boxref{fig:piRec}) requires an amortized communication of $4\ell$ bits and $1$ round in the online phase.
\end{lemma}
\begin{proof}
	The protocol involves 4 invocations of $\prot{\jsend}$ protocol and the communication follows from Lemma~\ref{appl:pijsend}.
\end{proof}

\medskip
\begin{protocolbox}{$\prot{\Rec}(\Partyset, \shr{\vl{v}})$}{Reconstruction (with abort) of $\vl{v}$ among $\Partyset$.}{fig:piRec}
	\justify
	\detail{
		{\bf Input(s):} $\shr{\vl{v}}$.\\
		{\bf Output:} $\vl{v}$.
	}
	\begin{enumerate}
		\item $P_1, P_0$ $\jsend$ $\pad{\vl{v}}{1}$ to $P_2$;~~~$P_2, P_0$ $\jsend$ $\pad{\vl{v}}{3}$ to $P_3$;\newline $P_3, P_0$ $\jsend$ $\pad{\vl{v}}{2}$ to $P_1$;~~~$P_1, P_2$ $\jsend$ $\mk{\vl{v}}$ to $P_0$.
		\item Compute $\vl{v} = \mk{\vl{v}} - \pad{\vl{v}}{1} - \pad{\vl{v}}{2} - \pad{\vl{v}}{3}$. 
	\end{enumerate}
\end{protocolbox}

\paragraph{Multiplication Protocol}
\label{app:mult}

\begin{lemma}[Communication]
	\label{appl:piMultF}
	Protocol $\piMult$~(\boxref{fig:piMultiplication})~(in $\this$) requires $2 \ell$ bits of communication in the preprocessing phase, and $1$ round and $3 \ell$ bits of communication in the online phase.
\end{lemma}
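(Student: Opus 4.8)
The plan is to traverse $\piMult$ (\boxref{fig:piMultiplication}) step by step, separating the non-interactive operations --- local arithmetic, truncation, and sampling of correlated randomness through $\Func[Setup]$, all of which cost zero bits and zero rounds --- from the interactive ones, and then to bound each interactive step by appealing to the communication cost of the underlying primitives $\pijsend$ (Lemma~\ref{appl:pijsend}) and $\piJSh$. Throughout, I will invoke the standing amortization convention that every consistency hash is clubbed across all gates and pushed to the end of the protocol, and that the same holds for every message whose recipient is the helper $P_3$, since $P_3$ never lies on the critical path of the online computation.

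\emph{Preprocessing.} Step~1 (computing the $\gm{\vl{a}\vl{b}}{i}$) is local, and Step~2 (sampling $\vl{u}^1,\vl{u}^2$) uses only pre-shared keys; both are free. Step~3 calls $\piJSh(P_0,P_3,\vl{q})$, and because $P_0$ is one of the two sharers this is the optimized joint-sharing consisting of a single $\pijsend$ of one ring element to the third party, i.e.\ $\ell$ bits by Lemma~\ref{appl:pijsend}; the flag $\isTr$ only decides whether $\vl{r}$ or the locally computed $\vl{r}^{\vl{t}}$ is shared, so the cost is unchanged. Step~4 samples $\vl{s}_1,\vl{s}_2$ non-interactively and then has $P_0$ send $\vl{w}$ to $P_3$ in the clear, a further $\ell$ bits. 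Summing the two interactive steps yields the claimed $2\ell$ bits.

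\emph{Online.} Step~1 is local. Step~2 is the only per-gate interaction on the critical path: $P_1$ sends $\vl{y}_1$ to $P_2$ while, simultaneously, $P_2$ sends $\vl{y}_2$ to $P_1$, costing $2\ell$ bits in a single round, after which both reconstruct $\vl{z}-\vl{r}$. Step~3 runs $\piJSh(P_1,P_2,\vl{p})$; since $P_1,P_2$ already hold $\mk{\vl{p}}$ (they just obtained $\vl{z}-\vl{r}$, and $\pd{\vl{p}}$ comes from $\Func[Setup]$), this collapses to one $\pijsend$ of $\vl{p}$ toward $P_3$, i.e.\ $\ell$ more bits; moreover $\mk{\vl{p}}$ (hence $\mk{\vl{o}}$) is needed by $P_3$ only to run the Step~5 verification of later gates, so this message is batched into the deferred verification and adds no round. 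Step~4 is local, and Step~5 is a single hash from $P_3$ to $P_1,P_2$, again deferred. Hence the online cost is $2\ell+\ell=3\ell$ bits and $1$ round.

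\emph{Main difficulty.} The delicate point --- the one I would argue in most detail --- is exactly this last claim that the online phase is $1$ round and not $3$. It rests on the structural fact that $P_3$ is a pure helper online: in Step~1 only $P_1,P_2$ consume the wire masks and masked values, and $P_3$'s sole outgoing online message is the verification hash. Therefore all of $P_3$'s incoming $\mk{\cdot}$ values and outgoing hashes, over every multiplication gate, can be aggregated into one batch executed once after circuit evaluation, turning the nominal three rounds per gate into one round per gate plus $O(1)$ rounds overall. This is the same accounting convention under which Lemma~\ref{appl:pijsend} and the sharing/reconstruction lemmas are stated, so the argument is consistent with the rest of the cost analysis; completing it is then just the routine bookkeeping sketched above.
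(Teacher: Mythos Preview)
Your proposal is correct and follows essentially the same approach as the paper: you account for the preprocessing as $\ell$ bits for $\piJSh(P_0,P_3,\vl{q})$ plus $\ell$ bits for $P_0$ sending $\vl{w}$, and for the online phase as $2\ell$ bits for the parallel exchange between $P_1,P_2$ plus $\ell$ bits for $\piJSh(P_1,P_2,\vl{p})$ toward $P_3$, with the latter deferred to verification so that the amortized round count stays at~$1$. Your discussion of why the $P_3$-bound messages can be batched is more explicit than the paper's, but the argument and the accounting are the same.
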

\begin{proof}
	During preprocessing, sampling of values ${\vl{u}}^1, {\vl{u}}^2$ are performed non-interactively using $\Func[Setup]$. A communication of $\ell$ bits is required for the joint sharing of $\vl{q}$ by $P_0, P_3$ as explained in \S\ref{p:jsh}. In addition, $P_0$ communicates $\vl{w}$ to $P_3$ requiring additional $\ell$ bits. 
	During online, two instances of $\prot{\jsend}$ are executed in parallel resulting in a communication of $2\ell$ bits and 1 round. This is followed by a joint sharing by $P_1,P_2$ to $P_3$ for which an additional communication of $\ell$ bits are required. However, in joint sharing, the communication is from $P_1$ to $P_3$ and the same can be deferred till the verification stage. Thus the online round is retained as $1$ in an amortized sense. 
\end{proof}

\begin{lemma}[Communication]
	\label{appl:piMultR}
	Protocol $\piMult$~(\boxref{fig:piMultiplication})~(in $\thisA$) requires $2 \ell$ bits of communication in the preprocessing phase, and $1$ round and $3 \ell$ bits of communication in the online phase.
\end{lemma}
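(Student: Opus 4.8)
The plan is to argue that the robust variant of $\piMult$ has exactly the same preprocessing and online communication as the fair variant (Lemma~\ref{appl:piMultF}), by observing that the only structural changes going from $\this$ to $\thisA$ are (i) swapping the fair instantiation of $\jsend$/$\piJSh$ for the robust one, and (ii) adding the one-time verification protocol $\piVrfyP$ for $P_0$'s message $\vl{w}$. I would then invoke the already-proved amortized costs of these primitives to conclude that neither change inflates the per-multiplication cost.

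First I would recall the structure of $\piMult$ from \boxref{fig:piMultiplication}. In the preprocessing, the interactive steps are: the joint sharing $\piJSh(P_0,P_3,\vl{q})$, which costs $\ell$ bits (\S\ref{p:jsh}); and $P_0$ sending $\vl{w}$ to $P_3$, another $\ell$ bits; sampling of $\vl{u}^1,\vl{u}^2,\vl{s}_1,\vl{s}_2$ is non-interactive via $\Func[Setup]$. This gives $2\ell$ bits in the preprocessing. In the online phase, the two parallel $\jsend$ instances (from the two senders $P_1,P_2$) cost $2\ell$ bits and $1$ round by Lemma~\ref{appl:pijsend}; the subsequent $\piJSh(P_1,P_2,\vl{p})$ adds $\ell$ bits, but since in joint sharing the actual ring-element communication is only from $P_1$ to $P_3$, it can be deferred to and piggybacked on the verification stage, so the amortized online round count stays $1$ and the online communication is $3\ell$. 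This is verbatim the argument of Lemma~\ref{appl:piMultF}.

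Next I would argue the two robustness-specific modifications do not change these figures. For (i): by Lemma~\ref{appl:pijsend}, the robust $\jsend$ (\boxref{fig:4pcjsendrobust}) has the same amortized cost as the fair one — $\ell$ bits and $1$ round — because the extra \emph{verify}-phase exchanges (the inconsistency bits $\bitb_s$) are of constant size and are clubbed across all instances; the same therefore holds for every $\piJSh$ inside $\piMult$. For (ii): the protocol $\piVrfyP$ (\boxref{fig:piVrfyP0}) is executed \emph{once} for all $M$ multiplication gates, not once per gate — it samples shared randomness non-interactively, and its $\csec$ repetitions each reconstruct a single ring element $\vl{g}$ robustly ($O(\csec\ell)$ bits total, independent of $M$), so in the amortized/per-gate accounting its contribution is negligible (it vanishes as $M\to\infty$, exactly as stated in the introduction where the overhead is quantified as $0.027$ bits per multiplication for $M=2^{20}$). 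Hence the per-multiplication preprocessing cost remains $2\ell$ bits and the online cost remains $1$ round and $3\ell$ bits.

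The main obstacle, such as it is, is being careful about what "amortized" is hiding: one must make explicit that (a) the hash pieces of every $\jsend$, (b) the inconsistency-bit exchanges of the robust $\jsend$, and (c) the entire $\piVrfyP$ check are all deferred to a single end-of-(pre)processing verification phase whose cost is constant (or $O(\csec\ell)$) rather than scaling with the number of multiplication gates, so that dividing by the number of gates leaves only the $2\ell$ and $3\ell$ terms. I would also note the subtlety that deferring the $P_1\to P_3$ message of $\piJSh(P_1,P_2,\vl{p})$ to the verification stage is what keeps the online round count at $1$; this is the one place where "amortized" does real work on the round complexity and deserves an explicit sentence. Everything else is a direct bookkeeping application of Lemmas~\ref{appl:pijsend} and~\ref{appl:piMultF}.
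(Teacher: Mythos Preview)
Your proposal is correct. The paper in fact states this lemma without proof, immediately after the fair-variant Lemma~\ref{appl:piMultF}, implicitly relying on the same bookkeeping; your write-up makes explicit precisely the two points the paper leaves tacit --- that the robust $\jsend$ has the same amortized cost as the fair one (Lemma~\ref{appl:pijsend}) and that $\piVrfyP$ is a one-time check whose per-gate contribution vanishes --- so your argument is, if anything, more complete than what the paper provides.
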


\subsection{Function-independent preprocessing}
\label{app:nopre}
We provide the fair multiplication, $\piMultO$, for {\em function-independent} preprocessing in \boxref{fig:piMultNoPre}. 
The protocol incurs no overhead over the fair multiplication~($\piMult$) in $\this$. This is due to the design of $\piMult$ where values ${\vl{u}}^1, {\vl{u}}^2$ are sampled non-interactively in the preprocessing. Thus the joint-sharing by $P_0, P_3$~(Step 5 (a) in \boxref{fig:piMultNoPre}) can be performed along with the communication among $P_1, P_2$~(Step 4 in \boxref{fig:piMultNoPre}) in the online. Moreover, the rest of the communication can be deferred till the verification stage and thus, the online round complexity is retained. The protocol for robust setting is similar.

\smallskip
\begin{protocolbox}{$\piMultO(\vl{a}, \vl{b}, \isTr)$}{Fair multiplication without preprocessing.}{fig:piMultNoPre}
	Let $\isTr$ be a bit that denotes whether truncation is required ($\isTr =1$) or not~($\isTr=0$). \\
	\detail{
		{\bf Input(s):} $\shr{\vl{a}}, \shr{\vl{b}}$.\\
		{\bf Output:} $\shr{\vl{o}}$ where $\vl{o} = \vl{z}^{\vl{t}}$ if $\isTr = 1$ and $\vl{o} = \vl{z}$ if $\isTr = 0$ and $\vl{z} = \vl{ab}$.
	}
	\justify 
	\vspace{-2mm}
	\algoHead{Online:} 
	\begin{enumerate}[itemsep=0mm]
		\item Locally compute the following:
		\begin{align*}
			P_0, P_1: \gm{\vl{a}\vl{b}}{1} &= \pad{\vl{a}}{1} \pad{\vl{b}}{3} + \pad{\vl{a}}{3} \pad{\vl{b}}{1} + \pad{\vl{a}}{3} \pad{\vl{b}}{3} \\
			P_0, P_2: \gm{\vl{a}\vl{b}}{2} &= \pad{\vl{a}}{2} \pad{\vl{b}}{3} + \pad{\vl{a}}{3} \pad{\vl{b}}{2} + \pad{\vl{a}}{2} \pad{\vl{b}}{2} \\
			P_0, P_3: \gm{\vl{a}\vl{b}}{3} &= \pad{\vl{a}}{1} \pad{\vl{b}}{2} + \pad{\vl{a}}{2} \pad{\vl{b}}{1} + \pad{\vl{a}}{1} \pad{\vl{b}}{1}
		\end{align*}
		\item $P_0, P_3$ and $P_j$ sample random ${\vl{u}}^j \in_R \Z{\ell}$ for $j \in \{1,2\}$. Let ${\vl{u}^1} + \vl{u}^2 = \gm{\vl{a}\vl{b}}{3} - \vl{r}$ for a random $\vl{r} \in_R \Z{\ell}$.  
		\item Let $\vl{y} = (\vl{z} - \vl{r}) - \mk{\vl{a}} \mk{\vl{b}}$. Locally compute the following:
		\begin{align*}
			P_1: \vl{y}_1 &= - \pad{\vl{a}}{1} \mk{\vl{b}} - \pad{\vl{b}}{1} \mk{\vl{a}} + \gm{\vl{a}\vl{b}}{1} + {\vl{u}}^1 \\
			P_2: \vl{y}_2 &= - \pad{\vl{a}}{2} \mk{\vl{b}} - \pad{\vl{b}}{2} \mk{\vl{a}} + \gm{\vl{a}\vl{b}}{2} + {\vl{u}}^2 \\
			P_1, P_2: \vl{y}_3 &= - \pad{\vl{a}}{3} \mk{\vl{b}} - \pad{\vl{b}}{3} \mk{\vl{a}}
		\end{align*}
		\item $P_1$ sends $\vl{y}_1$ to $P_2$, while $P_2$ sends $\vl{y}_2$ to $P_1$.
		\item Parties proceed as follows:
		\begin{enumerate}[itemsep=0mm]
			\item $P_0, P_3$: $\vl{r} = \gm{\vl{a}\vl{b}}{3} - {\vl{u}^1} - \vl{u}^2$; $\vl{q} = \vl{r}^{\vl{t}}$  if $\isTr = 1$, else $\vl{q} = \vl{r}$; Execute $\piJSh(P_0, P_3, \vl{q})$.
			\item $P_1, P_2$: $\vl{z} - \vl{r} = (\vl{y}_1 + \vl{y}_2 + \vl{y}_3) + \mk{\vl{a}} \mk{\vl{b}}$; $\vl{p} = (\vl{z} - \vl{r})^{\vl{t}}$ if $\isTr = 1$, else $\vl{p} = \vl{z} - \vl{r}$; Execute $\piJSh(P_1, P_2, \vl{p})$.
		\end{enumerate}
		\item Locally compute $\shr{\vl{o}} = \shr{\vl{p}} + \shr{\vl{q}}$. Here $\vl{o} = \vl{z}^{\vl{t}}$ if $\isTr = 1$ and $\vl{z}$ otherwise.
	\end{enumerate}
	\justify
	\vspace{-2mm}
	\algoHead{Verification:}
	\begin{enumerate}[itemsep=0mm]
		\item  $P_0, P_1, P_2$ sample random ${\vl{s}_1}, \vl{s}_2 \in_R \Z{\ell}$ and set $\vl{s} = \vl{s}_1 + \vl{s}_2$. $P_0$ sends $\vl{w} = \gm{\vl{a}\vl{b}}{1} + \gm{\vl{a}\vl{b}}{2} + {\vl{s}}$ to $P_3$.
		\item $P_3$ computes $\vl{v} = - (\pad{\vl{a}}{1} + \pad{\vl{a}}{2}) \mk{\vl{b}} - (\pad{\vl{b}}{1} + \pad{\vl{b}}{2}) \mk{\vl{a}} + {\vl{u}^1} + \vl{u}^2 + \vl{w}$ and sends $\Hash(\vl{v})$ to $P_1$ and $P_2$. Parties $P_1, P_2$ $\abort$ iff $\Hash(\vl{v}) \ne \Hash(\vl{y}_1 + \vl{y}_2 + {\vl{s}})$.
	\end{enumerate}     
\end{protocolbox}

\section{Building Blocks}
\label{app:tools}

\paragraph{Dot Product~(Scalar Product)}
\label{pa:4pcDotp}

\begin{lemma}[Communication]
	\label{appl:pidotpf}
	Protocol $\prot{\dotp}$~(\boxref{fig:piDotP})~(in $\this$) requires $2 \ell$ bits of communication in preprocessing, and $1$ round and $3 \ell$ bits of communication in the online phase.
\end{lemma}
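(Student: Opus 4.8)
The plan is to mirror, essentially verbatim, the proof of Lemma~\ref{appl:piMultF} for the multiplication protocol, walking through $\prot{\dotp}$ (\boxref{fig:piDotP}) step by step and tallying the bits sent. The one conceptual point that must be made explicit is why the count is \emph{independent of the vector length} $\vl{d}$: in every step where a party transmits something, the quantity sent is a single ring element obtained by a purely local summation over the $\vl{d}$ coordinates (the aggregated cross terms $\gm{\vct{a}\vct{b}}{i}$, the partial shares $\vl{y}_1,\vl{y}_2$, etc.), so $\vl{d}$ never enters the communication, only the local computation.

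First I would handle the preprocessing phase. Step~1 computes $\gm{\vct{a}\vct{b}}{1},\gm{\vct{a}\vct{b}}{2},\gm{\vct{a}\vct{b}}{3}$ as sums of $\vl{d}$ products of $\pd{}$-shares already held by the relevant party pairs: no communication. Step~2 samples $\vl{u}^1,\vl{u}^2$ non-interactively using $\FSETUP$ (\boxref{fig:FSETUP}). In Step~3, $P_0,P_3$ compute $\vl{r}$ (and $\vl{q}=\vl{r}^{\vl{t}}$ if $\isTr=1$) locally and run $\piJSh(P_0,P_3,\vl{q})$; by the optimized joint-sharing for a value held by $P_0$ together with another party described in \S\ref{p:jsh}, this is a single $\jsend$ and costs $\ell$ bits. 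In Step~4, $\vl{s}_1,\vl{s}_2$ are sampled non-interactively and $P_0$ sends $\vl{w}=\gm{\vct{a}\vct{b}}{1}+\gm{\vct{a}\vct{b}}{2}+\vl{s}$ to $P_3$, another $\ell$ bits. Summing, the preprocessing costs $2\ell$ bits, matching the claim.

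Next I would treat the online phase. Step~1 (computing $\vl{y}_1,\vl{y}_2,\vl{y}_3$) is local. Step~2 has $P_1$ send $\vl{y}_1$ to $P_2$ and $P_2$ send $\vl{y}_2$ to $P_1$ simultaneously: $2\ell$ bits in one round. Step~3 is $\piJSh(P_1,P_2,\vl{p})$, costing $\ell$ bits; exactly as in Lemma~\ref{appl:piMultF}, the underlying transmission is $P_1\to P_3$, which can be deferred to and batched with the end-of-computation verification, so the online round count remains $1$ in the amortized sense. Step~4 is local. The verification step has $P_3$ send $\Hash(\vl{v})$ to $P_1,P_2$; by collision resistance and because this hash can be clubbed across all dot-product instances, its cost is amortized away and does not contribute to the $3\ell$ bound. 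This yields $1$ round and $3\ell$ bits online.

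I do not expect a genuine obstacle; the proof is a bookkeeping argument. The only two places that need a sentence of care are (i) the $\vl{d}$-independence observation above, and (ii) the amortization argument that collapses the deferred $\piJSh$ message and the verification hash so that the stated online round count is $1$ — both points are inherited directly from the analogous treatment of $\piMult$ (\boxref{fig:piMultiplication}). For completeness I would note that the robust variant ($\thisA$) gives the same tally by the same walk-through, since the only change is substituting the robust instantiation of $\jsend$, whose amortized cost is still $\ell$ bits and $1$ round per instance by Lemma~\ref{appl:pijsend}.
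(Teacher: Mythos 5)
Your proposal is correct and follows essentially the same route as the paper: the paper's proof simply observes that the partial products $\vl{a}_i\vl{b}_i$ are aggregated locally before any communication, so the pattern is identical to $\piMult$ and the costs follow from Lemma~\ref{appl:piMultF}. Your step-by-step tally and the explicit remarks on $\vl{d}$-independence and the deferred $\piJSh$/hash amortization are just a more detailed rendering of that same argument.
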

\begin{proof}
	Here, the parties add up the locally computed shares corresponding to each partial product of the form $\vl{a}_i \vl{b}_i$ and then performs the communication of the sum. The communication pattern is similar to that of the fair multiplication protocol~(\boxref{fig:piMultiplication}) and the costs follow from Lemma~\ref{appl:piMultF}.
\end{proof}

\paragraph{Multi-input Multiplication}
\label{pa:4pcMultiinputfair}

\begin{lemma}[Communication]
	\label{appl:pimultTf}
	Protocol $\piMultT$~(\boxref{fig:piMultT})~(in $\this$) requires $9 \ell$ bits of communication in preprocessing, and $1$ round and $3 \ell$ bits of communication in the online phase.
\end{lemma}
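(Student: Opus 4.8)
The plan is to follow the template of $\piMult$ (Lemma~\ref{appl:piMultF}): $\piMultT$ writes $\shr{\vl{z}}=\shr{\vl{z}-\vl{r}}+\shr{\vl{q}}$, with the eval pair $\SetE=\{P_1,P_2\}$ producing $\shr{\vl{z}-\vl{r}}$ in the online phase and the helper pair $\SetD=\{P_0,P_3\}$ producing $\shr{\vl{q}}$ in the preprocessing, all the extra burden of the third factor being pushed into the preprocessing. \textbf{Online bound.} As in $\piMult$, I would argue that online $P_1$ and $P_2$ each assemble \emph{purely locally} a single ring element $\vl{y}_1$ (resp.\ $\vl{y}_2$) from their $\shr{\cdot}$-shares of $\vl{a},\vl{b},\vl{c}$ and the data handed to them by $\SetD$, such that $\vl{y}_1+\vl{y}_2+\vl{y}_3=(\vl{z}-\vl{r})-\mk{\vl{abc}}$ with $\vl{y}_3$ common to both; they exchange $\vl{y}_1,\vl{y}_2$ ($2\ell$ bits, one round), reconstruct $\vl{z}-\vl{r}$, truncate if $\isTr=1$, and invoke $\piJSh(P_1,P_2,\cdot)$ whose only message ($\ell$ bits) is deferrable to the single amortized verification round; $P_3$'s verification hash is amortized likewise. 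Hence the online communication pattern is \emph{identical} to that of $\piMult$, giving $1$ round and $3\ell$ bits, and in particular independence of the arity.

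\textbf{Preprocessing bound.} Expanding $\vl{z}-\vl{r}$ via the displayed $3$-input identity, the data $P_1,P_2$ must hold before the online phase are additive $\sqr{\cdot}$-shares of the four pad-products $\gm{\vl{ab}}{}$, $\gm{\vl{ac}}{}$, $\gm{\vl{bc}}{}$, $\gm{\vl{abc}}{}$ (with $\vl{r}$ folded into one of them), while $P_0,P_3$ must hold $\vl{q}\in\{\vl{r},\vl{r}^{\vl{t}}\}$ and $P_3$ the verification value $\vl{w}$. I would tally the cost in three steps. First, $P_0$ together with each $P_i$ locally computes the pieces of every pad-product visible to $P_i$ under the semantics of \tabref{sharing}, and the randomizers defining $\vl{r}$ are sampled non-interactively through $\Func[Setup]$---zero communication. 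Second, $\SetD$ must re-share to $\SetE$ exactly the monomials that lie outside $\{P_1,P_2\}$'s joint view: for each of the four products its ``$\gm{\cdot}{3}$''-type piece is re-shared by the two helpers to $P_1$ and to $P_2$ ($2\ell$ bits each), one of which is saved because $\vl{r}$ is folded into it via the non-interactively sampled $\vl{u}^1+\vl{u}^2$; additionally the triple product has a ``rainbow'' piece---the monomials touching all three share-indices $1,2,3$, computable only by $P_0$---which $P_0$ alone re-shares ($\ell$ bits). This totals $3\cdot 2\ell+\ell=7\ell$ bits. Third, $P_0,P_3$ run $\piJSh$ on $\vl{q}$ ($\ell$ bits) and $P_0$ sends $\vl{w}$ to $P_3$ ($\ell$ bits). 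Summing, the preprocessing costs $7\ell+\ell+\ell=9\ell$ bits.

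\textbf{Main obstacle.} The delicate point is the second step of the preprocessing count: one must use the asymmetric $(4,1)$-RSS semantics of \tabref{sharing} to pin down precisely which monomials of $\pd{\vl{a}}\pd{\vl{b}}$, $\pd{\vl{a}}\pd{\vl{c}}$, $\pd{\vl{b}}\pd{\vl{c}}$ and $\pd{\vl{a}}\pd{\vl{b}}\pd{\vl{c}}$ are invisible to $\{P_1,P_2\}$---and to notice that for the triple product six ``rainbow'' monomials are invisible to every party but $P_0$, which is what forces the extra $P_0$-only message---so that re-sharing exactly these (and nothing more) to $\SetE$ costs precisely seven ring elements and $9\ell$ is tight rather than merely an upper bound. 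A secondary point, handled exactly as in Lemma~\ref{appl:piMultF}, is to confirm that every newly introduced message is either genuinely in the preprocessing or can be folded into the one-time amortized verification, so that the online round count stays at $1$ irrespective of the number of inputs.
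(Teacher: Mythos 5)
Your online-phase argument is sound and matches the paper's: the communication pattern is identical to $\piMult$ ($P_1,P_2$ exchange $\vl{y}_1,\vl{y}_2$ for $2\ell$ bits in one round, plus a deferrable $\ell$-bit $\piJSh$), so $1$ round and $3\ell$ bits follow from Lemma~\ref{appl:piMultF}. The problem is the preprocessing itemization: it totals $9\ell$ but does not describe the protocol in \boxref{fig:piMultT}, and the errors only cancel by coincidence. Concretely, the paper's count is $3\ell$ for $\gm{\vl{ab}}{}$ (a full replicated $\sgr{\cdot}$-re-sharing via $\piMultsgr$, i.e.\ three $\jsend$ instances, one of which goes to $P_3$), $2\ell$ each for $\gm{\vl{ac}}{}$ and $\gm{\vl{bc}}{}$ (one $\jsend$ of $\vl{u}_{\vl{xc}}^2$ to $P_2$ only --- the $P_1$ share is sampled non-interactively --- plus a masked verification value $\vl{w}_{\vl{xc}}$ from $P_0$ to $P_3$), $\ell$ for $\vl{w}_{\vl{abc}}$, and $\ell$ for the joint sharing of $\vl{q}$. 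Against this, you (i) under-count $\gm{\vl{ab}}{}$ at $2\ell$, missing that it must be re-shared in replicated form precisely so that each share $\gm{\vl{abc}}{i}$ is computable by the pair $(P_0,P_i)$ and so that $\gm{\vl{ab}}{i}\mk{\vl{c}}$ can be formed online and verified by $P_3$; (ii) over-count $\gm{\vl{ac}}{}$ and $\gm{\vl{bc}}{}$ by charging $2\ell$ for sending the $\gm{\cdot}{3}$-piece to both evaluators when the protocol sends to only one; (iii) account for only one of the three $P_0\!\to\!P_3$ messages $\vl{w}_{\vl{ac}},\vl{w}_{\vl{bc}},\vl{w}_{\vl{abc}}$, without which $P_3$ cannot recompute $\vl{y}_1+\vl{y}_2$ in the verification step; and (iv) introduce a ``$P_0$-only rainbow re-share'' of the triple product that does not exist in the protocol --- the replicated sharing of $\gm{\vl{ab}}{}$ is exactly what removes the need for it, and a unilateral $P_0$ message carrying actual share data would anyway require its own verification mechanism, which you do not supply.

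Since a communication lemma is proved by its itemization, arriving at the right total through a mis-attribution of which messages are sent, by whom, and to whom does not establish the statement; you should redo the preprocessing tally following the four blocks of \boxref{fig:piMultT} ($\piMultsgr$ for $\gm{\vl{ab}}{}$, the $\jsend$-plus-$\vl{w}$ pattern for $\gm{\vl{ac}}{}$ and $\gm{\vl{bc}}{}$, and the $\piMult$-style preprocessing for $\gm{\vl{abc}}{}$ together with $\vl{q}$), which yields $3\ell+2\ell+2\ell+2\ell=9\ell$.
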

\begin{proof}
	In the preprocessing, computation of $\gm{\vl{ab}}{}$ involves three instances of $\jsend$. Each of the computation of $\gm{\vl{ac}}{}, \gm{\vl{bc}}{}$ involves one instance of $\jsend$ and a communication from $P_0$ to $P_3$. The computation of $\gm{\vl{abc}}{}$ is similar to the preprocessing of fair multiplication protocol~(\boxref{fig:piMultiplication}). The communication pattern of the online phase is similar to that of the fair multiplication protocol. The costs follow from Lemma~\ref{appl:piMultF} and Lemma~\ref{appl:pijsend}.
\end{proof}

For the robust 3-input multiplication, correctness of three messages, ${\vl{w}}_{\vl{ac}}, {\vl{w}}_{\vl{bc}}, {\vl{w}}_{\vl{abc}}$, sent by $P_0$ have to be verified by invoking $\piVrfyP$.

\begin{protocolbox}{$\piMultT(\vl{a}, \vl{b}, \vl{c}, \isTr)$}{3-input fair multiplication in $\this$.}{fig:piMultT}
	Let $\isTr$ be a bit that denotes whether truncation is required ($\isTr =1$) or not~($\isTr=0$). \\
	\detail{
		{\bf Input(s):} $\shr{\vl{a}}, \shr{\vl{b}}, \shr{\vl{c}}$.\\
		{\bf Output:} $\shr{\vl{o}}$ where $\vl{o} = \vl{z}^{\vl{t}}$ if $\isTr = 1$ and $\vl{o} = \vl{z}$ if $\isTr = 0$ and $\vl{z} = \vl{abc}$.
	}
	\justify 
	\vspace{-2mm}
	\algoHead{Preprocessing:} 
	\begin{enumerate}[itemsep=0mm]
		\item Computation for $\gm{\vl{ab}}{}$: Invoke $\piMultsgr(\pd{\vl{a}}, \pd{\vl{b}})$~(\boxref{fig:piMultsgr}).
		\item Computation for $\gm{\vl{ac}}{}$: 
		\begin{myitemize}
			\item[--] Locally compute the following:
			\begin{align*}
				P_0, P_1: \gm{\vl{ac}}{1}  &= \pad{\vl{a}}{1} \pad{\vl{c}}{3} + \pad{\vl{a}}{3} \pad{\vl{c}}{1} + \pad{\vl{a}}{3} \pad{\vl{c}}{3}\\
				P_0, P_2: \gm{\vl{ac}}{2} &= \pad{\vl{a}}{2} \pad{\vl{c}}{3} + \pad{\vl{a}}{3} \pad{\vl{c}}{2} + \pad{\vl{a}}{2} \pad{\vl{c}}{2}\\
				P_0, P_3: \gm{\vl{ac}}{3} &= \pad{\vl{a}}{1} \pad{\vl{c}}{2} + \pad{\vl{a}}{2} \pad{\vl{c}}{1} + \pad{\vl{a}}{1} \pad{\vl{c}}{1}
			\end{align*}
			\item[--] $P_0, P_3$ and $P_1$ sample random ${\vl{u}}_{\vl{ac}}^1 \in_R \Z{\ell}$. $P_0, P_3$ compute and $\jsend$ ${\vl{u}}_{\vl{ac}}^2 = \gm{\vl{ac}}{3} - {\vl{u}}_{\vl{ac}}^1$ to $P_2$.  
			\item[--]  $P_0, P_1, P_2$ sample random ${\vl{s}}_{\vl{ac_1}}, {\vl{s}}_{\vl{ac_2}} \in_R \Z{\ell}$ and set ${\vl{s}}_{\vl{ac}} = {\vl{s}}_{\vl{ac_1}} + {\vl{s}}_{\vl{ac_2}}$. $P_0$ sends ${\vl{w}}_{\vl{ac}} = \gm{\vl{ac}}{1} + \gm{\vl{ac}}{2} + {\vl{s}}_{\vl{ac}}$ to $P_3$.
		\end{myitemize}
		\smallskip
		\item Computation for $\gm{\vl{bc}}{}$: Similar to Step 2 (for $\gm{\vl{ac}}{}$). $P_1, P_2$ obtain ${\vl{u}}_{\vl{bc}}^1, {\vl{u}}_{\vl{bc}}^2$ respectively such that ${\vl{u}}_{\vl{bc}}^1 + {\vl{u}}_{\vl{bc}}^2 = \gm{\vl{bc}}{3}$ . $P_3$ obtains ${\vl{w}}_{\vl{bc}}  = \gm{\vl{bc}}{1} + \gm{\vl{bc}}{2} + {\vl{s}}_{\vl{bc}}$. 
		\item Computation for $\gm{\vl{abc}}{}$: 
		\begin{myitemize}
			\item[--] Using $\gm{\vl{ab}}{}$~(Step 1), $\pad{\vl{c}}{}$, compute the following:
			\begin{align*}
				P_0, P_1: \gm{\vl{abc}}{1}  &= \gm{\vl{ab}}{1} \pad{\vl{c}}{3} + \gm{\vl{ab}}{3} \pad{\vl{c}}{1} + \gm{\vl{ab}}{3} \pad{\vl{c}}{3}\\
				P_0, P_2: \gm{\vl{abc}}{2} &= \gm{\vl{ab}}{2} \pad{\vl{c}}{3} + \gm{\vl{ab}}{3} \pad{\vl{c}}{2} + \gm{\vl{ab}}{2} \pad{\vl{c}}{2}\\
				P_0, P_3: \gm{\vl{abc}}{3} &= \gm{\vl{ab}}{1} \pad{\vl{c}}{2} + \gm{\vl{ab}}{2} \pad{\vl{c}}{1} + \gm{\vl{ab}}{1} \pad{\vl{c}}{1}
			\end{align*}
			\item[--] $P_0, P_3$ and $P_j$ sample random ${\vl{u}}_{\vl{abc}}^j \in_R \Z{\ell}$ for $j \in \{1,2\}$. Let ${\vl{u}}_{\vl{abc}}^1 + {\vl{u}}_{\vl{abc}}^2 = \gm{\vl{abc}}{3} + \vl{r}$ for $\vl{r} \in_R \Z{\ell}$.   
			\item[--]  $P_0, P_1, P_2$ sample random ${\vl{s}}_1, {\vl{s}}_2 \in_R \Z{\ell}$ and set ${\vl{s}} = {\vl{s}}_1 + {\vl{s}} _2$\footnote{For the fair protocol, it is enough for $P_0, P_1, P_2$ to sample ${\vl{s}}$ directly.}. $P_0$ sends ${\vl{w}}_{\vl{abc}} = \gm{\vl{abc}}{1} + \gm{\vl{abc}}{2} + {\vl{s}}$ to $P_3$.
		\end{myitemize}
		\smallskip
		\item $P_0, P_3$ compute $\vl{r} = {\vl{u}}_{\vl{abc}}^1 + {\vl{u}}_{\vl{abc}}^2 - \gm{\vl{abc}}{3}$ and set $\vl{q} = \vl{r}^{\vl{t}}$  if $\isTr = 1$, else set $\vl{q} = \vl{r}$. Execute $\piJSh(P_0, P_3, \vl{q})$ to generate $\shr{\vl{q}}$.
	\end{enumerate}
	\justify
	\vspace{-2mm}
	\algoHead{Online:} Let $\vl{y} = (\vl{z} - \vl{r}) - \mk{\vl{abc}}$.
	\begin{enumerate}[itemsep=0mm]
		\item Locally compute the following:
		\begin{align*}
			P_1: \vl{y}_1 &= - \pad{\vl{a}}{1} \mk{\vl{bc}} - \pad{\vl{b}}{1} \mk{\vl{ac}} - \pad{\vl{c}}{1} \mk{\vl{ab}} 
			+ \gm{\vl{ab}}{1} \mk{\vl{c}} \\&+ (\gm{\vl{ac}}{1} + {\vl{u}}_{\vl{ac}}^1) \mk{\vl{b}} + (\gm{\vl{bc}}{1} + {\vl{u}}_{\vl{bc}}^1) \mk{\vl{a}} - (\gm{\vl{a}\vl{bc}}{1} + {\vl{u}}_{\vl{a}\vl{bc}}^1)\\
			P_2: \vl{y}_2 &= - \pad{\vl{a}}{2} \mk{\vl{bc}} - \pad{\vl{b}}{2} \mk{\vl{ac}} - \pad{\vl{c}}{2} \mk{\vl{ab}} 
			+ \gm{\vl{ab}}{2} \mk{\vl{c}} \\&+ (\gm{\vl{ac}}{2} + {\vl{u}}_{\vl{ac}}^2) \mk{\vl{b}} + (\gm{\vl{bc}}{2} + {\vl{u}}_{\vl{bc}}^2) \mk{\vl{a}} - (\gm{\vl{a}\vl{bc}}{2} + {\vl{u}}_{\vl{a}\vl{bc}}^2)\\
			P_1, P_2: \vl{y}_3 &= - \pad{\vl{a}}{3} \mk{\vl{bc}} - \pad{\vl{b}}{3} \mk{\vl{ac}} - \pad{\vl{c}}{3} \mk{\vl{ab}} + \gm{\vl{ab}}{3} \mk{\vl{c}}
		\end{align*}
		\item $P_1$ sends $\vl{y}_2$ to $P_2$, while $P_2$ sends $\vl{y}_1$ to $P_1$, and they locally compute $\vl{z} - \vl{r} = (\vl{y}_1 + \vl{y}_2 + \vl{y}_3) + \mk{\vl{abc}}$.
		\item If $\isTr = 1$, $P_1, P_2$ locally set $\vl{p} = (\vl{z} - \vl{r})^{\vl{t}}$, else $\vl{p} = \vl{z} - \vl{r}$. \newline Execute $\piJSh(P_1, P_2, \vl{p})$ to generate $\shr{\vl{p}}$. 
		\item Locally compute $\shr{\vl{o}} = \shr{\vl{p}} + \shr{\vl{q}}$. Here $\vl{o} = \vl{z}^{\vl{t}}$ if $\isTr = 1$ and $\vl{z}$ otherwise.
		\item {\em Verification:} 
		\begin{myitemize}
			\item[--] Locally compute the following:
			\begin{align*}
				P_3: \vl{v} &= - (\pad{\vl{a}}{1} + \pad{\vl{a}}{2})\mk{\vl{bc}} - (\pad{\vl{b}}{1} + \pad{\vl{b}}{2} )\mk{\vl{ac}} - (\pad{\vl{c}}{1} + \pad{\vl{c}}{2}) \mk{\vl{ab}} \\&+(\gm{\vl{ab}}{1} + \gm{\vl{ab}}{2})\mk{\vl{c}} + ({\vl{w}}_{\vl{ac}} + \gm{\vl{ac}}{3}) \mk{\vl{b}} + ({\vl{w}}_{\vl{bc}} + \gm{\vl{bc}}{3}) \mk{\vl{a}} \\&- ({\vl{u}}_{\vl{abc}}^1 + {\vl{u}}_{\vl{abc}}^2 + {\vl{w}}_{\vl{abc}})\\
				P_1, P_2: \vl{v}' &=  \vl{y}_1 + \vl{y}_2 + {\vl{s}}_{\vl{ac}} \mk{\vl{b}} + {\vl{s}}_{\vl{bc}} \mk{\vl{a}} - {\vl{s}}
			\end{align*}
			\item[--] $P_3$ sends $\Hash(\vl{v})$ to $P_1, P_2$, who $\abort$ iff $\Hash(\vl{v}) \ne \Hash(\vl{v}')$.
		\end{myitemize}
	\end{enumerate}     
\end{protocolbox}

{\em 4-input multiplication: } To obtain $\shr{\cdot}$-sharing of $\vl{z}= \vl{a} \vl{b} \vl{c} \vl{d}$ given the $\shr{\cdot}$-sharing of $\vl{a}, \vl{b}, \vl{c}, \vl{d}$, we can write $\vl{z} + \vl{r}$ as 
\begin{align*}
	\vl{z} - \vl{r} &=(\mk{\vl{a}} - \pd{\vl{a}})(\mk{\vl{b}} - \pd{\vl{b}})(\mk{\vl{c}} - \pd{\vl{c}})(\mk{\vl{d}} - \pd{\vl{d}}) - \vl{r} \\
	&= \mk{\vl{a}\vl{b}\vl{c}\vl{d}} - \mk{\vl{b}\vl{c}\vl{d}} \pd{\vl{a}} - \mk{\vl{a}\vl{c}\vl{d}}  \pd{\vl{b}} - \mk{\vl{a}\vl{b}\vl{d}} \pd{\vl{c}} - \mk{\vl{a}\vl{b}\vl{c}} \pd{\vl{d}} \\
	&~~~+ \mk{\vl{a}\vl{b}} \gm{\vl{c} \vl{d}}{} 
	+ \mk{\vl{a}\vl{c}} \gm{\vl{b} \vl{d}}{} + \mk{\vl{a}\vl{d}} \gm{\vl{b} \vl{c}}{} + \mk{\vl{b}\vl{c}} \gm{\vl{a} \vl{d}}{} + \mk{\vl{b}\vl{d}} \gm{\vl{a} \vl{c}}{} \\
	&~~~+ \mk{\vl{c}\vl{d}} \gm{\vl{a} \vl{b}}{} - \mk{\vl{a}}\gm{\vl{b}\vl{c}\vl{d}}{} - \mk{\vl{b}}\gm{\vl{a}\vl{c}\vl{d}}{} - \mk{\vl{c}}\gm{\vl{a}\vl{b}\vl{d}}{} - \mk{\vl{d}}\gm{\vl{a}\vl{b}\vl{c}}{} \\
	&~~~+\gm{\vl{a} \vl{b} \vl{c} \vl{d}}{} - \vl{r}
\end{align*}
While the online phase proceeds similarly to the 2 and 3-input multiplication, in the preprocessing phase, the parties need to generate the additive shares of $\gm{\vl{a} \vl{b}}{}, \gm{\vl{a} \vl{c}}{}, \gm{\vl{a} \vl{d}}{}, \gm{\vl{b} \vl{c}}{}, \allowbreak \gm{\vl{b} \vl{d}}{}, \gm{\vl{c} \vl{d}}{}, \gm{\vl{a} \vl{b} \vl{c}}{}, \gm{\vl{a} \vl{b} \vl{d}}{}, \gm{\vl{a} \vl{c} \vl{d}}{}, \gm{\vl{b} \vl{c} \vl{d}}{}, \allowbreak \gm{\vl{a} \vl{b} \vl{c} \vl{d}}{}$. This is computed similarly as in the case of 3-input multiplication as follows. 
Parties generate shares of $\gm{\vl{a} \vl{c}}{}, \gm{\vl{a} \vl{d}}{}, \gm{\vl{b} \vl{c}}{}, \gm{\vl{b} \vl{d}}{}$ similar to the generation of shares of $\gm{\vl{a} \vl{c}}{}$ in the 3-input multiplication. For $\gm{\vl{a} \vl{b}}{}, \gm{\vl{c} \vl{d}}{}$, parties proceed similar to generation of shares of $\gm{\vl{a} \vl{b}}{}$ in the 3-input multiplication, where the respective $\sgr{\cdot}$-shares are generated. This is followed by generation of shares of $\gm{\vl{a} \vl{b} \vl{c}}{}, \gm{\vl{a} \vl{b} \vl{d}}{}, \gm{\vl{a} \vl{c} \vl{d}}{}, \gm{\vl{b} \vl{c} \vl{d}}{}, \gm{\vl{a} \vl{b} \vl{c} \vl{d}}{}$ following steps similar to the ones involved in generating $\gm{\vl{a} \vl{b} \vl{c}}{}$ in the 3-input multiplication. Since the protocol is very similar to the 3-input protocol, we omit the formal details. 

\paragraph{Bit to Arithmetic}
\label{pa:4pcBit2A}
For verifying the $\sgr{\cdot}$-sharing of $\vl{u}$ by $P_0$, we let $P_3$ obtain the bit $(\pad{\bitb}{} \xor \vl{r}_{b})$ as well as its arithmetic equivalent $\arval{(\pad{\bitb}{} \xor \vl{r}_{b})}$ in clear. Here $\vl{r}_{b}$ denotes a random bit known to $P_0, P_1, P_2$. $P_3$ checks if both the received values are equivalent and raise a complaint if they are inconsistent. To catch a corrupt $P_0$ from sharing a wrong $\vl{u}$ value, parties use the $\sgr{\cdot}$-shares of $\vl{u}$ to compute $\arval{(\pad{\bitb}{} \xor \vl{r}_{b})}$. Moreover, the verification steps are designed in such a way that every value communicated can be locally computed by at least two parties. This enables to use $\jsend$ for communication and hence the desired security guarantee is achieved.

\begin{protocolbox}{$\prot{\bitA}(\shrB{\bitb})$}{Bit to Arithmetic conversion}{fig:piBitA}
	Let $\vl{u} = \arval{(\pad{\vl{\bitb}}{})}$ and $\vl{v} = \arval{\mk{\vl{\bitb}}}$.\\
	\detail{
		{\bf Input(s):} $\shrB{\bitb}$.\\
		{\bf Output:} $\shr{\bitb}$.
	}
	\justify 
	\vspace{-2mm}
	\algoHead{Preprocessing:} 
	\begin{enumerate}[itemsep=0mm]
		\item Generation of $\sgr{\vl{u}}$: $P_0, P_3, P_i$ for $i \in \{1,2\}$ sample $\vl{u}^i$. $P_0$ sends $\vl{u}^3 = \vl{u} - \vl{u}^1 - \vl{u}^2$ to $P_1, P_2$.
		\item $P_0, P_1, P_2$ sample random $\vl{r}_{\bitb} \in \bitset$ and $\vl{r} \in \Z{\ell}$.
		\item $P_1, P_2$ $\jsend$ $\pad{\bitb}{3} \xor \vl{r}_{\bitb}$ to $P_3$. $P_3$ locally sets $\pad{\bitb}{} \xor \vl{r}_{\bitb} = (\pad{\bitb}{1} \xor \pad{\bitb}{2}) \xor (\pad{\bitb}{3} \xor \vl{r}_{\bitb})$.
		\item Parties compute: $P_1, P_0: \vl{w}_1 = \arval{\vl{r}_{\bitb}} + (\vl{u}^1 + \vl{u}^3) (1 - 2 \arval{\vl{r}_{\bitb}}) + \vl{r},~~P_2, P_0: \vl{w}_2 = (\vl{u}^2) (1 - 2 \arval{\vl{r}_{\bitb}}) - \vl{r}$. 
		\item $P_1, P_0$ $\jsend$ $\vl{w}_1$ to $P_3$, while $P_2, P_0$ $\jsend$ $\Hash(\vl{w}_2)$ to $P_3$.
		\item $P_3$ sets $\flag = \continue$ if $\Hash(\arval{(\pad{\bitb}{} \xor \vl{r}_b)} - \vl{w}_1) = \Hash(\vl{w}_2)$, else $\flag = \abort$. $P_3$ sends $\flag$ to $P_0, P_1, P_2$. Parties mutually exchange the flag and accept the value that forms the majority.
		\item For robust setting, if $\flag = \abort$, then $\TTP = P_1$ (or $P_2$).
	\end{enumerate}
	\justify
	\vspace{-2mm}
	\algoHead{Online:} Let $\vl{y} = \arval{\bitb}$.
	\begin{enumerate}[itemsep=0mm]
		\item Parties locally compute the following:
		\begin{align*}
			P_1, P_3: \vl{y}_1  &=  \vl{v} + \vl{u}^1 (1 - 2\vl{v}) \\
			P_2, P_3: \vl{y}_2 &=  \vl{u}^2 (1 - 2\vl{v}) \\
			P_1, P_2: \vl{y}_3  &=  \vl{u}^3 (1 - 2\vl{v})
		\end{align*}
		\item $(P_1, P_3), (P_2, P_3), (P_1, P_2)$ execute $\prot{\JSh}$ on $\vl{y}_1, \vl{y}_2, \vl{y}_3$ to generate the respective $\shr{\cdot}$-shares.
		\item Compute $\shr{\vl{y}} = \shr{\vl{y}_1} + \shr{\vl{y}_2} + \shr{\vl{y}_3}$.
	\end{enumerate}     
\end{protocolbox}

\begin{lemma}[Communication]
	\label{appl:pibitA}
	Protocol $\prot{\bitA}$~(\boxref{fig:piBitA}) requires $3 \ell + 1$ bits of communication in preprocessing, and $1$ round and $3 \ell$ bits of communication in the online phase.
\end{lemma}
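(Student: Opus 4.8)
The claim is a routine communication accounting, so the plan is to walk through $\prot{\bitA}$ (\boxref{fig:piBitA}) step by step, charging each interactive message either against the stated budget or against amortization, and plugging in the costs of the sub-primitives. In particular I will use that a single $\jsend$ costs the bit-length of the relayed value and $1$ amortized round (Lemma~\ref{appl:pijsend}), that all sampling from $\Func[Setup]$ is free, and that an invocation of $\prot{\JSh}$ in the online phase costs $\ell$ bits and $1$ amortized round (exactly as for the online send in $\prot{\Sh}$, cf. Lemma~\ref{appl:pish}), the confirming hash being clubbed across instances.

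For the preprocessing phase, the key observations are: every sampling step --- the $\sgr{\cdot}$-shares $\vl{u}^1,\vl{u}^2$, the randomness $\vl{r}_\bitb\in\bitset$ and $\vl{r}\in\Z{\ell}$, and the internal masks --- is non-interactive and contributes nothing. The only value that must be corrected to match a mask-determined quantity is $\vl{u}^3 = \vl{u}-\vl{u}^1-\vl{u}^2$, where $\vl{u}=\arval{\pad{\bitb}{}}$; since the $\sgr{\cdot}$-sharing semantics require $P_1$ and $P_2$ to each hold $\vl{u}^3$ while $P_3$ already holds $\vl{u}^1,\vl{u}^2$ from the joint sampling, $P_0$ must send $\vl{u}^3$ to both $P_1$ and $P_2$, i.e. $2\ell$ bits. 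Step 3 is a $\jsend$ of the single bit $\pad{\bitb}{3}\xor\vl{r}_\bitb$ to $P_3$, contributing $1$ bit. Step 5 is a $\jsend$ of $\vl{w}_1\in\Z{\ell}$ to $P_3$, i.e. $\ell$ bits, together with a $\jsend$ of $\Hash(\vl{w}_2)$ whose payload is a hash and therefore amortizes away, and the $\flag$ exchange in Step 6 is $O(1)$ bits per instance and amortizes likewise. Summing the charged terms gives $2\ell + 1 + \ell = 3\ell + 1$ bits, as stated.

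For the online phase, Steps 1 and 3 are purely local. Step 2 runs three invocations of $\prot{\JSh}$ in parallel --- on $\vl{y}_1$ by $(P_1,P_3)$, on $\vl{y}_2$ by $(P_2,P_3)$, and on $\vl{y}_3$ by $(P_1,P_2)$ --- and these are independent since all of $\vl{y}_1,\vl{y}_2,\vl{y}_3$ are computable locally at the start of the online phase. In each invocation the two sharers already hold the corresponding mask, hence the masked value, and $P_0$ only ever needed the $\pd{}$-shares obtained in preprocessing, so the masked value must be relayed only to the single remaining party among $\{P_1,P_2,P_3\}$ (namely $P_2$, $P_1$, and $P_3$ respectively); this is $\ell$ bits per invocation and $3\ell$ bits in total, all in one parallel round, with the confirming hashes deferred to the verification stage so that the amortized online round count stays $1$. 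The final local addition $\shr{\vl{y}}=\shr{\vl{y}_1}+\shr{\vl{y}_2}+\shr{\vl{y}_3}$ adds nothing, so the online cost is $1$ round and $3\ell$ bits.

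The main place where a naive count could slip --- and hence the part to state carefully --- is the per-party bookkeeping of who already holds which share: it is precisely the asymmetry of $\sgr{\cdot}$-sharing that forces Step 1 to cost $2\ell$ rather than $\ell$, and precisely the exclusion of $P_0$ from the masked values that keeps each online $\prot{\JSh}$ at $\ell$ rather than $2\ell$. The soft points --- that the hash of $\vl{w}_2$ and the $\flag$ bits do not appear in the amortized cost --- are handled exactly as in the proofs of Lemma~\ref{appl:pijsend} and Lemma~\ref{appl:pish}, observing that such messages are $O(\kappa)$ or $O(1)$ bits independently of the number of parallel conversions and therefore vanish in the per-invocation amortized cost.
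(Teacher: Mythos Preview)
Your proof is correct and follows essentially the same approach as the paper's own proof: the preprocessing cost is broken down as $2\ell$ bits for $P_0$ sending $\vl{u}^3$ to each of $P_1,P_2$, plus a $1$-bit $\jsend$ and an $\ell$-bit $\jsend$ (with the hash amortized), and the online cost is accounted for by the three parallel $\prot{\JSh}$ invocations at $\ell$ bits each. Your write-up is in fact more detailed than the paper's, particularly in spelling out why each $\prot{\JSh}$ in Step~2 costs only $\ell$ bits rather than more.
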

\begin{proof}
	During preprocessing, generation of $\sgr{\vl{u}}$ involves communication of $\ell$ bits from $P_0$ to each of $P_1, P_2$. As part of verification, two instances of $\jsend$ are executed, one on $1$ bit and other on $\ell$ bits. The communication for hash gets amortized over multiple instances. The online phase involves three instances of joint sharing protocol resulting in $1$ rounds and a communication of $3\ell$ bits. The costs follow from Lemma~\ref{appl:pijsend}.
\end{proof}

\paragraph{Bit Injection}
\label{pa:4pcbitInj}

\begin{lemma}[Communication]
	\label{appl:pibitinj}
	Protocol $\prot{\bitinj}$ requires $6\ell + 1$ bits of communication in preprocessing, and $1$ round and $3 \ell$ bits of communication in the online phase.
\end{lemma}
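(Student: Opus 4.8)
The plan is to charge the communication of $\prot{\bitinj}$ phase by phase, reusing the costs of its sub-protocols exactly as was done for $\prot{\bitA}$ in Lemma~\ref{appl:pibitA}. First I would isolate the two independent pieces of the preprocessing. The first piece is $P_0$'s generation of $\sgr{\arval{\pd{\bitb}}}$ together with the correctness check on it; since the text specifies this is performed "similar to $\prot{\bitA}$", I would argue it imports the entire preprocessing cost of $\prot{\bitA}$: $2\ell$ bits for $P_0$ distributing the third $\sgr{\cdot}$-share to $P_1$ and $P_2$, and $\ell+1$ bits for verification (one $\jsend$ on a single bit, one $\jsend$ on an $\ell$-bit value, with the companion hash-$\jsend$ amortizing over instances), for a total of $3\ell+1$ bits by Lemmas~\ref{appl:pibitA} and~\ref{appl:pijsend}. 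The second piece is the single invocation of $\piMultsgr$ (\boxref{fig:piMultsgr}) on $\sgr{\arval{\pd{\bitb}}}$ and $\sgr{\pd{\vl{v}}{}}$; here I would first observe that $\sgr{\pd{\vl{v}}{}}$ costs nothing to form, because each of $P_1,P_2,P_3$ already holds, from the input $\shr{\vl{v}}$, the two additive shares of $\pd{\vl{v}}{}$ needed for its $\sgr{\cdot}$-share, while $P_0$ holds $\pd{\bitb}{}$ and $\pd{\vl{v}}{}$ in the clear, so $\piMultsgr$ applies directly. By inspection $\piMultsgr$ is a non-interactive $\FZero$ call followed by three $\jsend$ instances on $\ell$-bit values, i.e. $3\ell$ bits by Lemma~\ref{appl:pijsend}. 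All remaining preprocessing operations — assembling the $\sgr{\cdot}$-/additive shares of the cross-terms in the expansion of $\arval{(\bitb\vl{v})}$ out of already-shared quantities — are local, so the preprocessing total is $(3\ell+1)+3\ell = 6\ell+1$ bits.

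For the online phase I would mirror the analysis of $\prot{\bitA}$: each of the pairs $(P_1,P_3)$, $(P_2,P_3)$, $(P_1,P_2)$ locally computes an additive share of $\arval{(\bitb\vl{v})}$ and jointly $\shr{\cdot}$-shares it with one call to $\prot{\JSh}$, after which the parties locally add the three resulting $\shr{\cdot}$-sharings. Each joint sharing by a pair costs $\ell$ bits in one round (the designated sender forwards the masked value to the single party that lacks it, and the co-sender's hash is deferred and amortized), and the three calls are independent so they run in parallel; hence the online phase takes $1$ round and $3\ell$ bits, with the concluding local addition free.

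The only non-routine part is the bookkeeping of what is already available without communication and what is actually imported from the $\prot{\bitA}$ analysis: I must confirm that $\sgr{\pd{\vl{v}}{}}$ is free (so that $\piMultsgr$ applies with no setup overhead), and that "generating $\sgr{\arval{\pd{\bitb}}}$ and verifying its correctness, similar to $\prot{\bitA}$" really reuses the full $3\ell+1$-bit preprocessing cost of $\prot{\bitA}$ — both the $2\ell$-bit generation and the $\ell+1$-bit verification — rather than only the generation step. Once these two points are settled, the bit counts add up immediately, and the round count in the online phase follows from the parallelizability of the three $\prot{\JSh}$ invocations, just as recorded in the proof of Lemma~\ref{appl:pibitA}.
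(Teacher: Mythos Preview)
Your proposal is correct and follows essentially the same approach as the paper's own proof: decompose the preprocessing into the $\prot{\bitA}$-style generation and verification of $\sgr{\arval{\pd{\bitb}}}$ ($3\ell+1$ bits) plus one $\piMultsgr$ invocation ($3\ell$ bits), and observe that the online phase is identical in structure to that of $\prot{\bitA}$ ($3\ell$ bits, $1$ round). Your treatment is in fact more explicit than the paper's, which simply defers to Lemma~\ref{appl:pibitA} and $\piMultsgr$ without spelling out why $\sgr{\pd{\vl{v}}{}}$ is free or itemizing the bit counts.
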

\begin{proof}
	During preprocessing, generation of $\sgr{\vl{u}_i}$ for $i \in [m]$ and its verification is similar to $\prot{\bitA}$. The cost of generating $\sgr{\mu_i}$ follows from $\piMultsgr$. The communication in the online phase is similar to that of the $\prot{\bitA}$ protocol. The cost follows from Lemma~\ref{appl:pibitA}. 
\end{proof}

\paragraph{Piecewise Polynomials}
\label{pa:4pcPiecewise}

\begin{lemma}[Communication]
	\label{appl:piecewise}
	Protocol $\prot{\piecewise}$~(\boxref{fig:piecewise}) requires $m(6\ell + 1)$ bits of communication in preprocessing, and $1$ round and $3 \ell$ bits of communication in the online phase.
\end{lemma}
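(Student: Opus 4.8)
The plan is to read the cost of $\prot{\piecewise}$~(\boxref{fig:piecewise}) off its two interactive ingredients: the $m$ bit-injection computations that yield shares of the product terms $\bitb_i\cdot(f_i-f_{i-1})$, and the single joint-sharing step that re-randomizes the aggregate $f(y)=\sum_{i=1}^m\bitb_i(f_i-f_{i-1})$. Everything else is local: the $\shrB{\cdot}$-shares of the $\bitb_i$ are treated as inputs (exactly as $\prot{\bitinj}$ treats $\shrB{\bitb}$), multiplication of a shared bit by the public difference $f_i-f_{i-1}$ involves no communication, and the summation over $i$ is a linear operation on shares.

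For the \emph{preprocessing}, I would observe that the protocol runs, independently for each $i\in[m]$, the preprocessing of $\prot{\bitinj}$ associated with the term $\bitb_i\cdot(f_i-f_{i-1})$: generation and verification of the $\sgr{\cdot}$-sharing of $\arval{\pd{\bitb_i}}$, together with the $\piMultsgr$-based generation of the auxiliary $\sgr{\cdot}$-share consumed in the online phase. By Lemma~\ref{appl:pibitinj} each such invocation costs $6\ell+1$ bits, so the total preprocessing communication is $m(6\ell+1)$ bits, with no additional rounds over those already charged to $\prot{\bitinj}$.

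For the \emph{online} phase the crucial point is the optimization that makes the number of joint-sharings independent of $m$. Following the online phase of $\prot{\bitinj}$ (cf.\ $\prot{\bitA}$, \boxref{fig:piBitA}), each of the pairs $(P_1,P_3)$, $(P_2,P_3)$, $(P_1,P_2)$ can \emph{locally} form an additive share of $\arval{(\bitb_i(f_i-f_{i-1}))}$ for each $i$ and then sum these over $i$ to obtain an additive share ($\vl{y}_1,\vl{y}_2,\vl{y}_3$, respectively) of $f(y)$; only then do the three pairs invoke $\prot{\JSh}$ on $\vl{y}_1,\vl{y}_2,\vl{y}_3$ and locally add the resulting $\shr{\cdot}$-shares. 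Hence there are exactly three $\prot{\JSh}$ invocations, regardless of $m$; by Lemma~\ref{appl:pijsend} each contributes $\ell$ bits, the hash messages amortize over instances, and the three run in parallel, giving $1$ round and $3\ell$ bits online.

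The step needing the most care is structural rather than computational: one must justify that the per-term online phases of $\prot{\bitinj}$ may be merged at the level of their intermediate additive shares — i.e.\ that each pair's additive share of $f(y)$ is a linear combination of values it already holds after preprocessing and of the public $\mk{\cdot}$ quantities — so that aggregating over the $m$ terms before the $\prot{\JSh}$ step preserves both correctness and the $\shr{\cdot}$-sharing semantics. Once this is made precise, the round and communication counts follow directly from Lemmas~\ref{appl:pibitinj}, \ref{appl:pibitA} and \ref{appl:pijsend}.
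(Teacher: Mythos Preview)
Your proposal is correct and follows essentially the same approach as the paper: the preprocessing is accounted for as $m$ instances of the $\prot{\bitinj}$ preprocessing (Lemma~\ref{appl:pibitinj}), and the online phase is recognized as a single batched application of three $\prot{\JSh}$ calls after locally summing the per-term additive shares. One small slip: the differences $f_i - f_{i-1}$ are not public constants but $\shr{\cdot}$-shared values (they are polynomials evaluated at the secret input), which is precisely why the $\piMultsgr$ step you correctly include in the preprocessing is needed; your cost accounting is unaffected by this.
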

\begin{proof}
	During preprocessing, generation of $\sgr{\vl{u}_i}, \sgr{\mu_i}$ for $i \in [m]$ and its verification is similar to $\prot{\bitinj}$. The communication in the online phase is similar to that of the $\prot{\bitinj}$ protocol except that parties locally add the values before executing $\piJSh$. The cost follows from Lemma~\ref{appl:pibitinj}. 
\end{proof}

\medskip
\begin{protocolbox}{$\prot{\piecewise} \left( \{ \shrB{\bitb_i}, \shr{\vl{v}_i} \}_{i=1}^{m} \right)$}{Piecewise polynomial evaluation protocol}{fig:piecewise}
	\smallskip
	Let $\vl{u}_i = \arval{\pd{\bitb_i}}$ and $\mu_i = \arval{\pd{\bitb_i}} \pd{\vl{v}_i} $.\\
	\detail{
		{\bf Input(s):} $ \{ \shrB{\bitb_i}, \shr{\vl{v}_i} \}_{i=1}^{m} $.\\
		{\bf Output:} $\shr{\vl{z}} = \shr{\sum_{i=1}^{m} \bitb_i \cdot \vl{v}_i}$.
	}
	\justify 
	\vspace{-2mm}
	\algoHead{Preprocessing:} For $i \in [m]$, perform the following:
	\begin{enumerate}[itemsep=0mm]
		\item Parties proceed similar to $\prot{\bitA}$ to generate $\sgr{\vl{u}_i}$~(\boxref{fig:piBitA}).
		\item Generation of $\sgr{\mu_i}$: Invoke $\piMultsgr(\vl{u}_i, \pd{\vl{v}_i})$.  
	\end{enumerate}
	\justify
	\vspace{-2mm}
	\algoHead{Online:} 
	\begin{enumerate}[itemsep=0mm]
		\item Parties locally compute the following:
		\begin{align*}
			P_1, P_3 : \vl{z}_i^1 &= \arval{\mk{\bitb_i}} \mk{\vl{v}_i} - \arval{\mk{\bitb_i}} \pad{\vl{v}_i}{1}  + (2 \arval{\mk{\bitb_i}} - 1) (\mu_i^{1} -  \mk{\vl{v}_i} \vl{u}_i^{1})\\
			P_2, P_3 : \vl{z}_i^2 &= ~~~~~~~~~~~~~- \arval{\mk{\bitb_i}} \pad{\vl{v}_i}{2}  + (2 \arval{\mk{\bitb_i}} - 1) (\mu_i^{2} -  \mk{\vl{v}_i} \vl{u}_i^{2})\\
			P_1, P_2 : \vl{z}_i^3 &= ~~~~~~~~~~~~~- \arval{\mk{\bitb_i}} \pad{\vl{v}_i}{3}  + (2 \arval{\mk{\bitb_i}} - 1) (\mu_i^{3} -  \mk{\vl{v}_i} \vl{u}_i^{3})
		\end{align*}
		\item Set $\vl{z}^1 = \sum_{i=1}^{m} \vl{z}_i^1$,~~$\vl{z}^2 = \sum_{i=1}^{m} \vl{z}_i^2$,~~$\vl{z}^3 = \sum_{i=1}^{m} \vl{z}_i^3$
		\item $(P_1, P_3), (P_2, P_3), (P_1, P_2)$ execute $\prot{\JSh}$ on $\vl{z}^1, \vl{z}^2, \vl{z}^3$ to generate the respective $\shr{\cdot}$-shares.
		\item Compute $\shr{\vl{z}} = \shr{\vl{z}^1} + \shr{\vl{z}^2} + \shr{\vl{z}^3}$.
	\end{enumerate}     
\end{protocolbox}

\paragraph{Non-Linear Activation functions}
\label{pa:4pcAct}
We discuss two widely used activation functions, (i) Rectified Linear Unit ($\relu$) and (ii) Sigmoid (Sig). These functions can be viewed as piece-wise polynomial functions and can thus be evaluated using the protocol mentioned above~($\prot{\piecewise}$,~\boxref{fig:piecewise}). 

{\em (i) ReLU:}
The $\relu$ function, $\relu(\vl{v}) = \max(0, \vl{v})$, can be written as a piece-wise polynomial function as follows.

\begin{small}
	\begin{align*}
		\relu(\vl{v}) = 
		\begin{cases}
			0, & \vl{v} <0 \\
			\vl{v} & 0 \leq \vl{v}
		\end{cases}
	\end{align*}
\end{small}

{\em (ii) Sig:}
We use the MPC-friendly variant of the Sigmoid function~\cite{SP:MohZha17, CCS:MohRin18, CCSW:CCPS19} which is given below:

\begin{small}
	\begin{align*}
		\sig(\vl{v}) = \left\{
		\begin{array}{lll}
			0                  & \quad \vl{v} < -\frac{1}{2} \\
			\vl{v} + \frac{1}{2} & \quad - \frac{1}{2} \leq \vl{v} \leq \frac{1}{2} \\
			1                  & \quad \frac{1}{2} < \vl{v}
		\end{array}
		\right.
	\end{align*}
\end{small}

\paragraph{ArgMin/ ArgMax}
\label{pa:4pcArgMinMaz}
The formal protocol appears in \boxref{fig:piargmin}. Here, $\pibitextf(\shr{\vl{x}_1}, \shr{\vl{x}_2})$ computes the boolean sharing corresponding to the $\msb$ of $\vl{x}_1 - \vl{x}_2$.

\begin{protocolbox}{$\piargmin (\shr{\vct{x}})$}{Protocol to find index of smallest element in $\vct{x}$}{fig:piargmin}
	\justify
	Let $\vct{b}$ be the bit vector of size $m$, where $m$ equals the size of $\vct{x}$. Parties execute the following steps in the respective preprocessing and online phases. 
	\begin{enumerate}[itemsep=0mm]
		\item If $m = 2$, do the following.
		\begin{enumerate}
			\item[--] $\shrB{\vl{d}_1} = \pibitextf(\shr{\vl{x}_1}, \shr{\vl{x}_2})$ and $\shrB{\vl{d}_2} = 1 \xor \shrB{\vl{d}_1}$.
			\item[--] $\shr{\vl{y}} = \piobv(\shr{\vl{x}_2}, \shr{\vl{x}_1}, \shrB{\vl{d}_1})$.
			\item[--] Return $(\shrB{\vl{d}_1}, \shrB{\vl{d}_2}, \shr{\vl{y}})$.
		\end{enumerate} 
		\item Else, if $m = 3$, do the following
		\begin{enumerate}
			\item[--] $\shrB{\vl{d}_1^{\prime}} = \pibitextf(\shr{\vl{x}_1}, \shr{\vl{x}_2})$.
			\item[--] $\shr{\vl{y}^{\prime}} = \piobv(\shr{\vl{x}_2}, \shr{\vl{x}_1}, \shrB{\vl{d}_1^{\prime}})$. 
			\item[--] $\shrB{\vl{d}_2^{\prime}} = \pibitextf(\shr{\vl{y}^{\prime}}, \shr{\vl{x}_3})$.
			\item[--] $\shr{\vl{y}} = \piobv(\shr{\vl{x}_3}, \shr{\vl{y}^{\prime}},\shrB{\vl{d}_2^{\prime}})$.
			\item[--] $\shrB{\vl{d}_1} = \piMult(\shrB{\vl{d}_1^{\prime}}, \shrB{\vl{d}_2^{\prime}})$, $\shrB{\vl{d}_2} = \shrB{\vl{d}_2^{\prime}} \xor \shrB{\vl{d}_1}$.
			\item[--] $\shrB{\vl{d}_3} = 1 \xor \shrB{\vl{d}_1^{\prime}} \xor \shrB{\vl{d}_2^{\prime}}$. 
			\item[--] Return $(\shrB{\vl{d}_1}, \shrB{\vl{d}_2}, \shrB{\vl{d}_3}, \shr{\vl{y}})$.
		\end{enumerate}
	    \item Else, let $\vct{x_1} = (\vl{x}_1, \ldots, \vl{x}_{\lfloor m/2 \rfloor})$ and $\vct{x_2} = (\vl{x}_{\lfloor m/2 \rfloor + 1}, \ldots, \vl{x}_m)$.
	    \begin{enumerate}
	    	\item[--]  $\big(\shrB{\vl{d}_1}, \ldots, \shrB{\vl{d}_{\lfloor m/2 \rfloor}}, \shr{\vl{y}_1} \big) = \piargmin(\shr{\vct{x_1}})$.
	    	\item[--]  $\big(\shrB{\vl{d}_{\lfloor m/2 \rfloor + 1}}, \ldots, \shrB{\vl{d}_m}, \shr{\vl{y}_2} \big) = \piargmin(\shr{\vct{x_2}})$.
	    	\item[--]  $\shrB{\vl{d}} = \pibitextf(\shr{\vl{y}_1}, \shr{\vl{y}_2})$.
	    	\item[--]  $\shr{\vl{y}} = \piobv(\shr{\vl{y}_2}, \shr{\vl{y}_1}, \shrB{\vl{d}})$.
	    	\item[--] $\shrB{\bitb_j} = \piMult(\shrB{\vl{d}}, \shrB{\vl{d}_j})$ ;~$j \in \{1, \ldots, \lfloor m/2 \rfloor \}$.
	    	\item[--] $\shrB{\bitb_j} = \piMult(1 \xor \shrB{\vl{d}}, \shrB{\vl{d}_j})$ ;~$j \in \{ \lfloor m/2 \rfloor + 1, \ldots, m \}$.
	    	\item[--] Return $\big( \shrB{\bitb_1},  \ldots, \shrB{\bitb_m}, \shr{\vl{y}} \big)$.
	    \end{enumerate}
	\end{enumerate}		
\end{protocolbox}

To begin with, parties initialize $\bitb_j = 1$ for $\bitb_j \in \vct{b}$ by locally setting $\mk{{\bitb}_j} = 1$ and $\pad{{\bitb}_j}{1} = \pad{{\bitb}_j}{2} = \pad{{\bitb}_j}{3} = 0$. The minimum, $\vl{y}_{ij}$, of two elements, $\vl{x}_i,\vl{x}_j$ can be computed as: one invocation of bit extraction protocol to obtain $\shrB{\cdot}$-sharing of $\bitb_{ij}$, where $\bitb_{ij} = 1$ if $\vl{x}_i < \vl{x}_j$, and $\bitb_{ij} = 0$ otherwise; one invocation of oblivious selection protocol $\piobv(\vl{x}_j, \vl{x}_i, \bitb_{ij})$, which outputs $\shr{\cdot}$-shares of $\vl{y}_{ij} = \vl{x}_j$ if $\bitb_{ij} = 0$, and $\vl{y}_{ij} = \vl{x}_i$, otherwise. 
To update $\vct{b}$ to reflect the pairwise minimums, we view the elements $\vl{x}_j \in \vct{x}$ as the leaves of a binary tree, in a bottom-up manner. For two elements in a pair, say $(\vl{x}_i, \vl{x}_j)$, whose pairwise minimum is $\vl{y}_{ij}$, we let $\vl{y}_{ij}$ be the root node with $\vl{x}_i$ as its left child and $\vl{x}_j$ as its right child. Now, to update $\vct{b}$, parties multiply $\bitb_{ij}$ with the bits in $\vct{b}$ associated with the {\em left-reachable leaf nodes}, which comprise of all the leaf nodes (elements of $\vct{x}$) that are reachable through the left child of the root. 
Similarly, parties multiply $1 \xor \bitb_{ij}$ with the bits in $\vct{b}$ associated with the {\em right-reachable leaf nodes}, which comprise of all the leaf nodes (elements of $\vct{x}$) that are reachable through the right child of the root. Thus, if $\bitb_{ij} = 1$ indicating that $\vl{x}_i < \vl{x}_j$, $\bitb_i$ remains $1$ as it gets multiplied by $\bitb_{ij} = 1$ while $\bitb_j$ gets reset to $0$ as it gets multiplied by $1 \xor \bitb_{ij} = 0$. The case for $\bitb_{ij} = 0$ holds for similar reasons.
Given the values $\vl{y}_{ij}$ for the next level, and the updated $\vct{b}$, the steps are applied recursively until the minimum element is obtained. 

The protocol $\piargmax$ which allows the parties to compute the index of the largest element in a $\shr{\cdot}$-shared vector $\vct{x} = (\vl{x}_1, \ldots, \vl{x}_m)$, is similar to $\piargmin$ with the following difference. To find the maximum among two elements $(\shr{\vl{x}_i}, \shr{\vl{x}_j})$, parties run the bit extraction protocol to obtain $\shrB{\bitb_{ij}}$ as before, followed by $\piobv(\vl{x}_i, \vl{x}_j, \bitb_{ij})$, which outputs $\shr{\cdot}$-shares of $\vl{y}_{ij} = \vl{x}_i$ if $\bitb_{ij} = 0$, and $\vl{y}_{ij} = \vl{x}_j$, otherwise. Now, $\vct{b}$ is updated in each level by multiplying $1 \xor \bitb_{ij}$ with the bits in $\vct{b}$ associated with the {\em left-reachable leaf nodes} (as described before) and multiplying $\bitb_{ij}$ with the bits in $\vct{b}$ associated with the {\em right-reachable leaf nodes}.

\section{Garbled World}
\label{app:garbled}

\subsection{Garbling scheme and properties}
\label{app:gcproperties}
As per Yao's garbling circuit paradigm~\cite{FOCS:Yao82b}, every wire in the circuit is assigned two $\csec$-bit strings, called ``keys'', one each for bit value $0$ and $1$ on that wire. Let $(\key{\vl{x}}{0},\key{\vl{x}}{1})$ denote the zero-key and one-key, respectively, on wire $\vl{x}$ in the circuit.
For simplicity, the same notation is used for wire identity as well as the value on the wire. For instance,  the key-pair for wire $\vl{x}$ is denoted as $(\key{\vl{x}}{0},\key{\vl{x}}{1})$, while the key corresponding to bit  $\vl{x}$ on the wire is denoted as $\key{\vl{x}}{\vl{x}}$.
Then, each gate is constructed by encrypting the output-wire key with the appropriate input-wire keys. For example, for an AND gate with input wires $\vl{x}, \vl{y}$ and output wire $\vl{z}$, $\key{\vl{z}}{0}$ is double encrypted with keys $\key{\vl{x}}{0}, \key{\vl{y}}{0}$, with $\key{\vl{x}}{0}, \key{\vl{y}}{1}$, and with $\key{\vl{x}}{1}, \key{\vl{y}}{0}$, while $\key{\vl{z}}{1}$ is double encrypted with $\key{\vl{x}}{1}, \key{\vl{y}}{1}$. Given one key on each input wire, the output wire key can be obtained by decrypting the ciphertext which was encrypted using the corresponding input wire keys. These ciphertexts are provided in a permuted order so that the evaluating party does not learn which key, $\key{\vl{z}}{0}$ or $\key{\vl{z}}{1}$, it obtains after decryption.

Formally, a garbling scheme $\GS$, consists of four algorithms $(\Gb, \En, \allowbreak \Ev, \De)$ defined as follows:
\begin{enumerate}
	\item $\Gb(\onesec,\Ckt) \rightarrow (\GC,e,d)$: $\Gb$ takes as input the security parameter $\kappa$ and the circuit $\Ckt$ to be garbled, and outputs a garbled circuit $\GC$, encoding information $e$ and decoding information $d$.
	\item $\En(x,e) \rightarrow \X$: $\En$ encodes input $x$ using $e$ to output encoded input $\X$. $\X$ is referred to as encoded input or encoded keys interchangeably.
	\item $\Ev(\GC,\X) \rightarrow \Y$:  $\Ev$ evaluates the garbled circuit $\GC$ on the encoded input $\X$ and produces the encoded output $\Y$.
	\item $\De(\Y,d) \rightarrow y$: The encoded output $\Y$ is decoded into the clear output $y$ by running the $\De$ algorithm on $\Y$ and $d$.
\end{enumerate}

We rely on the following properties of garbling scheme~\cite{CCS:BelHoaRog12} in our constructions.
\begin{enumerate}[itemsep=0mm]
	\item A garbling scheme $\GS = (\Gb, \En, \Ev, \De)$ is {\em correct} if for all input lengths $n \leq \poly(\csec)$, circuits $C: \{0,1\}^{n} \rightarrow \{0,1\}^{m}$ and inputs $x \in \{0,1\}^{n}$, the following holds.
		\begin{align*}
			\Prob[\De(\Ev(\GC, \En(x, e)), d) \neq C(x) : \\~~~(\GC, e, d) \leftarrow \Gb(\onesec, C)] < \negl(\csec)
		\end{align*}
	\item A garbling scheme $\GS$ is said to be \textit{private} if for all $n \leq \poly(\csec)$, circuit $C: \{0,1\}^{n} \rightarrow \{0,1\}^{m}$, there exists a $\ppt$ simulator $\Sim_{\priv}$ such that for all $x \in \{0,1\}^{n}$, for all $\ppt$ adversary $\Adv$ the following distributions are computationally indistinguishable.
	\begin{myitemize}
		\item[-] $\Real(C, x)$: run $(\GC, e, d) \leftarrow \Gb(\onesec, C)$ and output $(\GC, \En(x, e), d)$.
		\item[-] $\Ideal(C, C(x))$: run $(\GC^{\prime}, \textbf{X}, d^{\prime}) \leftarrow \Sim_{\priv}(\onesec, C, C(x))$ and output $(\GC^{\prime}, \textbf{X}, d^{\prime})$.
	\end{myitemize}
	\item A garbling scheme $\GS$ is \textit{authentic} if for all $n \leq \poly(\csec)$, circuit $C: \{0,1\}^{n} \rightarrow \{0,1\}^{m}$, input $x \in \{0,1\}^{n}$ and for all $\ppt$ adversary $\Adv$, the following probability is $\negl(\csec)$.
	
	\begin{footnotesize}
		\begin{align*}
			\Prob \Bigg(
			\begin{aligned}
				&\hat{\textbf{Y}} \neq \Ev(\GC, \textbf{X}) \\
				&\wedge \De(\hat{\textbf{Y}}, d) \neq \bot
			\end{aligned}
			:
			\!
			\begin{aligned}
				\textbf{X} = \En(x, e), &(\GC, e, d) \leftarrow \Gb(\csec, \Ckt), \\
				&\hat{\textbf{Y}} \leftarrow \Adv(\GC, \textbf{X})
			\end{aligned}
			\Bigg)
		\end{align*}
	\end{footnotesize}
	
\end{enumerate}

\subsection{2GC Variant}
\label{subsec:gcworld2}
We begin with the details of the evaluation and output phases. 

\paragraph{Evaluation} 
\label{subsec:gbeval} 
Let $f(\vl{x})$ be the function to be evaluated. At this point, the function input is $\shrC{\cdot}$-shared. This renders $\shrG{\cdot}$-sharing for the input of the GC that corresponds to the function $f'\big({\mk{\vl{x}}}, {\av{\vl{x}}}, {\pad{\vl{x}}{3}} \big)$ which first combines the given boolean-shares to compute the actual input and then applies $f$ on it. Let $\GC_j$ denote the garbled circuit to be sent to $P_j \in \{P_1, P_2\}$ by garblers in $\PlSet{j}$. Sending of $\GC_j$ is overlapped  with the key transfer (during generation of $\shrC{\vl{x}}$), to save rounds, where garblers in $\{P_0, P_3\}$ $\jsend$ $\GC_j$ to $P_j$. On receiving the $\GC$, evaluators evaluate their respective GCs and obtain the key corresponding to the output, say $\vl{z}$. This generates $\shrG{\vl{z}}$.

\paragraph{Output phase} 
\label{subsec:gbop} 
The goal of output computation is to compute the output $\vl{z}$ from $\shrG{\vl{z}}$.
To reconstruct $\vl{z}$ towards $P_j \in \{P_1, P_2\}$, two garblers in $\PlSet{j}$ send the least significant bit $\vl{p}^j$ of $\key{\vl{z}}{0, j}$, referred to as the decoding information, to $P_j$. If the received values are consistent, $P_j$ uses the received $\vl{p}^j$ to reconstruct $\vl{z}$ as $\vl{z} = \vl{p}^j \xor \vl{q}^j$, where $\vl{q}^j$ denotes the least significant bit of $\key{\vl{z}}{\vl{z}, j}$; else $P_j$ aborts. 
To reconstruct $\vl{z}$ towards the garblers $P_g \in \{P_0, P_3\}$, one evaluator, say $P_1$ sends the least significant bit, $\vl{q}^1$, of $\key{{\vl{z}}}{\vl{z}, 1}$ along with $\h = \Hash(\key{{\vl{z}}}{\vl{z}, 1})$ to $P_g$, where $\Hash$ is a collision-resistant hash function. If a garbler received a consistent $(\vl{q}^1, \h)$ pair from $P_1$ such that there exists a $K \in \{\key{{\vl{z}}}{{0, 1}}, \key{{\vl{z}}}{{1, 1}}\}$ whose least significant bit is $\vl{q}^1$ and $\Hash(K) = \h$, then it uses $\vl{q}^1$ for reconstructing $\vl{z}$; else the garbler aborts the computation.
Note that a corrupt evaluator $P_1$ cannot create confusion among garblers in $\{P_0, P_3\}$ by sending the key that was not output by the GC owing to the authenticity of the garbling scheme. Reconstruction is lightweight and requires a single round for garblers while reconstruction towards evaluators can be overlapped with key transfer and does not incur extra rounds.
The protocol appears in \boxref{fig:pirec}.

\begin{protocolbox}{$\pigrec(\Partyset, \shrG{\vl{z}})$}{Output computation: reconstruction of $\vl{z}$}{fig:pirec}
	\justify
	\begin{enumerate}
		\item For an output wire $\vl{z}$, let $\vl{p}^j$ denote the least significant bit of $\key{{\vl{z}}}{0,j}$ and $\vl{q}^j$ denote the least significant bit of $\key{{\vl{z}}}{\vl{z},j}$for $j \in \{1, 2\}$.
		\item {\em Reconstruction towards $P_j \in \{P_1, P_2\}$}: Garblers $P_0, P_3$ in $\PlSet{j}$ $\jsend$ $\vl{p}^j$ to $P_j$. If $P_j$ received consistent values from $P_0, P_3$, it reconstructs $\vl{z}$ as $\vl{z} = \vl{p}^j \xor \vl{q}^j$.
		\item {\em Reconstruction towards $P_g \in \{P_0, P_3\}$}: $P_1$ sends $\vl{q}^1$ and $\h = \Hash(\key{{\vl{z}}}{\vl{z},1})$ to $P_g$, where $\Hash$ is a collision-resistant hash function. 	
		$P_g$ uses the $\vl{q}^1$ received from $P_1$ for reconstructing $\vl{z}$ as $\vl{z} = \vl{p}^1 \xor \vl{q}^1$ if there exists a $K \in \{\key{{\vl{z}}}{{0,1}}, \key{{\vl{z}}}{{1,1}}\}$ whose least significant bit is $\vl{q}^1$ and $\Hash(K) = \h$. 
	\end{enumerate}
\end{protocolbox}

\paragraph{Optimizations when deployed in mixed framework}
\label{subsec:gbopt}
Working in the preprocessing model enables transfer of the (communication-intensive) GC and generating $\shrG{\cdot}$-shares of the input-independent shares of $\vl{x}$ (i.e. $ {\av{\vl{x}}}, {\pad{\vl{x}}{3}}$) in the preprocessing phase. Thus, the online phase is very light and only requires one round to generate $\shrG{\cdot}$-shares  for the input-dependent data (i.e. ${\mk{\vl{x}}}$). Since evaluation is local, evaluators obtain $\shrG{\cdot}$-sharing of the GC output at the end of $1$ round. 

\paragraph{Achieving fairness and robustness}
To ensure fairness, we require a fair reconstruction protocol which proceeds as follows. As described in \S\ref{sec:fair4pc}, parties first ensure that all parties are alive. 
If so, they proceed similar to the protocol in \boxref{fig:pirec}, except with the following differences. For reconstruction towards evaluators, {\em all} three respective garblers send it the decoding information. The evaluator selects the value appearing in majority for reconstruction. For reconstruction towards garblers $P_0, P_3$, {\em both} the evaluators send the least significant bit of the output key together with its hash to the garbler. The presence of at least one honest evaluator guarantees that both garblers will be on the same page. 

To achieve robustness, the main difference from its fair counterpart is use of a robust $\jsend$ primitive. This guarantees that in the event that a misbehaviour is detected, a $\TTP$ is identified which can take the computation to completion and deliver the output to all. 

\subsection{1 GC Variant}
\label{subsubsec:gcworld1}
The input $\vl{x} = \vl{x}_1 \xor \vl{x}_2 \xor \vl{x}_3$ for this variant consists of the shares, $\vl{x}_1 = \mk{\vl{x}} \xor \pad{\vl{x}}{2}$ and $\vl{x}_2 = \pad{\vl{x}}{3}, \vl{x}_3 = \pad{\vl{x}}{1}$, where $\mk{\vl{x}}, \pad{\vl{x}}{1}, \pad{\vl{x}}{2}, \pad{\vl{x}}{3}$ are as defined in $\shrB{\vl{x}}$. While keys for the GC are sampled by all three garblers $P_0, P_2, P_3$, it suffices for only $P_0, P_3$ to generate and $\jsend$ the GC to evaluator $P_1$, and $P_2$ assists only in the key transfer. Elaborately, the common input $\vl{x}_3$ held by $P_0, P_3$ is hard-coded in the circuit before being garbled by them. This necessitates a key transfer only for inputs $\vl{x}_1$ and $\vl{x}_2$. Garblers $P_0, P_2, P_3$ generate keys for the inputs following a similar procedure as in the 2GC variant. Then, $P_2, P_3$ $\jsend$ the key for $\vl{x}_1$ to $P_1$ while garblers $P_0, P_2$ $\jsend$ the key for $\vl{x}_2$. 

The evaluation and output phases are similar to the 2GC variant except that now there exists only a single garbling instance. Looking ahead, in the mixed protocol framework, the output has to be reconstructed towards $P_1, P_2$. Reconstruction towards $P_1$ does not incur additional rounds since sending of decoding information can be overlapped with key transfer. However, unlike in the 2GC variant where reconstruction towards $P_2$ can be done similar to reconstruction towards $P_1$, in the 1GC variant an additional round is required as $P_2$ is no longer an evaluator. This incurs one extra round as opposed to the 2GC variant.

\paragraph{Achieving fairness}
To ensure fair reconstruction~(\S\ref{sec:fair4pc}), parties first perform an aliveness check. Following this, they proceed towards fair reconstruction of $\vl{z}$ from $\shrG{\vl{z}}$ as follows. First, reconstruction of $\vl{z}$ is carried out towards the garblers $P_g \in \PlSet{1}$. For this, $P_1$ sends $\vl{q}$ (least significant bit of \key{{\vl{z}}}{\vl{z}}) and $\h = \Hash(\key{{\vl{z}}}{\vl{z}})$ to $P_g$ as before. Now, if a garbler received a consistent $(\vl{q}, \h)$ pair from $P_1$ such that there exists a $K \in \{\key{{\vl{z}}}{{0}}, \key{{\vl{z}}}{{1}}\}$ whose least significant bit is $\vl{q}$ and $\Hash(K) = \h$, then it uses $\vl{q}$ for reconstructing $\vl{z}$, and sends $\vl{z}$ to its co-garblers. Else, a garbler accepts a $\vl{z}$ received from a co-garbler as the output. Thus, further dissemination of the output by garblers ensures that all parties are on the same page. 
If garblers receive the output, reconstruction of $\vl{z}$ is carried out towards $P_1$. For this, all garblers (who received the output) send the decoding information to $P_1$ who selects the majority value to reconstruct $\vl{z}$. 

\paragraph{Achieving robustness}
To attain robustness, we list below the differences from the fair protocol that have to be carried out. The first difference is use of a robust variant of $\jsend$. Second, in input sharing protocol, where $\vl{x}_1$ is held by only garbler $P_0$, a corrupt $P_0$ may refrain from providing $P_1$ with the correct key (sent as the opening information for the commitment). To ensure robustness, in the event that $P_1$ fails to receive the correct key from $P_0$, we let $P_1$ complain to all parties about this inconsistency by sending an inconsistency bit. All parties exchange this inconsistency bit among themselves, and agree on the majority value. If all parties agree on the presence of an inconsistency, then $P_0, P_1$ are identified to be in conflict and $\TTP = P_2$ is set to carry out the rest of the computation. 
Finally, to ensure a robust reconstruction, the following approach is taken. Observe that the fair reconstruction provides robustness as long as evaluator $P_1$ is honest. In the event when none of the garblers obtain the output in the fair protocol, it is guaranteed that evaluator $P_1$ is corrupt. Thus, in such a scenario, all parties take $P_1$ to be corrupt, and proceed with $P_0$ as the $\TTP$.

\section{Mixed Framework}
\label{app:mixed}

\paragraph{Arithmetic to Boolean Conversion}
The protocol for arithmetic to boolean conversion appears in \boxref{fig:piab}.

\begin{protocolbox}{$\piab$}{Arithmetic to Boolean Conversion}{fig:piab}
	\justify
	\algoHead{Preprocessing:} $P_0, P_3$ execute joint boolean sharing to generate $\shrB{\vl{v}_2}$,  where $\vl{v}_2 =-(\pad{\vl{v}}{1} + \pad{\vl{v}}{2})$.
	\justify
	\vspace{-2mm}
	\algoHead{Online:}
	\begin{enumerate}[itemsep=0mm]
		\item $P_1, P_2$ execute joint boolean sharing to generate $\shrB{\vl{v}_1}$, where $\vl{v}_1 = \mk{\vl{v}} - \pad{\vl{v}}{3}$.
		\item Parties obtain $\shrB{\vl{v}} = \shrB{\vl{v}_1} + \shrB{\vl{v}_2}$ using addition circuit.
	\end{enumerate}
\end{protocolbox}

\paragraph{Boolean to Arithmetic Conversion}
The protocol for arithmetic to boolean conversion appears in \boxref{fig:piba}. We remark that the protocol $\piba$ can be used to efficiently generate edaBits~\cite{C:EGKRS20} in our setting. For this, the parties non-interactively generate the boolean sharing for $\ell$-bits and perform the $\piba$ conversion to obtain the equivalent arithmetic value.

\begin{protocolbox}{$\piba(\Partyset, \shrB{\vl{v}})$}{Boolean to Arithmetic Conversion}{fig:piba}
	Let $\vl{v}_i$ denote the $i$th bit of $\vl{v}$. Let ${\pd{\vl{v}}}_i = \pad{\vl{v}_i}{1} \xor \pad{\vl{v}_i}{2} \xor \pad{\vl{v}_i}{3}$, $\vl{p}_i = \arval{({{\mk{\vl{v}}}}_i)}$, and $\vl{q} = \arval{({{\pd{\vl{v}}}}_i)}$ \\
	\justify 
	\vspace{-2mm}
	\algoHead{Preprocessing:} 
	\begin{enumerate}[itemsep=0mm]
		\item For $i \in \{0, 1, \ldots, \ell-1 \} $, parties execute the preprocessing of $\prot{\bitA}$ (\boxref{fig:piBitA}) for each bit $\vl{v}_i$ of $\vl{v}$, to generate $\sgr{{\vl{q}_i}} = (\vl{q}_i^1, \vl{q}_i^2, \vl{q}_i^3)$.
	\end{enumerate}
	\justify
	\vspace{-2mm}
	\algoHead{Online:} Let $\vl{y}_i = \arval{\vl{v}_i}$ and $\vl{y}$ denotes the arithmetic equivalent of $\vl{v}$.
	\begin{enumerate}
		\item Parties locally compute the following:
		\begin{align*}
			P_1, P_3: \vl{y}^1 = \sum_{i=0}^{\ell-1} 2^i \vl{y}_i^1  &=  \sum_{i=0}^{\ell-1} 2^i (\vl{p}_i + \vl{q}_i^1 (1 - 2\vl{p}_i)) \\
			P_2, P_3: \vl{y}^2 = \sum_{i=0}^{\ell-1} 2^i \vl{y}_i^2 &=  \sum_{i=0}^{\ell-1} 2^i (\vl{q}_i^2 (1 - 2\vl{p}_i))  \\
			P_1, P_2: \vl{y}^3 = \sum_{i=0}^{\ell-1} 2^i \vl{y}_i^3  &=  \sum_{i=0}^{\ell-1} 2^i (\vl{q}_i^3 (1 - 2\vl{p}_i)) 
		\end{align*}
		\item $(P_1, P_3), (P_2, P_3), (P_1, P_2)$ execute $\prot{\JSh}$ on $\vl{y}^1, \vl{y}^2, \vl{y}^3$ to generate the respective $\shr{\cdot}$-shares.
		\item Parties locally compute $\shr{\vl{y}} = \shr{\vl{y}^1} + \shr{\vl{y}^2} + \shr{\vl{y}^3}$.
	\end{enumerate}     
\end{protocolbox}

\paragraph{End-to-end Conversions}

\tabref{4PCConvM2}, \tabref{4PCConvM1} compare our sharing conversions with Trident~\cite{NDSS:ChaRacSur20}. 
The cost for the 2GC variant of Trident is computed by incorporating a parallel execution, where $P_3$ is additionally made an evaluator together with $P_0$.
For uniformity, we consider a function, {\sf F}, to be computed on an $\ell$-bit inputs $\vl{x}, \vl{y}$ using a garbled circuit (GC) in the mixed framework, which gives an $\ell$-bit output $\vl{z} = \mathsf{F(\vl{x}, \vl{y})}$, where $\ell$ denotes the ring size in bits. Let  $\Grb{F}$  denote the corresponding GC. In the table, $\Grb{S2}$ denotes a 2-input garbled subtraction circuit; $\GrbD{}$ denotes the garbled circuit with decoding information; $\Grb{n_1\times1,\ldots,n_m \times m}$ denotes ${\sf n_i}$ instances of GC $\Grb{i}$ for $i \in \{1,\dots,{\sf m}\}$ and $\Size{\Grb{n_1\times1,\ldots,n_m \times m}}$ denotes the collective size.

\begin{table}[htb!]
	\centering
	\resizebox{0.48\textwidth}{!}{
		\begin{NiceTabular}{r|r|r|r|r}
			\toprule 
			Protocol\tabularnote{A: arithmetic, B: boolean, G: garbled} 
			& Reference 
			& \Block[r]{}{Communication\tabularnote{Notations: $\ell$ - size of ring in bits, $\kappa$ - computational security parameter.} \\ (Preprocessing)}
			& \Block[r]{}{Rounds \\ (Online)}
			& \Block[r]{}{Communication \\ (Online)} \\
			\midrule
			\Block[r]{2-1}{A-G-A}
			& Trident          & \Block[r]{2-1}{$2\Size{\GrbD{2 \times S2,F}} + 6\ell \kappa + \ell$}
			& $2$ & $4\ell \kappa + 2\ell$\\ 
			& \this & 
			& $1$ & $4\ell \kappa + \ell$\\ 
			\midrule
			\Block[r]{2-1}{A-G-B}
			& Trident          & \Block[r]{2-1}{$2\Size{\Grb{S2,F}} + 6\ell \kappa + \ell$}
			& $2$ & $4\ell \kappa + 2\ell$\\
			& \this & 
			& $1$ & $4\ell \kappa + \ell$\\
			\midrule
			\Block[r]{2-1}{B-G-A}
			& Trident          & \Block[r]{2-1}{$2\Size{\GrbD{S2,F}} + 6\ell\kappa + \ell$}
			& $2$ & $4\ell \kappa + 2\ell$\\
			& \this  & 
			& $1$ & $4\ell \kappa + \ell$\\
			\midrule
			\Block[r]{2-1}{B-G-B}
			& Trident          & \Block[r]{2-1}{$2\Size{\Grb{F}} + 6\ell\kappa + \ell$}
			& $2$ & $4\ell \kappa + 2\ell$\\ 
			& \this & 
			& $1$ & $4\ell \kappa + \ell$\\ 
			\bottomrule
		\end{NiceTabular}
	}
	\caption{\small Conversions (2GC variant):  Trident~\cite{NDSS:ChaRacSur20} and $\this$.}\label{tab:4PCConvM2}
\end{table}

\begin{table}[htb!]
	\centering
	\resizebox{0.48\textwidth}{!}{
		\begin{NiceTabular}{r|r|r|r|r}
			\toprule 
			Protocol\tabularnote{A: arithmetic, B: boolean, G: garbled} 
			& Reference 
			& \Block[r]{}{Communication\tabularnote{Notations: $\ell$ - size of ring in bits, $\kappa$ - computational security parameter.} \\ (Preprocessing)}
			& \Block[r]{}{Rounds \\ (Online)}
			& \Block[r]{}{Communication \\ (Online)} \\
			\midrule
			\Block[r]{2-1}{A-G-A}
			& Trident          & \Block[r]{2-1}{$\Size{\GrbD{2 \times S2,F}} + 3\ell \kappa + \ell$}
			& $2$ & $2\ell \kappa + 3\ell$\\ 
			& \this & 
			& $2$ & $2\ell \kappa + 2\ell$\\ 
			\midrule
			\Block[r]{2-1}{A-G-B}
			& Trident          & \Block[r]{2-1}{$\Size{\Grb{S2,F}} + 3\ell \kappa + \ell$}
			& $2$ & $2\ell \kappa + 3\ell$\\
			& \this & 
			& $2$ & $2\ell \kappa + 2\ell$\\
			\midrule
			\Block[r]{2-1}{B-G-A}
			& Trident          & \Block[r]{2-1}{$\Size{\GrbD{S2,F}} + 3\ell\kappa + \ell$}
			& $2$ & $2\ell \kappa + 3\ell$\\
			& \this  & 
			& $2$ & $2\ell \kappa + 2\ell$\\
			\midrule
			\Block[r]{2-1}{B-G-B}
			& Trident          & \Block[r]{2-1}{$\Size{\Grb{F}} + 3\ell\kappa + \ell$}
			& $2$ & $2\ell \kappa + 3\ell$\\ 
			& \this & 
			& $2$ & $2\ell \kappa + 2\ell$\\ 
			\bottomrule
		\end{NiceTabular}
	}
	\caption{\small Conversions (1GC variant):  Trident~\cite{NDSS:ChaRacSur20} and $\this$.}\label{tab:4PCConvM1}
	\vspace{-4mm}
\end{table}

\newpage

\section{ML Algorithms}
\label{app:implementation}

\paragraph{Training and Inference of NN}
An NN can be divided into various layers, where each layer contains a predefined number of nodes. These nodes are a linear function composed of a non-linear "activation" function. 
The nodes at the input layer are evaluated on the input features to evaluate a neural network. The outputs from these nodes are fed as inputs to the nodes in the next layer. This process is repeated for all the layers to obtain the output. The underlying operation involved is a computation of activation matrices for all the layers. This constitutes the forward propagation phase. 
The backward propagation involves adjusting model parameters according to the difference in the computed output and the actual output and comprises computing error matrices. 

Concretely, each layer comprises matrix multiplications followed by an application of the ReLU function. The maxpool layer additionally follows convolutional layers after the ReLU layer. After evaluating the layers in a sequential manner, at the output layer, we use the MPC friendly variant of the softmax activation function, $\sftmx(u_i) = \frac{\relu(u_i)}{\sum_{j = 1}^{\vl{n}} \relu(u_j)}$, proposed by SecureML~\cite{SP:MohZha17}. To perform the division, we switch from arithmetic to garbled world and then use a division garbled circuit~\cite{FCW:PulSii15} followed by a switch back to the arithmetic world. 
For training, we use Gradient Descent, where the forward propagation comprises computing activation matrices for all the layers in the network.  The backward propagation comprises computing error matrices involving matrix multiplications with derivative of maxpool and derivative of $\relu$, depending on the network architecture. We refer readers to~\cite{SP:MohZha17,CCS:MohRin18,NDSS:PatSur20,NDSS:ChaRacSur20,PoPETS:WTBKMR21} for formal details.

\paragraph{Inference of SVM}
SVM is a function which takes as input an $n$-dimensional {\em feature vector}, $\vct{x}$, and outputs the {\em category} to which the feature vector belongs. SVM is implemented as a matrix $\Mat{F}$, of dimension $q \times n$ where each row of $\Mat{F}$ is called the support vector and a vector $\vct{b} = (b_1, \ldots, b_q)$, is called the {\em bias}. Each element of $\Mat{F}$ and $\vct{b}$ lies in $\Z{\ell}$. Each support vector along with a scalar from the bias can classify the input $\vct{x}$ into a specific category. More precisely, let $\Mat{F}_i$ denote the $i^{\text{th}}$ row of matrix $\Mat{F}$. Then, the value $\Mat{F}_i \cdot \vct{x} + b_i$ specifies how likely $\vct{x}$ is to be in category $i$. To find the most likely category, we compute argmax over these values, i.e. $\text{category}(\vct{x}) = \text{argmax}_{i \in \{1, \ldots, q\}} \Mat{F}_i \cdot \vct{x} + b_i$. We refer the readers to~\cite{SP:DEFKSV19} for more details. 

\section{Security proofs}
\label{app:security}
Without loss of generality, we prove the security of our robust framework. The case for fairness follows similarly, and we omit its details. We provide proofs in the $\FSETUP, \Func[\jsend]$-hybrid model, where $\FSETUP$~(\boxref{fig:FSETUP}), $\Func[\jsend]$~(\boxref{fig:JsendFunc}) denote the ideal functionality for the shared-key setup and $\jsend$, respectively. 

The strategy for simulating the computation of function $f$ (represented by a circuit $\Ckt$) is as follows: Simulation begins with the simulator emulating the shared-key setup~($\FSETUP$) functionality and giving the respective keys to the adversary. This is followed by the input sharing phase in which $\Sim$ computes the input of $\Adv$, using the known keys, and sets the inputs of the honest parties, to be used in the simulation, to $0$. $\Sim$ invokes the ideal functionality $\Func[Robust]$ on behalf of $\Adv$ using the extracted input and obtains the output $\vl{y}$. $\Sim$ now knows the inputs of $\Adv$ and can compute all the intermediate values for each of the building blocks. $\Sim$ proceeds with simulating each of the building blocks in the topological order.

For modularity, we provide the simulation steps for each building block (arithmetic/garbled) separately. Carrying out these blocks in the topological order yields the simulation for the entire computation. If a $\TTP$ is identified during the simulation, the simulator stops and returns the function output to the adversary on behalf of the $\TTP$ as per $\Func[\jsend]$.

\paragraph{Ideal $\jsend$ Functionality}
The ideal $\jsend$ functionality for fairness security appears in \boxref{fig:JsendFFunc} and that for the robust setting appears in \boxref{fig:JsendFunc}.

\vspace{-1mm}
\begin{systembox}{$\Func[\jsend]$~(for fair security)}{Ideal functionality for $\jsend$ in $\this$}{fig:JsendFFunc}
	\justify
	$\Func[\jsend]$ interacts with the parties in $\Partyset$ and the adversary $\Sim$. 
	\begin{myitemize}
		\item[\bf Step 1:] $\Func[\jsend]$ receives $(\INPUT,\vl{v}_s)$ from senders $P_s$ for $s \in \{i,j\}$, $(\INPUT,\bot)$ from receiver $P_k$ and fourth party $P_l$. While sending the inputs, the adversary is also allowed to send a special $\abort$ command.
		\item[\bf Step 2:] Set $\msg_i = \msg_j = \msg_l = \bot$.
		\item[\bf Step 3:] If $\vl{v}_i = \vl{v}_j$, set $\msg_k = \vl{v}_i$. Else, set $\msg_k = \abort$.
		\item[\bf Step 4:] Send $(\OUTPUT, \msg_s)$ to $P_s$ for $s \in \{0,1,2, 3\}$.
	\end{myitemize}
\end{systembox}

\vspace{-2mm}
\begin{systembox}{$\Func[\jsend]$~(for robust security)}{Ideal functionality for robust $\jsend$~\cite{USENIX:KPPS21}}{fig:JsendFunc}
	\justify
	$\Func[\jsend]$ interacts with the parties in $\Partyset$ and the adversary $\Sim$. 
	\begin{myitemize}
		\item[\bf Step 1:] $\Func[\jsend]$ receives $(\INPUT,\vl{v}_s)$ from senders $P_s$ for $s \in \{i,j\}$, $(\INPUT,\bot)$ from receiver $P_k$ and fourth party $P_l$, while it receives $(\SELECT,\ttp)$ from $\Sim$. Here $\ttp$ is a boolean value, with a $1$ indicating that $\TTP = P_l$ should be established. 
		\item[\bf Step 2:] If $\vl{v}_i =\vl{v}_j$ and $\ttp = 0$, or  if $\Sim$ has corrupted $P_l$\footnote{This condition is used to capture the fact that a corrupt $P_l$ cannot create an inconsistency in $\Func[\jsend]$ since the parties actively involved in $\Func[\jsend]$ would be honest}, set $\msg_i = \msg_j = \msg_l = \bot, \msg_k = \vl{v}_i$ and go to {\bf Step 4}.
		\item[\bf Step 3:] Else, set $\msg_i = \msg_j = \msg_k = \msg_l = P_l$.
		\item[\bf Step 4:] Send $(\OUTPUT, \msg_s)$ to $P_s$ for $s \in \{0,1,2, 3\}$.
	\end{myitemize}
\end{systembox}
\vspace{-2mm}

\subsection{Arithmetic/Boolean World}
\label{app:absec}
We provide the simulation for the case for corrupt $P_0, P_1$ and $P_3$. The case for corrupt $P_2$ is similar to that of $P_1$. 

\paragraph{Sharing Protocol~($\prot{\Sh}$, \boxref{fig:piSh})}
During the preprocessing, $\Sim_{\prot{\Sh}}^{P_0}$ emulates $\FSETUP$ and gives the respective keys to $\Adv$. The values commonly held with $\Adv$ are sampled using the respective keys, while others are sampled randomly. The details for the online phase are provided next. We omit the simulation for corrupt $P_3$ as it is similar to that of $P_1,P_2$.
\begin{simulatorbox}{$\Sim_{\prot{\Sh}}^{P_0}$}{Simulator $\Sim_{\prot{\Sh}}^{P_0}$ for corrupt $P_0$ }{fig:ShFSim0}
	\justify 
	\algoHead{Online:} \vspace{-2mm}
	\begin{description}
		\item[--] If dealer is $\Adv$, $\Sim_{\prot{\Sh}}^{P_0}$ receives $\mk{\vl{v}}$ from $\Adv$ on behalf of $P_1, P_2, P_3$. If the received values are consistent, $\Sim_{\prot{\Sh}}^{P_0}$ computes $\Adv$'s input $\vl{v}$ as $\vl{v} = \mk{\vl{v}} - \sqr{\pd{\vl{v}}}_1 - \sqr{\pd{\vl{v}}}_2 - \sqr{\pd{\vl{v}}}_3$, else sets $\vl{v}$ as the default value. It invokes $\Func[Robust]$ on $(\INPUT, \vl{v})$ to obtain the function output $\vl{y}$. 
		\item[--] If dealer is $P_1, P_2$ or $P_3$, there is nothing to simulate as $P_0$ doesn't receive any value during the protocol. 
	\end{description}
\end{simulatorbox}
\vspace{-3mm}

\begin{simulatorbox}{$\Sim_{\prot{\Sh}}^{P_1}$}{Simulator $\Sim_{\prot{\Sh}}^{P_1}$ for corrupt $P_1$ }{fig:ShFSim1}
	\justify 
	\algoHead{Online:} \vspace{-2mm}
	\begin{description}
		\item[--] If dealer is $\Adv$, $\Sim_{\prot{\Sh}}^{P_1}$ receives $\mk{\vl{v}}$ from $\Adv$ on behalf of $P_2, P_3$. If the received values are consistent, $\Sim_{\prot{\Sh}}^{P_1}$ computes $\Adv$'s input $\vl{v}$ as $\vl{v} = \mk{\vl{v}} - \sqr{\pd{\vl{v}}}_1 - \sqr{\pd{\vl{v}}}_2 - \sqr{\pd{\vl{v}}}_3$, else sets $\vl{v}$ as the default value. It invokes $\Func[Robust]$ on $(\INPUT, \vl{v})$ to obtain the function output $\vl{y}$. 
		\item[--] If dealer is $P_0, P_2$ or $P_3$, $\Sim_{\prot{\Sh}}^{P_1}$ sets $\vl{v} = 0$ and performs the protocol steps honestly. 
	\end{description}
\end{simulatorbox}
\vspace{-2mm}

Shares unknown to $\Adv$ are sampled randomly in the simulation, whereas in the real protocol, they are sampled using the pseudorandom function (PRF). The indistinguishability of the simulation thus follows by a reduction to the security of the PRF. The same holds for the rest of the blocks.

The simulation for the joint sharing protocol~($\prot{\JSh}$) is similar to that of the sharing protocol. The protocol's design is such that the simulator will always know the value to be sent as part of the joint sharing protocol. The communication is constituted by $\jsend$ calls and is emulated according to the simulation of $\Func[\jsend]$.

\medskip
\paragraph{Multiplication Protocol~($\piMult$)}
~
\medskip
\begin{simulatorbox}{$\Sim_{\piMult}^{P_0}$}{Simulator $\Sim_{\piMult}^{P_0}$ for corrupt $P_0$ }{fig:MulFSim0}
	\justify 
	\algoHead{Preprocessing:} \vspace{-2mm}
	\begin{description}
		\item[--]  Computes $\gm{\vl{a}\vl{b}}{1}, \gm{\vl{a}\vl{b}}{2}$, and $\gm{\vl{a}\vl{b}}{3}$ on behalf of $P_1, P_2, P_3$.
		\item[--] Samples $\vl{u}^1, \vl{u}^2$ using the respective keys with $\Adv$ and computes $\vl{r}$. The joint sharing of $\vl{q}$ is simulated as discussed earlier.
		\item[--] Receives $\vl{w}$ from $\Adv$ on behalf of $P_3$.
		\item[--] Simulating $\piVrfyP$: Joint sharing of $\vl{e_1}, \vl{e_2}, \vl{e}$ is simulated as discussed earlier. The rest of the steps are simulated honestly. This is possible since $\Sim_{\piMult}^{P_0}$ knows the randomness and inputs that should be used by $\Adv$. 
	\end{description}
	\justify 
	\vspace{-2mm}
	\algoHead{Online:}
	$P_0$ has no communication in the online phase except the $\jsend$ instances which are emulated by $\Sim_{\piMult}^{P_0}$.
\end{simulatorbox}
 \vspace{-2mm}

\begin{simulatorbox}{$\Sim_{\piMult}^{P_1}$}{Simulator $\Sim_{\piMult}^{P_1}$ for corrupt $P_1$ }{fig:MulFSim1}
	\justify 
	\algoHead{Preprocessing:}  \vspace{-2mm}
	\begin{description}
		\item[--]  Computes $\gm{\vl{a}\vl{b}}{1}, \gm{\vl{a}\vl{b}}{2}$, and $\gm{\vl{a}\vl{b}}{3}$ on behalf of $P_0, P_2, P_3$.
		\item[--] Samples $\vl{u}^1$ using the respective keys with $\Adv$. Samples a random $\vl{u}^2$ and computes $\vl{r}$. The joint sharing of $\vl{q}$ is simulated as discussed earlier.
		%
		\item[--] Simulate the steps of $\piVrfyP$ honestly. 
	\end{description}
	\justify 
	\vspace{-2mm}
	\algoHead{Online:}  \vspace{-2mm}
	\begin{description}
		\item[--] Computes $\vl{y}_1 + \vl{s}_1, \vl{y}_2 + \vl{s}_2, \vl{y}_3$ honestly. 
		\item[--] Emulates two instances of $\Func[\jsend]$ -- i) $\Adv$ as sender to send $\vl{y}_1 + \vl{s}_1$ to $P_2$, and ii) $\Adv$ as receiver to obtain $\vl{y}_2 + \vl{s}_2$ from $P_2$. 
		\item[--] Simulates joint sharing as discussed earlier.
	\end{description}
\end{simulatorbox}
 \vspace{-2mm}

\begin{simulatorbox}{$\Sim_{\piMult}^{P_3}$}{Simulator $\Sim_{\piMult}^{P_3}$ for corrupt $P_3$ }{fig:MulFSim3}
	\justify 
	\algoHead{Preprocessing:} \vspace{-2mm}
	\begin{description}
		\item[--]  Computes $\gm{\vl{a}\vl{b}}{1}, \gm{\vl{a}\vl{b}}{2}$, and $\gm{\vl{a}\vl{b}}{3}$ on behalf of $P_0, P_1, P_2$.
		\item[--] Samples $\vl{u}^1, \vl{u}^2$ using the respective keys with $\Adv$ and computes $\vl{r}$. The joint sharing of $\vl{q}$ is simulated as discussed earlier.
		\item[--] Honestly computes and sends $\vl{w}$ to $\Adv$.
		\item[--] Simulate the steps of $\piVrfyP$ honestly. 
	\end{description}
	\justify 
	\vspace{-2mm}
	\algoHead{Online:}  \vspace{-2mm}
	\begin{description}
		\item[--] Computes $\vl{y}_1 + \vl{s}_1, \vl{y}_2 + \vl{s}_2, \vl{y}_3$ honestly. 
		\item[--] Emulates two instances of $\Func[\jsend]$ with $\Adv$ as sender to exchange $\vl{y}_1 + \vl{s}_1, \vl{y}_2 + \vl{s}_2$ among $P_1, P_2$. 
		\item[--] Simulates joint sharing as discussed earlier.
	\end{description}
\end{simulatorbox}
 \vspace{-2mm}

\paragraph{Reconstruction Protocol~($\prot{\Rec}$, \boxref{fig:piRec})}
Using the input of $\Adv$ obtained during simulation of sharing protocol, $\Sim_{\prot{\Rec}}$ invokes $\Func[Robust]$ on behalf of $\Adv$ and obtains the function output $\vl{y}$ in clear. $\Sim_{\prot{\Rec}}$ calculates the missing share of $\Adv$ using $\vl{y}$ and the other shares. The missing share is then communicated to $\Adv$ by emulating the $\Func[\jsend]$ functionality.

\subsection{Security Proof for Garbled World}
\label{app:2gcsec}
In this section, we present the proof of security for our robust GC protocol with 2GCs. The case for 1 GC is similar, and we omit the details. For completeness, we provide the simulation assuming function evaluation entirely through the GC. However, as in the previous section, simulation steps are provided for the different phases separately. Thus, the simulation for the appropriate phase can be used while simulating the entire protocol in the mixed framework.  

The simulation begins with the simulator emulating the shared-key setup~($\FSETUP$) functionality and giving the respective keys to the adversary. This is followed by the input sharing phase in which $\Sim$ computes the input of $\Adv$, using the known keys, and sets the inputs of the honest parties, to be used in the simulation, to $0$. $\Sim$ invokes the ideal functionality $\Func[Robust]$ on behalf of $\Adv$ using the extracted input and obtains the output $\vl{y}$.
$\Sim$ proceeds with simulating the GC computation phase using the output $\vl{y}$ by invoking the privacy simulator for the GC. The reconstruction phase follows this. 
We provide the simulation steps in the following order: 
\begin{description}
	\item[--] Generation of boolean shares for the input.
	\item[--] Transfer of keys and GC to the evaluator.
	\item[--] Output computation.
\end{description}
We give the proof with respect to a corrupt $P_0$ and a corrupt $P_1$. Proofs for corrupt $P_3$ and corrupt $P_2$ follow similar to proof for corrupt $P_0$ and $P_1$, respectively.

\paragraph{Generation of boolean shares for the input}
This simulation proceeds as per the simulation of the boolean world mentioned in \S\ref{app:absec}.

\paragraph{Key, GC transfer and  evaluation}
The simulation for $\pigsh$ coupled with the GC transfer for a corrupt $P_1$ and corrupt $P_0$ are provided here. Cases for corrupt $P_2, P_3$ follow.

\medskip
\begin{simulatorbox}{$\Sim_{\Ev}^{P_0}$}{Simulator $\Sim_{\Ev}^{P_0}$ for corrupt $P_0$ }{fig:gshFSim0}
	\justify 
	\begin{description}
		\item[--] With respect to the $j^{\text{th}}$ garbling instance for $j \in \{1, 2\}$, $\Sim_{\Ev}^{P_0}$ generates the keys $\{\key{{\mk{\vl{x}}}}{\bitb, j}, \key{{\av{\vl{x}}}}{\bitb, j}, \key{ {\lambda_{\vl{x}}^{3}} }{\bitb, j}\}_{\bitb \in \{0, 1\}}$ for each function input $\vl{x}$ and the GC as per the honest execution.
		\item[--] Sends the keys for $\key{{\mk{\vl{x}}}}{\mk{\vl{x}}, j}, \key{{\av{\vl{x}}}}{\av{\vl{x}}, j}$ and $\GC_j$ to $P_j$ for $j \in \{1, 2\}$ by emulating $\Func[\jsend]$ with $\Adv$ as the sender.
	\end{description}
\end{simulatorbox}

\begin{simulatorbox}{$\Sim_{\Ev}^{P_1}$}{Simulator $\Sim_{\Ev}^{P_1}$ for corrupt $P_1$ }{fig:gshFSim1}
	\justify 
	\begin{description}
		\item[--] With respect to the first garbling instance, $\Sim_{\Ev}^{P_1}$ runs $(\GC_1, \textbf{X}_1, d_1) \leftarrow \Sim_{\priv}(\onesec, \Ckt, \vl{y})$ where $\vl{y}$ is obtained via invoking $\Func[Robust]$ on $\Adv$'s input. 
		With respect to the second garbling instance, $\Sim_{\Ev}^{P_1}$ generates the keys $\{\key{{\mk{\vl{x}}}}{\bitb, 2}, \key{{\av{\vl{x}}}}{\bitb, 2}, \key{{\lambda_{\vl{x}}^{3}}}{\bitb, 2}\}_{\bitb \in \{0, 1\}}$ for each function input $\vl{x}$ and $\GC_2$ as per the honest execution.
		\item[--] $\Sim_{\Ev}^{P_1}$ sends the keys for each input $\vl{v}$ to the GC, and $\GC_1$ by emulating $\Func[\jsend]$ with $\Adv$ as the receiver.
		\item[--] $\Sim_{\Ev}^{P_1}$ emulates $\Func[\jsend]$ together with $\Adv$ as the sender to send $\key{{\mk{\vl{x}}}}{\mk{\vl{x}}, 2}, \key{{\lambda_{\vl{x}}^{3}}}{\lambda_{\vl{x}}^{3}, 2}$ to $P_2$. 
	\end{description}
\end{simulatorbox}

\paragraph{Output computation}

\begin{simulatorbox}{$\Sim_{\Rec}^{P_0}$}{Simulator $\Sim_{\Rec}^{P_0}$ for corrupt $P_0$}{fig:srec0}
	\medskip
	\justify
	\begin{description}
		\item[--] Let $\lsb(\vl{v})$ denote the least significant bit of $\vl{v}$.
		\item[--] $\Sim_{\Rec}^{P_0}$ sends $\vl{q}^J = \vl{y} \xor \lsb(\key{{\vl{y}}}{0, j})$ and $\h^j = \Hash(\key{}{})$ to $\Adv$ on behalf of honest $P_j \in \SetE$ such that $\key{}{} \in \{\key{\vl{y}}{0, j}, \key{\vl{y}}{1, j}\}$ and $\vl{q}^j = \lsb({\key{}{}})$, where $\vl{y}$ is obtained via invoking $\Func[Robust]$.
	\end{description}
\end{simulatorbox}

\begin{simulatorbox}{$\Sim_{\Rec}^{P_1}$}{Simulator $\Sim_{\Rec}^{P_1}$ for corrupt $P_1$}{fig:srec1}
	\medskip
	\justify
	\begin{description}
		\item[--] Let $\lsb(\vl{v})$ denote the least significant bit of $\vl{v}$.
		\item[--] $\Sim_{\Rec}^{P_1}$ sends $\vl{p}^1 = \lsb(\key{{\vl{y}}}{0, 1})$ to $\Adv$ on behalf of honest garblers in $\PlSet{1}$ where $\vl{y}$ is obtained via invoking $\Func[Robust]$.
	\end{description}
\end{simulatorbox}

\paragraph{Indistinguishability argument}
We argue that $\Ideal_{\calF, \Sim_{\mathrm{\Pi}}} \overset{c}{\approx} \Real_{\mathrm{\Pi}, \Adv}$ when $\Adv$ corrupts $P_1$ based on the following series of intermediate hybrids.

$\Hyb_0$: Same as $ \Real_{\mathrm{\Pi}, \Adv}$.

$\Hyb_1$: Same as $\Hyb_0$, except that $P_0$, $P_2, P_3$ use uniform randomness instead of pseudo-randomness to sample values not known to $P_1$.

$\Hyb_2$: Same as $\Hyb_1$ except that $\GC_1$ is created as $(\GC_1, \textbf{X}_1, d_1) \leftarrow \Sim_\prv(1^\kappa, \Ckt, \vl{y})$.

Since $\Hyb_2 := \Ideal_{\calF, \Sim_{\mathrm{\Pi}}}$, to conclude the proof we show that every two consecutive hybrids are indistinguishable.	

$\Hyb_0 \overset{c}{\approx} \Hyb_1$ :  The difference between the hybrids is that $P_0, P_2, P_3$ use uniform randomness in $\Hyb_1$ rather than pseudo-randomness as in $\Hyb_0$ (for sampling $\sqr{\av{}}_2$). The indistinguishability follows via reduction to the security of the PRF.

$\Hyb_1 \stackrel{c}{\approx} \Hyb_2$: The difference between the hybrids is in the way $(\GC_1, \textbf{X}_1, d_1)$ is generated. In $\Hyb_1$, $(\GC_1, e_1, d_1) \leftarrow \Gb(1^\kappa, \Ckt)$ is run. In $\Hyb_2$, it is generated as $(\GC_1, \textbf{X}_1, d_1) \leftarrow \Sim_\prv(1^\kappa, \Ckt, \vl{y})$. Indistinguishability follows via reduction to the privacy of the garbling scheme. 


We argue that $\Ideal_{\calF, \Sim_{\mathrm{\Pi}}} \overset{c}{\approx} \Real_{\mathrm{\Pi}, \Adv}$ when $\Adv$ corrupts $P_0$ based on the following series of intermediate hybrids.

$\Hyb_0$: Same as $ \Real_{\mathrm{\Pi}, \Adv}$.

$\Hyb_1$: Same as $\Hyb_0$, except that $P_1$, $P_2, P_3$ use uniform randomness instead of pseudo-randomness to sample values not known to $P_0$.

$\Hyb_2$: Same as $\Hyb_1$ except that hash of the key $\key{}{}$ where $\key{}{} \in \{\key{\vl{y}}{0, j}, \key{\vl{y}}{1, j}\}$ to be sent to $\Adv$ is computed such that $\lsb(\key{}{}) \xor \lsb(\key{{\vl{y}}}{0, j}) = \vl{y}$,  for $j \in \{1,2\}$ instead of obtaining it as output of GC evaluation.

Since $\Hyb_2 := \Ideal_{\calF, \Sim_{\mathrm{\Pi}}}$, to conclude the proof we show that every two consecutive hybrids are indistinguishable.	

$\Hyb_0 \overset{c}{\approx} \Hyb_1$ :  The difference between the hybrids is that $P_1, P_2, P_3$ use uniform randomness in $\Hyb_1$ rather than pseudo-randomness as in $\Hyb_0$ (for sampling $\pd{}{3}$). The indistinguishability follows via reduction to the security of the PRF.

$\Hyb_1 \stackrel{c}{\approx} \Hyb_2$: The difference between the hybrids is that in $\Hyb_1$, key $\key{}{}$ where $\key{}{} \in \{\key{\vl{y}}{0, j}, \key{\vl{y}}{1, j}\}$ for $j \in \{1,2\}$ is computed as output of the GC evaluation while in $\Hyb_2$, it is computed such that $\lsb(\key{}{}) \xor \lsb(\key{{\vl{y}}}{0, j}) = \vl{y}$. Due to the correctness of the garbling scheme, the equivalence of $\key{}{}$ computed in both the hybrids holds.

\end{document}